\definecolor{mygray}{gray}{0.3}
\newcommand\supp{\mathrm{supp}}
\newcommand\beq{\begin{equation}}
\newcommand\eeq{\end{equation}}
\newcommand\bit{\begin{itemize}}
\newcommand\eit{\end{itemize}}
\newcommand\bea{\begin{eqnarray}}
\newcommand\eea{\end{eqnarray}}
\newcommand\beas{\begin{eqnarray*}}
\newcommand\eeas{\end{eqnarray*}}
\newcommand\beqa{\begin{eqnarray}}
\newcommand\eeqa{\end{eqnarray}}
\newcommand\pfof[1]{{\bf Proof of #1:  }}
\newcommand\eop{\hfill $\square$}
\newtheorem{definition}{Definition}
\newtheorem{theorem}{Theorem}
\newtheorem{lemma}{Lemma}
\newtheorem{proposition}{Proposition}
\newtheorem{corollary}{Corollary}
\theoremstyle{remark}
\newcommand\vmax{\overline{v}}
\newcommand\vmin{\underline{v}}
\begin{document}

\title{An Optimal Distributionally Robust Auction\footnote{This paper is based on Chapter I of my PhD thesis at Stanford University. I would like to thank Dan Iancu, Gabriel Carroll, Andy Skrzypacz, Ilya Segal, Bob Wilson, Dmitry Arkhangelsky, Pavel Krivenko, Alex Bloedel and audience members at 2020 Conference on Mechanism and Institution Design for helpful comments and encouragement.}} 
\author{Alex Suzdaltsev\footnote{Higher School of Economics, Saint Petersburg, Russia, \texttt{asuzdaltsev@gmail.com}
}}
\date{\today}

\maketitle

\begin{abstract}
An indivisible object may be sold to one of $n$ agents who know their valuations of the object. The seller would like to use a revenue-maximizing mechanism but her knowledge of the valuations' distribution is scarce: she knows only the means (which may be different) and an upper bound for valuations. Valuations may be correlated. 

Using a constructive approach based on duality, we prove that a mechanism that maximizes the worst-case expected revenue among all deterministic dominant-strategy incentive compatible, ex post individually rational mechanisms is such that the object  should be awarded to the agent with the highest \emph{linear score} provided it is nonnegative. Linear scores are bidder-specific linear functions of bids. The set of optimal mechanisms includes other mechanisms but all those have to be close to the optimal linear score auction in a certain sense. When means are high, all optimal mechanisms share the linearity property. Second-price auction without a reserve is an optimal mechanism when the number of symmetric bidders is sufficiently high.  

\end{abstract}
Keywords: Robust Mechanism Design, Worst-case objective, Auctions, Moments problems, Duality

\newpage
\normalsize
\section{Introduction}
One of the most basic problems in mechanism design is allocation of an item among $n$ buyers by a revenue-maximizing seller. The classical solutions due to \cite{myerson1981optimal} and \cite{cremer1988full} have been obtained under the assumption of expected revenue maximization, where the expectation is taken with respect to a fixed distribution of bidders' values. This distribution is assumed either to (1) be objectively known by the seller or (2) represent her subjective beliefs. 

While this assumption has led to important insights and does constitute a reasonable approximation to reality in certain cases, it may be less plausible in other situations. For example, a new good may be for sale, or a seller might not be a subjective expected utility maximizer. Even if the good is not new, there may exist purely statistical problems with non-parametric estimation of density functions, especially multiple-dimensional. On the other hand, even if a seller acts as a subjective expected utility maximizer in day-to-day ``intuitive'' decision problems that do not require explicit numerical articulation of beliefs, implementing a classical optimal auction requires first writing down a mathematical formula expressing the beliefs, and this may be infeasible.
 
Motivated by the above observations, in this paper we treat the distribution as unknown and consider a problem of finding an optimal distributionally robust mechanism, i.e. a mechanism that provides the best worst-case expected revenue where the worst case is over all joint distributions of values lying in a large class. (The values are private and known by the bidders.) The class of distributions consists of all distributions with a known vector of means and known bounds for support. One may think of the known vector of means and bounds for support as the best \emph{sparse educated guess} that the seller has about the joint distribution of values, incorporating her beliefs or knowledge about the possible asymmetry of the buyers.  %

We restrict attention to \emph{deterministic ex post} mechanisms, with \emph{ex post} meaning dominant-strategy incentive compatible and ex post individually rational, a term used by \cite{segal2003optimal}. The motivation for this property is standard: one would like, in line with Wilson doctrine \citep{wilson1987game}, to make a mechanism robust to misspecification of bidders' beliefs\footnote{\cite{chung2007foundations} give a maxmin foundation to dominant-strategy incentive compatibility under the assumption of known regular value distribution. We do not know whether an analogous foundation holds in our setting.}. This is even more important in our setting where the seller is concerned about the misspecification of her own prior and thus does not have a natural guess for the bidders' common prior, if one exists. The restriction to deterministic mechanisms may be motivated by the fact that, as noted by \cite{bergemann2016informationally}, randomization may be difficult for a seller to credibly commit to. Also, randomized mechanisms may be quite complex. We note, however, that this restriction is with loss of revenue, as randomization typically yields improvement in maxmin optimization problems. %
Despite the restrictions, the set of mechanisms we consider is still quite large: for example, instead of having an include-all auction, one may first conduct an auction among buyers 1 and 2, and if the object fails to be allocated, then approach buyer 4 with a price that depends on values reported in the first auction, and if the object fails to be allocated still, approach buyers 3 and 5 with an auction designed depending on all previous reports, and so on\footnote{Because we assume that the values are known by the bidders, there are no informational spillovers in the sequential mechanism, so one may model it as a one-shot game.}. 

The main result of the paper says that the considered maxmin problem admits a simple solution that we call a \emph{linear score auction} (LSA). Its defining feature is that the winner of the auction is the bidder whose \emph{linear score} is the highest, provided it is nonnegative. Linear scores are bidder-specific linear\footnote{Here, we use the term ``linear'' in the calculus sense of the word, i.e. the functions are, in fact,
possibly affine.} functions of bids\footnote{Our score auction has nothing to do with \emph{scoring auctions} commonly used in procurement (see, e.g., \cite{che1993design}). In procurement scoring auctions, a score combines information about a bidder's costs and the quality of her good. Here, a score combines information about a bidder's valuation and the prior information on that valuation's distribution.}. Transfers are pinned down by incentive compatibility in the standard way. Note that this mechanism can be regarded as a linear version of the Myersonian optimal auction: indeed, that auction under asymmetric value distributions is effectively a (generically) nonlinear score auction in which the scores equal to bidders' ironed virtual values. The difference between Myersonian optimal auction and linear score auction is illustrated in figure \ref{fig:linvsnonlin}. Also note that when all score functions are identical, the linear score auction reduces to the usual second-price auction with a reserve price. Thus, the distributional robustness of this common auction format be an additional rationale for using it in practice.
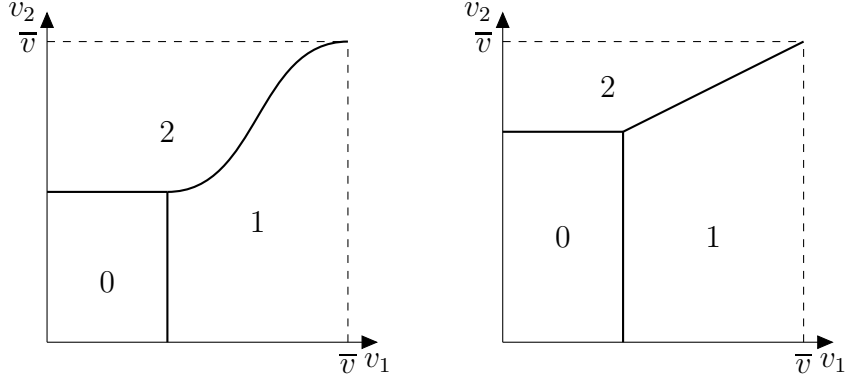
\begin{figure}[h!]
\centering
\begin{tikzpicture}[>=triangle 45, xscale=4,yscale=4]
\draw[->] (0,0) -- (1.1,0) node[below] {$v_1$};
\draw[->] (0,0) -- (0,1.1) node[left] {$v_2$};
\draw[dashed] (1,0)--(1,1);
\draw[dashed] (0,1)--(1,1);
\node[below] at (1,0){$\vmax$};
\node[left] at (0,1){$\vmax$};

\draw[thick] (0.4,0)--(0.4,0.5);
\draw[thick] (0,0.5)--(0.4,0.5);
\draw[thick] (0.4,0.5) to [out=0,in=180] (1,1);
\node at (0.7,0.4){1};
\node at (0.4,0.7){2};
\node at (0.2,0.2){0};

\end{tikzpicture}
\hspace{1em}
\begin{tikzpicture}[>=triangle 45, xscale=4,yscale=4]
\draw[->] (0,0) -- (1.1,0) node[below] {$v_1$};
\draw[->] (0,0) -- (0,1.1) node[left] {$v_2$};
\draw[dashed] (1,0)--(1,1);
\draw[dashed] (0,1)--(1,1);
\node[below] at (1,0){$\vmax$};
\node[left] at (0,1){$\vmax$};

\draw[thick] (0.4,0)--(0.4,0.7);
\draw[thick] (0,0.7)--(0.4,0.7);
\draw[thick] (0.4,0.7)--(1,1);
\node at (0.7,0.35){1};
\node at (0.35,0.85){2};
\node at (0.2,0.35){0};

\end{tikzpicture}

\caption{In a Myersonian optimal auction with asymmetric value distributions (left), the boundary determining which bidder gets the object respects the intricacies of bidders' value distributions. In a linear score auction (right), the boundary is linear
. Digits 1 and 2 denote areas of value space where the corresponding bidders get the object. Zero denotes areas where the object is kept by the seller.}\label{fig:linvsnonlin}
\end{figure}

The proof of the result uses strong linear programming duality. To get a duality-free intuition for why the result is true, consider the following simple example. Suppose there are two bidders with valuations lying in $[0,1]$ and expected valuations of $m_1$ and $m_2$ satisfying $m_1+m_2\geq 1$. Consider the simple second-price auction without reserve, which is one feasible linear score auction. Consider any distribution $F$ satisfying the above constraints. Under any profile of values $(v_1,v_2)\in \supp (F)$ with $v_1> v_2$ the first bidder wins, and the seller gets $v_2$. Note, however, that if Nature transfers probability mass from $(v_1,v_2)$ to $(1,v_2)$ the seller's revenue does not change, but $\mathbb{E}(v_1)$ increases. The same is true for profiles with $v_2>v_1$. Thus, for any distribution $F$ there exists a distribution $\hat{F}$ satisfying $\mathbb{E}_{\hat{F}}\max\{v_1,v_2\}=1$, $\mathbb{E}_{\hat{F}}v_i\geq m_i$,
  such that the seller's revenue is the same. But then, after moving from $F$ to $\hat{F}$, Nature can redistribute mass within the set $\{v:\max\{v_1,v_2\}=1\}$ so that the means are lowered back to $m_i$ and the seller's revenue decreases. (This is possible exactly when $m_1+m_2\geq 1$.) This shows that a seller caring about the worst-case can restrict attention to distributions $F$ satisfying $\mathbb{E}_{F}\max\{v_1,v_2\}=1$. Call such distributions ``potentially worst-case distributions''. The above argument is valid for any feasible mechanism that always allocates the good. But then, because of the identity
\[\mathbb{E}_F(\min\{v_1,v_2\})+\mathbb{E}_F(\max\{v_1,v_2\})\equiv m_1+m_2,\] 
the expected revenue of the SPA under a potentially worst-case distribution is equal to $\mathbb{E}_F(\min\{v_1,v_2\})=m_1+m_2-1$. That is, when seller uses the SPA (and not some mechanism with nonlinearities that always allocates the good), her expected revenue depends on a (potentially worst-case) distribution $F$ only through the known means $m_i$. This is a clear sign of robustness\footnote{The analysis in this example does not prove that the SPA is optimal -- when means are different, it is not, even among the mechanisms that always allocate the good. It merely shows why mechanisms of a specific form, of which SPA is an example, yield robustness and thus are good candidates for an optimal mechanism.}.

The actual proof starts with an observation that the worst-case expected revenue is equal to the \emph{convex closure} of a mechanism's transfer function, evaluated at the point $m$ where $m$ is the vector of means. Using an analytic representation of convex closures afforded by a classic strong duality result, we then show that for any feasible mechanism there exists a linear score auction that does no worse. This construction is the heart of the paper. While one can ascertain the validity of the construction visually in the case of two bidders, the proof is trickier in higher dimensions. The key to the construction is that the vector of (generalized) ``reserve prices'' for the desired revenue-improving linear score auction comes from a fixed point of a certain piecewise-linear map derived from the original mechanism. The existence of such a fixed point is guaranteed by Brouwer theorem. However, not any fixed point can yield the desired construction. Thus, analyzing the structure of the set of fixed points constitutes a major part of analysis.

Solving a finite-dimensional optimization problem, we then characterize  the set of optimal parameters for the LSA. Qualitatively, the optimal auction is similar to the Myersonian solution: the optimal auction discriminates against stronger bidders (those with larger means) to reduce their information rents and intensify competition among all bidders. This is manifested in the fact that for a fixed bid $b$, the score of a stronger bidder is lower than the score for a weaker bidder. However, there are new features. First, there may be multiple optimal vectors of parameters. This a common scenario in maxmin optimization problems. Second, when bidders are symmetric, the set of optimal vectors of reserve prices depends on $n$: it decreases in $n$ in strong set order and includes zero for all sufficiently high $n$. This contrasts with the Myersonian solution where the reserve price does not depend on $n$ with symmetric bidders. This may be interpreted as evidence that competition is a more powerful device to protect against low-revenue distributions than a reserve price, so the latter is no longer needed when the number of bidders is sufficiently large. Finally, when bidders are sufficiently \emph{a}symmetric, a new phenomenon of \emph{weak exclusion} arises: it may be weakly optimal to exclude a (low-mean) bidder from an auction completely. This may never happen in the classical model when value distributions have overlapping supports. This is because the weak bidder's value is expected to be so low that including the bidder does not change the worst-case expected revenue. Interestingly, weak exclusion may happen only if $n\geq 3$ and only if the means of other bidders' valuations are sufficiently high. %

The main result only implies that \emph{some} optimal mechanism takes the form of linear score auction. However, as alluded to above, in general many mechanisms can share the same worst-case expected revenue. In section \ref{set},  we characterize the whole set of optimal mechanisms for the case of two bidders. We show that a mechanism is optimal if and only if it is ``sufficiently close'' to the optimal linear score auction in a certain sense. When means are sufficiently high, the ``safe neighborhood'' of the optimal LSA collapses in a way that \emph{any} optimal mechanism must coincide with the LSA in an area where the values are relatively high. 

\subsection{Relation to literature}
This paper contributes to the growing literature on robust mechanism design. The closest contributions to ours are \cite{carrasco2018optimal}, \cite{koccyiugit2017distributionally}, \cite{neeman2003effectiveness}, \cite{che2019robust} and \cite{suzdal2020distributionally}. \cite{carrasco2018optimal} study the problem of selling the good to a single agent by seller who maximizes worst-case expected revenue while knowing the first $N$ moments of distribution. They characterize the optimal randomized mechanism, and also the optimal deterministic posted price for the case when a mean and upper bound is known, as in our paper. Their approach also effectively uses duality. \cite{koccyiugit2017distributionally} study, among other settings, second-price auctions with reserve price when there are $n$ symmetric bidders with a known lower bound for (the common) $\mathbb{E}(v_i)$. They characterize the optimal reserve price but do not study the question of whether the SPA is an optimal deterministic mechanism. They also show that randomized mechanisms yield strictly more revenue in this setting. \cite{che2019robust} finds an optimal randomized reserve price in a second-price auction when the set of joint value distributions is the same as in this paper. 
\cite{neeman2003effectiveness} finds the optimal reserve price in a similar setting where the set of distributions is the same as in \cite{koccyiugit2017distributionally}, but the criterion is the worst-case ratio of expected revenue to expected full surplus, rather than expected revenue itself. Finally, \cite{suzdal2020distributionally} proves that a maxmin reserve price in a second-price auction is equal to seller's valuation when the values are iid and the mean, and an upper bound on either values or variance is known.

\cite{bose2006optimal} show that when the seller has an arbitrary set of iid priors that includes the bidders' common prior, an auction that equalizes ex post revenue is optimal. However, to implement such an auction, the seller still has to know the bidders' common prior, as the paper assumes Bayesian, rather than dominant-strategy, implementation. The paper also considers the case of maxmin optimization on the part of bidders. 
\cite{azar2013parametric} find a maxmin-optimal posted price mechanism (among posted price mechanisms only) for the case where there is an unlimited supply of the good and mean and variance of every bidder's marginal value distribution are known. \cite{bergemann2011robust} and \cite{bergemann2008pricing} consider minimax-regret pricing under various specifications of incomplete demand knowledge. 
\cite{carrasco2018robust} study mechanism design where a single agent's utility is nonlinear in allocation and all distributions with given bounds for support and mean are possible. \cite{auster2018robust} considers optimal design for a buyer in a lemons problem where all distributions of seller's costs are possible. 
\cite{he2019robustly} seek an optimal reserve price in an SPA when a (common) marginal distribution of every bidder's value is known but the correlation structure is unknown. They find that the optimal deterministic reserve price goes to zero when the number of bidders grows without bound, which is in accord with our results.  \cite{carroll2017robustness} studies the problem of multidimensional screening where again marginals, but not the correlation structure, are known.  \cite{chen2019distribution} study multidimensional screening under moment information.
 
\cite{wolitzky2016mechanism} studies optimal mechanisms for bilateral trade when agents are maxmin optimizers, with their uncertainty sets being similar to the one studied in the present paper (a buyer knows only the mean of a seller's valuation and bounds on its support, and vice versa). He also proposes a foundation for such uncertainty sets via agents' uncertainty over each other's information structures. Unfortunately, such a microfoundation is not directly applicable to the present paper's set-up because of our assumption that correlation between values can be arbitrary. %

Another strand of literature has concentrated on establishing the performance bounds of ``simple'' mechanisms, relative to optimal ones. For example, \cite{azar2013optimal} show, among other results, that when bidders' values are independent, distributions' hazard rates are monotone, and the vector of means $m$ is known, running a second-price auction with the vector of individual reserves equal to $m$ ensures no less than $\frac{1}{2e}$ of the optimal Myersonian revenue. Note that  this approach implicitly tries to minimize the worst-case relative \emph{regret} rather maximize the worst-case performance per se. Other important contributions in this vein include \cite{hartline2009simple}, \cite{dhangwatnotai2015revenue} and \cite{allouah2020prior}, \cite{giannakopoulos2019robust}. Note that the present paper shows that the distinction between ``simple'' and optimal may not exist when optimality is in the maxmin sense: a relatively simple mechanism (an LSA) is optimal. 

A natural idea when type distribution is unknown is to ask the agents about it (see, e.g., \cite{brooks2013surveying}). We rule out such schemes as we assume that bidders may lack such information themselves. Another approach is to try to infer the distribution of a bidder's value from other bidders' reports and then compute the empirical virtual value functions \citep{segal2003optimal}. Such schemes can asymptotically achieve full-distributional-information revenue\footnote{See also \cite{loertscher2020asymptotically} for an asymptotically optimal prior-free clock auction.} They \emph{are} feasible in the present paper but they turn out not to be (strictly) optimal: intuitively, to run them successfully, the seller has to have some prior knowledge of the values' correlation (best if values are conditionally iid) which she lacks by assumption. \cite{segal2003optimal} also characterized an optimal \emph{ex post} mechanism with correlated private values, which is a generalization of the Myersonian optimal mechanism\footnote{\cite{segal2003optimal} characterizes the mechanism under a restrictive assumption on conditional virtual values. See, e.g., \cite{papadimitriou2011optimal} for a treatment of a more general case.}.

A related problem in robust design is finding mechanisms robust to misspecification of agents' information structures, rather than the designer's prior. \cite{brooks2019optimal} identify an optimal mechanism in the common value setting, while \cite{du2018robust} identifies a simple mechanism that asymptotically extracts full surplus. This strand of literature still assumes a shared common prior between the designer and agents and that bidders' information structures are common knowledge among them. The present paper, in contrast, dispenses with both of these assumptions but assumes a rather narrow set of information structures: the seller knows that every buyer's information structure is such that she knows her value. Thus, our approach may be seen as complementary to theirs, albeit only indirectly so as our analysis does not apply to the case of interdependent values, which is the focus of the aforementioned papers. 

Others kinds of robustness explored in the economics literature include robustness to technology or preferences, robustness to strategic behavior and robustness to interaction among agents and are surveyed by \cite{carroll2018robustness}.

There also exists a large literature on distributionally robust non-mechanism-design optimization in operations research, with the case of ``known moments'' being popular.  This literature has been apparently pioneered by an economist \citep{scarf1958min}.  See \cite{wiesemann2014distributionally}, \cite{see2010robust}, \cite{goh2010distributionally}, \cite{delage2010distributionally}, \cite{popescu2007robust} and references therein.

Finally, the seller in our model may be seen as ambiguity-averse. Note that our model conforms to the \cite{gilboa1989maxmin} model of maxmin expected utility when one takes the set of all value profiles as the set of states of the world, and mechanisms as acts, because the set of priors we consider is convex. This would fail had we assumed, for instance, that values are known to be iid but the exact distribution is unknown.

\subsection{Organization of the paper}
This paper is organized as follows. We introduce the model in section~\ref{model}, then set the stage for the proof of the main result in section \ref{prep} and present the proof itself in section~\ref{proof}. We then find and present the parametric solutions and discuss comparative statics in section~\ref{paramsol}. Section~\ref{set} discusses the set of optimal mechanisms, section~\ref{ext} -- some extensions. Section~\ref{concl} concludes. Main missing proofs are stated in the \hyperref[app]{Appendix}. Other missing proofs, along with the discussion of worst-case distributions and certain examples, are in the Online Appendix that one can find at the end of this file.

\section{Model}\label{model}
\textbf{Notation.} Statements of the form ``for all $i$'' should be interpreted as ``for all $i\in\{1,\ldots,n\}$''. Symbols without subscripts may refer to either scalars or vectors. This should not create confusion in most cases. $v'>v$ for vectors means that $v'_i>v_i$ for all $i$, and similarly for $v'\geq v$. $\i_k$ is a vector of ones of size $k$. Value vectors are column vectors. Dual variable $\lambda$ is a row vector.

One indivisible object may be sold to one of $n\geq 2$ potential buyers. Buyers know their values for the object, $v_i$, which may be correlated. The seller knows that buyers know their values; however, she lacks detailed information about the joint distribution $F$. She only knows that (1) the support of $F$ is contained in $[0,\vmax]^n$ for some $\vmax>0$ and (2) $\int v_i dF=m_i$, $i=1,\ldots,n$.  Denote by $\Delta(m,\vmax)$ the set of all Borel distributions on $[0,\vmax]^n$ satisfying these conditions. The seller's valuation of the object is zero. Denote by $V$ the set of possible vectors of values $V\in[0,\vmax]^n$ and by $V_{-i}$ the set of vectors of values of all bidders except $i$ ($V_{-i}=[0,1]^{n-1}$).

We restrict attention to dominant-strategy incentive compatible  and ex post individually rational mechanisms. The definitions of these properties are standard. %
Further, here we restrict attention to deterministic mechanisms, i.e. mechanisms such that each bidder gets the good with probability 0 or 1 under any reported profile of valuations $v$. Finally, we apply Revelation principle, and consider only direct mechanisms
\footnote{Revelation principle is valid in our setting for implementation in dominant strategies.}.

Denote a direct deterministic mechanism by $M=\{x_i(v), t_i(v)\}$, $i=1,\ldots,n$, where $x_i(\cdot)$ are measurable allocation functions $[0,\vmax]^{n}\to \{0,1\}$ satisfying  $\sum_{i=1}^n x_i(v)\leq 1$ and $t_i(\cdot)$ are measurable transfer functions $[0,\vmax]^n\to\mathbb{R}$.

Denote the set of dominant-strategy incentive compatible  and ex post individually rational deterministic mechanisms by $\mathcal{M}$. Denote the seller's expected revenue given mechanism $M$ and distribution $F$ by $$R(M,F):=\int\sum_{i=1}^n t_i^M(v)dF.$$ 
Then, the problem that we consider in this paper (and the seller considers herself) is
\beq\label{problem}
\sup\limits_{M\in\mathcal{M}}\inf\limits_{F\in\Delta(m,\vmax)}R(M,F).
\eeq
Denote the value of the inner problem in \eqref{problem}, by $\underline{R}(M)$. That is, $\underline{R}(M)$ is the revenue guarantee of a mechanism $M$.

The main result is that problem \eqref{problem} admits a solution belonging to a simple class of mechanisms described below.

\begin{definition}\label{lsa-def}
Define bidder $i$'s \emph{linear score} $s_i(v_i)$ by $s_i(v_i):=\beta_iv_i-\alpha_i$ where $\alpha_i\geq 0$, $\beta_i>0$ are bidder-specific parameters. A \emph{linear score auction} is a mechanism 
$\{x(v),t(v)\}\in\mathcal{M}$ such that for $i=1,\ldots,n$:
\beq
x_i(v)=\begin{cases}
1, & s_i(v_i) > \max\{\max\limits_{j\neq i}s_j(v_j),0\}\\
0, & s_i(v_i) < \max\{\max\limits_{j\neq i}s_j(v_j),0\}.
\end{cases} 
\eeq
for some $\alpha_i$, $\beta_i\geq 0$.
\end{definition}

Note that a linear score auction with $\alpha_i=r$, $\beta_i=1
$ corresponds to the usual second-price auction with the reserve $r$. Hence, a linear score auction may be regarded as a simple generalization of the second-price auction that accounts for possible asymmetries of bidders.

\begin{definition}\label{lsa-ch-def}
A linear score auction is a \emph{corner-hitting linear score auction} if and only if, for some $r\in[0,\vmax]^n$, $\beta_i=\frac{1}{\vmax-r_i}$, $\alpha_i=\frac{r_i}{\vmax-r_i}$ if $r_i<\vmax$ and $\beta_i=1$, $\alpha_i=r_i$ if $r_i=\vmax$.
\end{definition}

The qualifier ``corner-hitting'' comes from the fact that in a depiction of such an auction's allocation function, the boundary separating the areas of value space corresponding to different bidders getting the object goes in the corner. Parameters $r$ may be thought of as generalized reserve prices  (see figure \ref{fig:auctions}). 

Equivalently, a corner-hitting linear score auction is a linear score auction such that for every bidder $i$, her maximum possible score is either one (when $r_i<\vmax$) or zero (when $r_i=\vmax$).

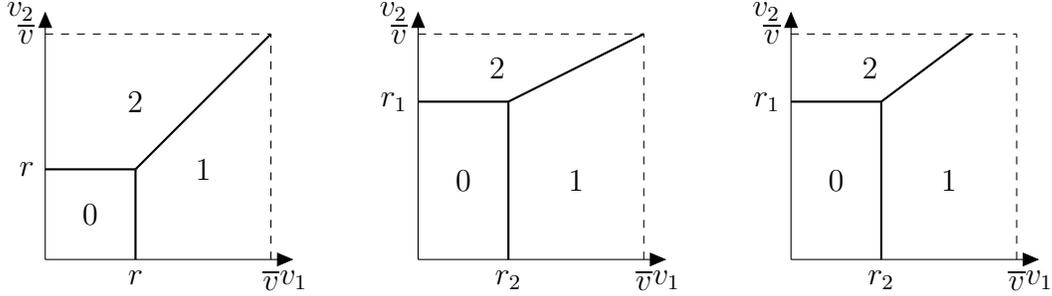
\begin{figure}[h!]
\centering
\begin{tikzpicture}[>=triangle 45, xscale=3,yscale=3]
\draw[->] (0,0) -- (1.1,0) node[below] {$v_1$};
\draw[->] (0,0) -- (0,1.1) node[left] {$v_2$};
\draw[dashed] (1,0)--(1,1);
\draw[dashed] (0,1)--(1,1);
\node[below] at (1,0){$\vmax$};
\node[left] at (0,1){$\vmax$};

\draw[thick] (0.4,0)--(0.4,0.4);
\draw[thick] (0,0.4)--(0.4,0.4);
\draw[thick] (0.4,0.4)--(1,1);
\node[left] at (0,0.4){$r$};
\node[below] at (0.4,0){$r$};

\node at (0.7,0.4){1};
\node at (0.4,0.7){2};
\node at (0.2,0.2){0};

\end{tikzpicture}
\hspace{1em}
\begin{tikzpicture}[>=triangle 45, xscale=3,yscale=3]
\draw[->] (0,0) -- (1.1,0) node[below] {$v_1$};
\draw[->] (0,0) -- (0,1.1) node[left] {$v_2$};
\draw[dashed] (1,0)--(1,1);
\draw[dashed] (0,1)--(1,1);
\node[below] at (1,0){$\vmax$};
\node[left] at (0,1){$\vmax$};

\draw[thick] (0.4,0)--(0.4,0.7);
\draw[thick] (0,0.7)--(0.4,0.7);
\draw[thick] (0.4,0.7)--(1,1);
\node[left] at (0,0.7){$r_1$};
\node[below] at (0.4,0){$r_2$};

\node at (0.7,0.35){1};
\node at (0.35,0.85){2};
\node at (0.2,0.35){0};

\end{tikzpicture}
\hspace{1em}
\begin{tikzpicture}[>=triangle 45, xscale=3,yscale=3]
\draw[->] (0,0) -- (1.1,0) node[below] {$v_1$};
\draw[->] (0,0) -- (0,1.1) node[left] {$v_2$};
\draw[dashed] (1,0)--(1,1);
\draw[dashed] (0,1)--(1,1);
\node[below] at (1,0){$\vmax$};
\node[left] at (0,1){$\vmax$};

\draw[thick] (0.4,0)--(0.4,0.7);
\draw[thick] (0,0.7)--(0.4,0.7);
\draw[thick] (0.4,0.7)--(0.8,1);
\node[left] at (0,0.7){$r_1$};
\node[below] at (0.4,0){$r_2$};

\node at (0.7,0.35){1};
\node at (0.35,0.85){2};
\node at (0.2,0.35){0};

\end{tikzpicture}
\caption{Three linear score auctions with two bidders. A standard second-price auction with a reserve price $r$ (left), and two asymmetric auctions favoring bidder 1 (center and right). Note that the rightmost auction is not a corner-hitting linear score auction, while the other two are.}\label{fig:auctions}
\end{figure}

\begin{theorem}\label{main}
For every mechanism $M\in\mathcal{M}$, there exists a corner-hitting linear score auction $LSA^0$ such that $\underline{R}(LSA^0)\geq \underline{R}(M)$.
\end{theorem}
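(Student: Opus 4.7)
The plan is to use moment-problem duality to express $\underline{R}(M)$ as a convex-envelope value, then realize the optimal dual hyperplane of $M$ as an affine lower bound on the total transfer of a carefully constructed corner-hitting LSA. First, the inner infimum
\[
\underline{R}(M) \;=\; \inf_{F \in \Delta(m,\vmax)} \int T^M(v)\,dF, \qquad T^M(v) := \sum_{i=1}^n t_i^M(v),
\]
is a semi-infinite LP over positive measures with moment constraints. By classical strong duality for moment problems (used, e.g., by \cite{scarf1958min} and \cite{carrasco2018optimal}),
\[
\underline{R}(M) \;=\; \sup\bigl\{c + \lambda \cdot m : c \in \mathbb{R},\ \lambda \in \mathbb{R}^n,\ c + \lambda \cdot v \le T^M(v)\ \forall v \in [0,\vmax]^n\bigr\},
\]
so $\underline{R}(M)$ equals the convex closure of $T^M$ evaluated at $m$. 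Fixing an optimal dual pair $(c^*,\lambda^*)$, it suffices to produce a corner-hitting LSA, call it $LSA^0$, whose total transfer satisfies $T^{LSA}(v)\ge c^*+\lambda^*\cdot v$ on $[0,\vmax]^n$: plugging $(c^*,\lambda^*)$ into the dual for $LSA^0$ then gives $\underline{R}(LSA^0)\ge c^*+\lambda^*\cdot m = \underline{R}(M)$.

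Second, I would obtain the reserve vector $r^*\in[0,\vmax]^n$ of the desired $LSA^0$ as a fixed point of a piecewise-linear self-map $\Phi:[0,\vmax]^n\to[0,\vmax]^n$ built from $M$ and $\lambda^*$. Heuristically, for each candidate $r$, $\Phi_i(r)$ records the critical value at which bidder $i$ starts to win in $M$ along the one-parameter slice of the value space prescribed by the tentative corner-hitting LSA with reserves $r$, so that the LSA's linear allocation boundary is consistent with the relevant face of $M$'s allocation boundary. Continuity of $\Phi$ follows from the monotonicity of IC critical-value functions of the deterministic mechanism $M$ together with the explicit piecewise-linear dependence on $r$, so Brouwer's fixed-point theorem delivers some $r^*=\Phi(r^*)$.

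Third, given $r^*$, I would verify $T^{LSA}(v)\ge c^*+\lambda^*\cdot v$ by a case split on which bidder wins in $LSA^0$ at $v$. The corner-hitting calibration $\beta_i=1/(\vmax-r_i^*)$, $\alpha_i=r_i^*/(\vmax-r_i^*)$ makes $T^{LSA}$ affine on each winning region, and the fixed-point identity ties its slope and intercept on the supporting faces $\{v_i=\vmax\}$ (where $T^M$ itself meets the hyperplane $c^*+\lambda^*\cdot v$) to $(c^*,\lambda^*)$; the inequality then extends to the rest of the cube because in $LSA^0$ the winner always pays her critical price, which the fixed-point equation identifies with the value dictated by the dual hyperplane.

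The main obstacle, as flagged in the introduction, is that Brouwer only produces \emph{some} fixed point of $\Phi$ while not every fixed point yields the desired revenue domination. The heart of the proof is therefore a structural analysis of the set of fixed points of $\Phi$: establishing appropriate monotonicity, connectedness, or lattice-like properties, and then selecting an extremal fixed point (for instance the coordinatewise largest or smallest) that satisfies the additional consistency relations needed in the third step. Handling the boundary case $r_i^*=\vmax$, where the corner-hitting definition switches form and bidder $i$ is weakly excluded, would require a separate argument and connects directly to the weak-exclusion phenomenon discussed in later sections.
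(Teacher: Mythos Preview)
Your high-level strategy matches the paper's: reduce the inner problem via moment-problem duality to a sup over affine minorants, fix an optimal dual $\lambda^*$, and then produce a corner-hitting LSA that dominates $M$ at this same $\lambda^*$ by means of a Brouwer fixed point. But several concrete pieces are either missing or different from what actually makes the argument go through.

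First, you never treat the possibility that $\lambda^*_i<0$ for some $i$. The paper devotes a separate reduction (its Grand Case~II) to this: whenever some $\lambda^*_i<0$, one passes to a modified mechanism $p^{\mathrm{new}}$ (effectively excluding those bidders) and a componentwise nonnegative $\lambda^{**}$ with $R(p^{\mathrm{new}},\lambda^{**})>R(p,\lambda^*)$, and only then runs the fixed-point construction. Without this reduction your third step breaks: the pointwise inequality $T^{LSA}(v)\ge c^*+\lambda^*\cdot v$ simply need not hold on the region where bidder $i$ with $\lambda^*_i<0$ wins.

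Second, the paper does not apply Brouwer to a map built directly from $M$ as you describe. There is an essential intermediate linearization: from the threshold functions $p_i$ one first passes to
\[
\tilde p_i(v_{-i}):=\max\{\lambda^*_{-i}v_{-i}+b_i,0\},\qquad b_i:=\inf_{w}\bigl(p_i(w)-\lambda^*_{-i}w\bigr),
\]
shows $R(\tilde p,\lambda^*)\ge R(p,\lambda^*)$, and only then applies Brouwer to the continuous piecewise-affine self-map $v\mapsto\tilde p(v)$. Your $\Phi$ (``the critical value at which bidder $i$ starts to win in $M$ along a slice'') is a different object, and its continuity is not clear: the threshold functions $p_i$ of a deterministic ex post mechanism need not be continuous, so monotonicity alone does not deliver Brouwer. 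The $\tilde p$ step is exactly what restores continuity and makes the fixed-point argument work.

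Third, the ``structural analysis of fixed points'' that you correctly flag as the heart of the proof has a very specific resolution that is not an extremal-fixed-point or lattice argument. The paper analyses the fixed-point set through the matrix $A(\lambda^*)$ with $a_{ii}=1$, $a_{ij}=-\lambda^*_j$, and the sign of
\[
\det A(\lambda^*)=\Bigl(1-\sum_i\tfrac{\lambda^*_i}{1+\lambda^*_i}\Bigr)\prod_i(1+\lambda^*_i)
\]
governs the case split: an interior fixed point works when it exists and is the only positive one; a fixed point with a zero coordinate handles the case where $W_0$ for the LSA is empty; and when the only fixed points have some coordinate equal to $\vmax$, the determinant criterion is used to show that either the interior-type argument still applies or a zero-coordinate fixed point must also exist. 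Your plan of ``select the coordinatewise largest or smallest fixed point'' does not line up with this trichotomy and would not, as stated, separate the usable fixed points from the unusable ones.
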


\begin{proposition}\label{optauction exists}
There exists a corner-hitting linear score auction that maximizes the revenue guarantee $\underline{R}(M)$ among all corner-hitting linear score auctions.
\end{proposition}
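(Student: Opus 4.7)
The parameter space of corner-hitting LSAs is $r \in [0,\vmax]^n$, a compact subset of $\R^n$. My plan is to apply the Weierstrass extreme value theorem after establishing upper semi-continuity of $f(r) := \underline{R}(LSA(r))$ on this parameter space, which immediately yields the desired maximizer.

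Write $f(r) = \inf_{F \in \Delta(m,\vmax)} R(LSA(r), F)$. For each fixed $F$, the revenue integrand $T(v; r) := \sum_{i} t_i^{LSA(r)}(v) x_i^{LSA(r)}(v)$ is piecewise linear in $v$ and, for $r$ in the interior $[0,\vmax)^n$, depends continuously on $r$ uniformly in $v$ away from a $v$-measure-zero set of score ties. For any $F$ that assigns zero mass to this tie-set, dominated convergence gives continuity of $r \mapsto R(LSA(r), F)$ on the interior, and taking infima preserves upper semi-continuity. Alternatively, using the convex-closure representation of $\underline{R}$ alluded to in the introduction, $f(r)$ equals the supremum of $a + \lambda \cdot m$ over $(a,\lambda) \in \R \times \R^n$ subject to $a + \lambda \cdot v \leq T(v; r)$ for all $v \in [0,\vmax]^n$, a (semi-infinite) LP whose constraint data depend continuously on $r$ on the interior, giving USC by standard sensitivity arguments.

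The main obstacle is the boundary $r_i = \vmax$, where the LSA is equivalent to excluding bidder $i$ entirely: atoms of near-worst-case distributions at $v_i = \vmax$ can produce genuine discontinuities in $r \mapsto R(LSA(r), F)$ because just inside the interior such atoms are treated as wins by bidder $i$ at price close to $\vmax$, whereas at the boundary they trigger a different allocation. I would handle this either by (i) inducting on $n$, treating each face of the parameter cube corresponding to some $r_i = \vmax$ as a lower-dimensional LSA problem on the remaining bidders and taking the overall supremum as the maximum over the interior and the finitely many boundary-face supremums, each attained by the inductive hypothesis; or (ii) showing via a weak-$*$ approximation argument inside $\Delta(m,\vmax)$ that $\epsilon$-approximately worst-case distributions for any $r$ can be chosen without atoms on $\{v_i = \vmax\}$ when $r_i < \vmax$, restoring continuity of $r \mapsto R(LSA(r), F)$ along the sequence crossing the boundary. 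With USC of $f$ established on the compact parameter space, Weierstrass yields the desired corner-hitting LSA maximizer.
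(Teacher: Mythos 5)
Your high-level strategy --- compactness of the parameter cube $[0,\vmax]^n$, upper semi-continuity of $r\mapsto \underline{R}(LSA(r))$, then Weierstrass --- is exactly the paper's. The gap is that you locate the potential discontinuity at the wrong boundary. The faces $r_i=\vmax$ are in fact harmless: by the explicit computation of the inner infima (Lemma \ref{innerproblsol}), $R(r,\lambda)$ is jointly continuous on $(0,\vmax]^n\times\mathbb{R}^n$, which together with attainment of the optimal $\lambda$ (Lemma \ref{optexistlambda}) and a Berge maximum-theorem argument gives continuity of the revenue guarantee up to and including $r_i=\vmax$. The boundary that actually causes trouble is $\mathcal{R}_0=\{r: r_i=0 \mbox{ for some } i\}$: there $W_0(p)=\emptyset$, the term $\inf_{W_0}(-\lambda v)=-\lambda r$ drops out of the minimum in \eqref{mlambda}, and the guarantee can jump upward. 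Your claimed continuity region $[0,\vmax)^n$ contains precisely this problematic set and excludes the unproblematic one. The paper handles $\mathcal{R}_0$ by extending $R(\cdot,\lambda)$ from $(0,\vmax]^n$ by continuity, observing that the extension is weakly below the true value on $\mathcal{R}_0$, and concluding $\lim_k\underline{R}(r^k)\le\underline{R}(r)$ for interior sequences converging to $\mathcal{R}_0$; nothing in your sketch plays this role.

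Two further soft spots. Your first route (fixed-$F$ continuity plus ``infima preserve USC'') needs each map $r\mapsto R(LSA(r),F)$ to be at least USC, but the infimum ranges over all of $\Delta(m,\vmax)$, including distributions with atoms on tie and threshold sets, for which this can fail depending on tie-breaking; restricting to atomless $F$ changes the infimum you are bounding. In the dual route, ``standard sensitivity arguments'' is exactly where the work lies: $\inf_v\{t(v;r)-\lambda v\}$ is an infimum of functions of $r$ and hence only USC in general, and a supremum over $\lambda$ of USC functions need not be USC --- you need the attainment and localization of the optimal $\lambda$ that Lemma \ref{optexistlambda} provides, plus the closed form of Lemma \ref{innerproblsol} to upgrade to joint continuity. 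Finally, your inductive fallback over the $r_i=\vmax$ faces does not by itself close the argument: without USC across a face, the supremum over the open interior need not be attained nor dominated by the face maxima.
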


\begin{corollary}
There exists a mechanism that solves \eqref{problem} and is a corner-hitting linear score auction.
\end{corollary}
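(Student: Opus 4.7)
The plan is to combine Theorem~\ref{main} and Proposition~\ref{optauction exists} directly: Theorem~\ref{main} says we never lose by restricting to corner-hitting linear score auctions (CHLSAs), and Proposition~\ref{optauction exists} says the supremum over CHLSAs is attained. Together these imply that the supremum in \eqref{problem} is attained by some CHLSA.

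More concretely, I would proceed in three short steps. First, invoke Proposition~\ref{optauction exists} to pick a CHLSA, call it $LSA^*$, that maximizes $\underline{R}$ among all CHLSAs. Since $LSA^* \in \mathcal{M}$, we immediately get the lower bound
\[
\sup_{M \in \mathcal{M}} \underline{R}(M) \;\geq\; \underline{R}(LSA^*).
\]
Second, for the reverse inequality, fix any $M \in \mathcal{M}$. By Theorem~\ref{main}, there exists a CHLSA, call it $LSA^0(M)$, with $\underline{R}(LSA^0(M)) \geq \underline{R}(M)$. By the optimality of $LSA^*$ within the class of CHLSAs,
\[
\underline{R}(M) \;\leq\; \underline{R}(LSA^0(M)) \;\leq\; \underline{R}(LSA^*).
\]
Taking the supremum over $M$ yields the matching upper bound.

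Third, combining the two inequalities gives $\sup_{M \in \mathcal{M}} \underline{R}(M) = \underline{R}(LSA^*)$, so $LSA^*$ attains the supremum in \eqref{problem} and is a corner-hitting linear score auction, completing the proof. There is no substantive obstacle here: the two preceding results are designed precisely so that this corollary is immediate, and the only thing to check is that a CHLSA is indeed a member of $\mathcal{M}$, which follows from Definition~\ref{lsa-def} (the class of linear score auctions is defined as a subset of $\mathcal{M}$).
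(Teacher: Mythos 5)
Your argument is correct and is exactly the intended derivation: the paper states this corollary without a separate proof precisely because it follows by combining Theorem~\ref{main} (any $M\in\mathcal{M}$ is weakly dominated by some corner-hitting LSA) with Proposition~\ref{optauction exists} (the revenue guarantee is maximized within the corner-hitting LSA class), just as you do. The two-inequality sandwich establishing that $\underline{R}(LSA^*)$ equals the supremum in \eqref{problem} is complete and needs no further justification.
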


Note that theorem \ref{main} establishes the form of only one optimal mechanism; in general there exist other optimal mechanisms, and there may even be multiple optimal mechanisms that are linear score auctions (see section \ref{set} below).

A well-known characterization of the set $\mathcal{M}$ (given in, e.g., \cite{segal2003optimal}) is as follows.
\begin{lemma}\label{thresholds}
A mechanism $\{x_i(v), t_i(v)\}\in \mathcal{M}$ if and only if for each bidder $i$ there exist a function $p_i$: $[0,\vmax]^{n-1}\to[0,\vmax]$ and a function $z_i: [0,\vmax]^{n-1}\to \mathbb{R}_+$ such that for every valuation profile $v\in[0,\vmax]^n$,
\beq
x_i(v)=
\begin{cases}
1, & v_i > p_i(v_{-i})\\
0, & v_i < p_i(v_{-i});
\end{cases}
\eeq
\beq
t_i(v)=p_i(v_{-i})x_i(v)-z_i(v_{-i}).
\eeq
\end{lemma}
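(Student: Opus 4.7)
The plan is a standard textbook derivation of the thresholds, carried out bidder by bidder. Fix bidder $i$ and an opponents' profile $v_{-i}$, and let $g(v_i):=v_i x_i(v_i,v_{-i})-t_i(v_i,v_{-i})$ be the truthful utility. Writing the DSIC inequalities with true value $v_i$ mimicking $\hat v_i$ and vice versa and adding them gives $(\hat v_i-v_i)(x_i(\hat v_i,v_{-i})-x_i(v_i,v_{-i}))\geq 0$, so $x_i(\cdot,v_{-i})$ is non-decreasing. Since $x_i\in\{0,1\}$, I define $p_i(v_{-i}):=\inf\{v_i:x_i(v_i,v_{-i})=1\}$ with $\inf\emptyset:=\vmax$; this immediately yields the allocation characterization for every $v_i\neq p_i(v_{-i})$.

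Next I pin down the transfer. Applying the two DSIC inequalities to any pair $v_i,v_i'$ both strictly above $p_i(v_{-i})$ (so both allocate the good to $i$) forces $t_i(v_i,v_{-i})=t_i(v_i',v_{-i})$; call the common value $\tau^w(v_{-i})$. The same argument on $\{v_i<p_i(v_{-i})\}$ yields a constant value $-z_i(v_{-i}):=\tau^\ell(v_{-i})$. I then tie them together by taking no-profitable-deviation across the threshold: the inequality $v_i-\tau^w(v_{-i})\geq -\tau^\ell(v_{-i})$ for $v_i\downarrow p_i(v_{-i})$, together with $-\tau^\ell(v_{-i})\geq v_i-\tau^w(v_{-i})$ for $v_i\uparrow p_i(v_{-i})$, gives $\tau^w(v_{-i})-\tau^\ell(v_{-i})=p_i(v_{-i})$, which rearranges to $t_i(v)=p_i(v_{-i})x_i(v)-z_i(v_{-i})$ for all $v$ with $v_i\neq p_i(v_{-i})$. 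Ex post individual rationality applied at any losing report gives $z_i(v_{-i})=-\tau^\ell(v_{-i})\geq 0$.

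For the converse, given the claimed form, a bidder of true value $v_i$ reporting $\hat v_i$ obtains utility $(v_i-p_i(v_{-i}))\,x_i(\hat v_i,v_{-i})+z_i(v_{-i})$. The sign of the coefficient multiplying $x_i$ agrees with the sign of $v_i-p_i(v_{-i})$, so any report whose allocation matches truthful reporting is optimal; DSIC follows. IR is immediate since $z_i\geq 0$ and $(v_i-p_i(v_{-i}))x_i(v)\geq 0$ under truth-telling.

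The only subtlety is the knife-edge set $\{v_i=p_i(v_{-i})\}$, on which the inequalities in the statement leave the allocation unconstrained; I simply keep whatever value the original mechanism assigns there and extend the transfer formula accordingly, which is innocuous for any expected-revenue computation. Measurability of $p_i$ and $z_i$ transfers directly from that of $x_i$ and $t_i$ via the explicit expressions $p_i(v_{-i})=\inf\{v_i:x_i(v_i,v_{-i})=1\}$ and $z_i(v_{-i})=-t_i(v_i,v_{-i})$ for any $v_i<p_i(v_{-i})$. I do not anticipate a serious obstacle; the only thing to be careful about is milking the DSIC inequalities enough to eliminate any dependence of $t_i$ on $v_i$ separately on each side of the threshold before connecting the two sides.
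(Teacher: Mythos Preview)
Your proposal is correct and is precisely the standard monotonicity-plus-envelope derivation that the paper invokes without spelling out (the paper gives no detailed proof, merely citing the standard characterization of DSIC mechanisms). The only slight looseness---the degenerate cases $p_i(v_{-i})\in\{0,\vmax\}$ where one side of the threshold may be empty, so that the limiting argument tying $\tau^w$ to $\tau^\ell$ is unavailable---is routine bookkeeping (define $z_i$ directly from IR in those cases) and not a genuine gap.
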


Lemma \ref{thresholds} follows directly from the standard characterization of dominant-strategy incentive-compatibility: monotonicity of the allocation function and the envelope formula that pins down transfers (up to a constant) as a function of allocation.

Second, revenue maximization in \ref{problem} implies that at the optimum $z_i(v_{-i})\equiv 0$.

Lemma \ref{thresholds} leaves room for determining allocation for $v$ such that $v_i=p_i(v_{-i})$ for some $i$. As we allow for discrete distributions (and the worst-case distributions will turn out to be discrete, see section \ref{wcdistrsect}), the determination of allocation on this measure-zero set (``tie-breaking rule'') a priori might matter for the expected revenue (although it won't in fact matter for the optimal mechanism).  

Thus, finding an optimal mechanism in the set $\mathcal{M}$ boils down to (1) finding \emph{threshold functions} $p_i(v_{-i})$ such that the bidder $i$ gets the object and pays $p_i(v_{-i})$ if his report is higher than $p_i(v_{-i})$; (2) determining the tie-breaking rule.

Note that for any corner hitting linear score auction functions $p_i$ have a certain simple form: 
\beq\label{p-lsa}
p_i(v_{-i})=
r_i+(\vmax-r_i)\max\left\{0,\max\limits_{j\in I(r)\setminus \{i\}}\frac{v_j-r_j}{\vmax-r_{j}}\right\};
\eeq
for some $r\in[0,\vmax]^n.$ where the set $I(r)$ is given by $\{i:r_i<\vmax\}$.

In what follows, we will use a shorthand notation $LSA(r)$, with some $r_i$ possibly equal to $\vmax$, to refer to a corner-hitting linear score auction in which $i\in I$ if and only if $r_i<\vmax$, and threshold functions are given by \eqref{p-lsa} for the parameter vector $r$. 

\section{Preparations for the proof: duality}\label{prep}
First, we reformulate the inner problem in \eqref{main} (Nature's problem) using an appropriate linear programming duality result. This allows for a more tractable expression reflecting the dependence of the worst-case expected revenue on the mechanism. 

A revealing characterization of worst-case revenue under a given mechanism $M$ is as follows. Denote by $conv[t]$ the convex closure of the function $t$, i.e. for all $v\in [0,\vmax]^n$, $conv[t](v):=\inf\{R|(v,R)\in conv(graph(t))\}$ where $conv(graph(t))$ is the convex hull of the graph of $t(\cdot)$.

\begin{lemma}\label{convex hull}
For any mechanism $M$, \beq\label{ch-representation}
\underline{R}(M)=conv[t^M](m).
\eeq
\end{lemma}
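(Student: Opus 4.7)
The plan is to cast the inner minimization as an infinite-dimensional LP over Borel measures and dualize it. The inner problem is
\[
\underline{R}(M) \;=\; \inf_{F\ge 0}\ \int t^M(v)\,dF(v) \quad \text{s.t.}\quad \int dF = 1,\ \int v_i\,dF = m_i\ (i=1,\dots,n),
\]
which is linear in $F$. Attaching a scalar multiplier $\lambda_0$ to the mass constraint and a row vector $\lambda\in\R^n$ to the mean constraints, the Lagrangian infimum over $F\ge 0$ is finite only when the affine-minorant condition $\lambda_0+\lambda v \le t^M(v)$ holds for every $v\in[0,\vmax]^n$, in which case it equals $\lambda_0+\lambda m$. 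The dual is therefore
\[
\sup_{(\lambda_0,\lambda)\in\R^{1+n}}\ \lambda_0+\lambda m \quad\text{s.t.}\quad \lambda_0+\lambda v \le t^M(v)\ \forall v\in[0,\vmax]^n.
\]

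Next, I would invoke strong duality for generalized moment problems. Using the IC/IR characterization of Lemma~\ref{thresholds} and the observation that revenue maximization permits one to take $z_i\equiv 0$, the function $t^M$ can be assumed bounded and Borel measurable on the compact set $[0,\vmax]^n$; moreover, any $m$ with $0<m_i<\vmax$ lies in the interior of the moment cone, being realized by a strictly feasible (interior) distribution such as a small smoothing of an appropriate product of two-point distributions with marginal means $m_i$. Under these regularity conditions, a classical zero-gap theorem for moment LPs (Isii 1962; see also Shapiro 2001) guarantees that the primal and dual values coincide.

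Finally, the dual value is the supremum at $m$ over affine functions dominated by $t^M$, which is the standard biconjugate characterization of the convex envelope. A short Carathéodory argument in $\R^{n+1}$ reconciles this with the paper's graph-based definition: any $(m,R)$ in $conv(graph(t^M))$ is a convex combination $\sum_{k=1}^{n+2}\alpha_k(v^k,t^M(v^k))$, and taking the infimum over such combinations matches the supremum over affine minorants. Hence the dual value equals $conv[t^M](m)$, yielding~(\ref{ch-representation}).

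The main obstacle is the strong-duality step in infinite dimensions: one must verify that compactness of $[0,\vmax]^n$ together with a Slater-type interior condition on $m$ eliminates the duality gap, as happens classically. A purely elementary backup avoids the duality machinery altogether: the inequality $\underline{R}(M)\ge conv[t^M](m)$ follows from Jensen applied to the convex function $conv[t^M]\le t^M$ against a mean-$m$ distribution, while $\underline{R}(M)\le conv[t^M](m)$ follows because any Carathéodory decomposition of $(m,conv[t^M](m))$ yields a discrete feasible distribution whose expected transfer is $conv[t^M](m)$.
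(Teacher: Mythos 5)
Your proposal is correct, but it packages the argument differently from the paper. The paper's own proof is much shorter: it cites \cite{smith1995generalized} for the fact that, since $m\in(0,\vmax)^n$, the infimum over $\Delta(m,\vmax)$ equals the infimum over \emph{finitely supported} feasible distributions, and then observes that finite convex combinations of points of $graph(t^M)$ with first coordinate $m$ are exactly what the graph-based definition of $conv[t^M](m)$ ranges over. Your primary route instead proves strong duality for the moment LP and identifies the dual value with the biconjugate; this effectively merges Lemma~\ref{convex hull} with the separate biconjugate step \eqref{biconj} that the paper defers to the Fenchel--Moreau--Rockafellar theorem, so it proves more than is needed here at the cost of heavier machinery (Isii-type zero-gap theorems and a Slater condition), and it quietly passes through the closed convex envelope, which agrees with the paper's (non-closed) $conv[t^M]$ only on the interior $(0,\vmax)^n$ --- a point worth making explicit. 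Your elementary backup is actually the cleanest of the three: the Jensen/supporting-hyperplane argument gives $\underline{R}(M)\geq conv[t^M](m)$ for \emph{arbitrary} feasible $F$, thereby removing the need for the reduction to finitely supported distributions that the paper borrows from Smith, while the reverse inequality only needs discrete distributions. The one imprecision there is that the infimum defining $conv[t^M](m)$ need not be attained, since $conv(graph(t^M))$ need not be closed; you should decompose points $(m,R_k)$ with $R_k\downarrow conv[t^M](m)$ rather than the infimal point itself. With that fix, the backup argument is a self-contained and arguably preferable substitute for the paper's proof.
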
 
\begin{proof}
See \hyperref[app]{Appendix}.\eop
\end{proof}

The intuition behind the representation \eqref{ch-representation} is exactly as in the Bayesian persuasion literature started by \cite{kamenica2011bayesian}: Nature can achieve expected revenue $R$ with some feasible distribution if and only if $(m,R)$ is in the convex hull of graph of $t^M$, so the minimal possible revenue is achieved at the lower boundary of the convex hull, similar to how the highest sender's utility is achieved at the upper boundary of a convex hull of $V(\mu)$ at the point $\mu_0$ in \cite{kamenica2011bayesian}. Whereas in \cite{kamenica2011bayesian}, the constraint is that $\mu_0$, the prior belief, has to be the mean of posterior beliefs, here the constraint is that $m$ has to be the mean of values. Even though the representation \eqref{ch-representation} is intuitive, it is not yet particularly convenient for solving the mechanism design problem. Fortunately, by Fenchel-Moreau-Rockafellar duality theorem (see \cite{bonnans2000perturbation}, theorem 2.113), convex closure of a function can be equivalently represented as the function's biconjugate, that is, 
\beq\label{biconj}
conv[t^M](m)=\sup\limits_{\lambda\in\mathbb{R}^n} \left(\lambda m+\inf\limits_{v\in[0,\vmax]^n} \left\{t^{M}(v)-\lambda v\right\}\right)
\eeq
for all $m\in(0,\vmax)^n$. 

Alternatively, one may have arrived at an expression \eqref{biconj} by considering the dual of Nature's linear problem directly, and invoking strong duality \citep{smith1995generalized}. 

An important co-result useful in the proof of theorem \ref{main} is that the supremum in \eqref{biconj} is achieved:
\begin{lemma}\label{optexistlambda}
The supremum in \eqref{biconj} is achieved at some $\lambda^*(M)\in\mathbb{R}^n$.
\end{lemma}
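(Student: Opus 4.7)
The plan is to prove attainment by verifying that
\[ g(\lambda) \;:=\; \lambda m + \inf_{v\in[0,\vmax]^n}\bigl\{t^M(v) - \lambda v\bigr\} \]
is a concave, upper semicontinuous, coercive function of $\lambda\in\mathbb{R}^n$. Concavity and upper semicontinuity are immediate: for each fixed $v$ the map $\lambda\mapsto t^M(v)+\lambda(m-v)$ is affine, and $g$ is the pointwise infimum of this family, hence concave and upper semicontinuous. To ensure $g(\lambda)$ is everywhere finite I invoke the reduction noted after Lemma~\ref{thresholds}: since $z_i\equiv 0$ is without loss of generality for revenue maximization, I may assume $t_i^M$ takes values in $[0,\vmax]$, and consequently $t^M\in[0,\vmax]$ is bounded on the compact box, so the defining infimum is finite for every $\lambda$.

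For coercivity, given $\lambda$ pick the coordinatewise extremal test point $v^*(\lambda)\in[0,\vmax]^n$ by $v^*_i(\lambda):=\vmax$ if $\lambda_i>0$ and $v^*_i(\lambda):=0$ otherwise; since this maximizes $\lambda v$, it minimizes $\lambda(m-v)$ over the box. Using $v^*(\lambda)$ as a test point in the infimum yields
\[ g(\lambda) \;\leq\; t^M\bigl(v^*(\lambda)\bigr) + \lambda\bigl(m - v^*(\lambda)\bigr) \;\leq\; \vmax \;-\; c\,\|\lambda\|_1, \]
where $c:=\min_i\min\{m_i,\,\vmax-m_i\}$ is strictly positive under the hypothesis $m\in(0,\vmax)^n$ that is implicit in the statement of \eqref{biconj}. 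Consequently $g(\lambda)\to-\infty$ as $\|\lambda\|\to\infty$.

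The attainment conclusion is then standard. The superlevel set $L:=\{\lambda:g(\lambda)\geq g(0)\}$ contains $0$, is closed by upper semicontinuity, and is bounded by coercivity, hence compact; the upper semicontinuous function $g$ attains its maximum on $L$, and this is necessarily a global maximum of $g$ on $\mathbb{R}^n$. Any such maximizer is the claimed $\lambda^*(M)$. The one delicate ingredient is the coercivity step, and it rests squarely on the nondegeneracy $m\in(0,\vmax)^n$: were some coordinate $m_i$ to sit at $0$ or $\vmax$, the optimal $v_i^*(\lambda)$ coincides with $m_i$ irrespective of $\lambda_i$, so $g$ would be invariant under translations of $\lambda_i$ and the supremum could fail to be attained in finite $\lambda$.
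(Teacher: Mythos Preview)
Your proof is correct and follows essentially the same approach as the paper's own direct proof in the Online Appendix: both show that $g$ is concave and coercive by choosing, for each $\lambda$, a test point $v$ with $v_i=\vmax$ when $\lambda_i>0$ and $v_i=0$ otherwise, and then conclude via compactness. Your packaging via the explicit bound $g(\lambda)\leq \vmax - c\|\lambda\|_1$ and the superlevel-set argument is arguably cleaner than the paper's maximizing-sequence formulation; note also that the reduction to $z_i\equiv 0$ is not really needed for coercivity, since $v^*(\lambda)$ ranges over the $2^n$ corners of the box and $t^M$ is automatically finite there.
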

Lemma \ref{optexistlambda} follows from the results in \cite{smith1995generalized} as the transfer function is bounded from below. In the Online Appendix, we provide a direct proof of this lemma.

The $\lambda^*(M)$ are the optimal dual variables and the Lagrange multipliers on the mean constraints in Nature's problem. If unique, $\lambda^*$ may also be interpreted as the local slope of the convex closure of $t^M(v)$ at $m$. Note that we allow for both positive and negative lambda -- this stems from our specification of mean constraints as equalities. In the Online Appendix, we give examples of mechanisms for which $\lambda^*_i<0$ for some $i$. Note that for such mechanisms the seller's revenue counterintuitively \emph{decreases} in $m_i$ for such $i$. It is a priori not evident that such mechanisms are suboptimal; however, it will follow from the proof of theorem \ref{main} that they indeed are. In section \ref{ext}, we discuss an alternative setting with inequality constraints in which one may restrict $\lambda$ to be nonnegative a priori.

Given a mechanism $M$, partition $[0,\vmax]^n$ into $n+1$ sets $W_0$, $W_1$,$\ldots$,$W_n$ where $W_i$, $i=1,\ldots,n$, are sets of valuations such that bidder $i$ gets the object, and $W_0$ is the set of valuations such that no one gets the object. Then, the value of\eqref{biconj} is equal to
\beq\label{transformedproblem}
\lambda m+\min\{\min\limits_{i=1,\ldots,n}\inf\limits_{v\in W_i(p)}(p_i(v_{-i})-\lambda v),\inf\limits_{v\in W_0(p)}(-\lambda v)\}
\eeq 
The sets $W_i$ are determined, up to tie-breaking, by the functions $p_i(\cdot)$. The tie-breaking rule may be specified by disjoint sets $O_i\subset\{v:v_i=p_i(v_{-i})\}$ such that $W_i=\{v:v_i>p_i(v_{-i})\}\cup O_i$, $i=1,\ldots,n$. If the indifference is between granting the object to a bidder and not granting it at all, it is always harmless to grant the object to someone, so we let sets $O_i$ satisfy $\cup_{i=1}^n O_i=\cup_{i=1}^n\{v:v_i=p_i(v_{-i})\}\setminus\cup_{i=1}^n\{v:v_i>p_i(v_{-i})\}$. Note that this also implies that the set  $W_0$ is defined with strict inequalities: $W_0=\{v:v_i<p_i(v_{-i})\mbox{ for all i}\}$. $W_0$ may be empty.

The following lemma shows that (1) one can safely ignore sets $O_i$ when optimizing over threshold functions $p$; (2) one can 
extend $W_i$ to certain \emph{intersecting} sets such that the infimum \eqref{transformedproblem} does not change. This extension is important for the proof of theorem \ref{main}.  

\begin{lemma}\label{getridofO}
For any $\lambda\in \mathbb{R}^n$ and any mechanism $M\in\mathcal{M}$, the value of \eqref{transformedproblem} is the same as the value of 
\beq
\lambda m+\min\{\min\limits_{i=1,\ldots,n}\inf\limits_{v\in W^{\geq}_i(p)}(p_i(v_{-i})-\lambda v),\inf\limits_{v\in W_0(p)}(-\lambda v)\}
\eeq
where $W_i^{\geq}(p):=\{v:v_i\geq p_i(v_{-i})\}$, $i=1,\ldots,n$.
\end{lemma}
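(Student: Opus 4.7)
My plan is to prove the claimed equality by two inequalities. Because $\lambda m$ and $\inf_{v\in W_0}(-\lambda v)$ appear identically on both sides, it suffices to compare $\min_i \inf_{v\in W_i(p)}(p_i(v_{-i})-\lambda v)$ with $\min_i \inf_{v\in W_i^{\geq}(p)}(p_i(v_{-i})-\lambda v)$. The direction in which the latter is weakly smaller is immediate: $W_i^{\geq}(p)\supseteq W_i(p)$ for each $i$, since $W_i(p)=\{v:v_i>p_i(v_{-i})\}\cup O_i\subseteq\{v:v_i\geq p_i(v_{-i})\}=W_i^{\geq}(p)$, and enlarging the feasible set weakly lowers the infimum.

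For the reverse direction, I would fix any $i^*$ and any $v^*\in W_{i^*}^{\geq}(p)$ and exhibit an index $k$ together with a point or sequence in $W_k(p)$ whose objective value is at most $p_{i^*}(v^*_{-i^*})-\lambda v^*$; taking $\inf$ over $v^*$ and $\min$ over $i^*$ then delivers the desired inequality. If $v^*\in W_{i^*}(p)$, take $k=i^*$ and the point $v^*$. Otherwise $v^*$ lies in $W_{i^*}^{\geq}(p)\setminus\{v:v_i>p_i(v_{-i})\}$, so $v^*_{i^*}=p_{i^*}(v^*_{-i^*})$. If additionally $v^*_{i^*}<\vmax$, I would perturb along coordinate $i^*$: for small $\epsilon>0$ set $\tilde{v}^{\epsilon}:=v^*+\epsilon e_{i^*}$. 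Then $\tilde{v}^{\epsilon}\in[0,\vmax]^n$, and because $\tilde{v}^{\epsilon}_{-i^*}=v^*_{-i^*}$ one has $\tilde{v}^{\epsilon}_{i^*}>p_{i^*}(\tilde{v}^{\epsilon}_{-i^*})$, placing $\tilde{v}^{\epsilon}\in\{v:v_i>p_i(v_{-i})\}\subseteq W_{i^*}(p)$; the objective evaluates to $p_{i^*}(v^*_{-i^*})-\lambda v^*-\lambda_{i^*}\epsilon\to p_{i^*}(v^*_{-i^*})-\lambda v^*$ as $\epsilon\downarrow 0$, regardless of the sign of $\lambda_{i^*}$. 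If instead $v^*_{i^*}=\vmax$, then $v^*\notin W_0$ (whose definition requires strict inequalities) and by hypothesis $v^*\notin W_{i^*}(p)$, so the tie-breaking convention forces $v^*\in W_j(p)$ for some $j\neq i^*$; taking $k=j$ and the point $v^*$, the global bound $p_j\leq\vmax$ yields $p_j(v^*_{-j})-\lambda v^*\leq\vmax-\lambda v^*=p_{i^*}(v^*_{-i^*})-\lambda v^*$.

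The main obstacle I expect is precisely the boundary subcase $v^*_{i^*}=\vmax$, where the interior perturbation by $e_{i^*}$ is unavailable. The remedy is indirect: rather than building an approximating sequence inside $W_{i^*}(p)$, I exploit the fact that the mechanism's tie-breaking rule names an actual winner $j\neq i^*$ at the tie, and the universal bound $p_j(v^*_{-j})\leq\vmax=p_{i^*}(v^*_{-i^*})$ ensures the true winner's contribution to the objective is no larger than the hypothetical ``would-be'' contribution coming from declaring $i^*$ the winner. Combined with the trivial inequality above, these two cases pin down the equality of the two values.
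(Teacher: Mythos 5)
Your proof is correct and follows essentially the same route as the paper's: the trivial inclusion $W_i(p)\subseteq W_i^{\geq}(p)$ for one direction, and for the other a case split on whether $p_{i^*}(v^*_{-i^*})<\vmax$ (approximate from within $W_{i^*}(p)$ by perturbing only the $i^*$-th coordinate) or $=\vmax$ (invoke the tie-breaking convention to find the actual winner $j\neq i^*$ and use $p_j\leq\vmax$). If anything, your explicit perturbation $v^*+\epsilon e_{i^*}$ with $v^*_{-i^*}$ held fixed is slightly more careful than the paper's neighborhood argument, since $p_{i^*}$ need not be continuous.
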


\begin{proof}
See \hyperref[app]{Appendix}.\eop
\end{proof}

Lemma \ref{getridofO} is not trivial because sets  $W_i^{\geq}(p)$ are not merely closures of $W_i$. Whereas the projection of $W_i$ on $V_{-i}$ may be a strict subset of $V_{-i}$, the projection of $W_i^{\geq}(p)$ on $V_{-i}$ is the whole $V_{-i}=[0,\vmax]^{n-1}$ (recall that $p_i(v_{-i})\leq \vmax)$).

As using strong duality converts the inner problem into a maximization problem, one may collapse the outer and the inner problem into a single maximization problem.
Thus, the final reformulation of the problem \eqref{problem} may be stated as 

\begin{framed}
\noindent Choose measurable functions $p_i(v_{-i}):[0,\vmax]^{n-1}\to[0,\vmax]$ and $\lambda\in\mathbb{R}^n$ to maximize 
\beq\label{reducedproblem}
R(p,\lambda):=\lambda m+\min\{\min\limits_{i=1,\ldots,n}\inf\limits_{v\in W^{\geq}_i(p)}(p_i(v_{-i})-\lambda v),\inf\limits_{v\in W_0(p)}(-\lambda v)\}
\eeq
subject to supply constraint: $\forall i\neq j$ and $\forall v\in[0,\vmax]^n$, $v_i> p_i(v_{-i})\mbox{ implies }v_j\leq p_j(v_{-j})$,
where $W_i^{\geq}(p):=\{v:v_i\geq p_i(v_{-i})\}$, $i=1,\ldots,n$, $W_0=\{v:v_i<p_i(v_{-i})\mbox{ for all i}\}$.
\end{framed}
In \eqref{reducedproblem}, we abuse notation slightly by introducing the functional $R(p,\lambda)$ that plays a central role in the analysis to follow. 
Note that $\underline{R}(M)=\sup\limits_\lambda R(p,\lambda)$ where $p$ are the threshold functions representing mechanism $M$. Also note that even though $\sup\limits_\lambda R(p,\lambda)$ equals the worst-case expected revenue only for tuples of functions $p$ satisfying the supply constraint, $R(p,\lambda)$ is well-defined for all tuples of functions $p_i(v_{-i}):[0,\vmax]^{n-1}\to[0,\vmax]$, $i=1,\ldots,n$.  The proof of theorem \ref{main} involves evaluating $R(p,\lambda)$ at a tuple $p$ representing an \emph{infeasible} mechanism.

\section{Proof of theorem \ref{main}}\label{proof}
Suppose we are given a mechanism $M^0\in\mathcal{M}$, represented by threshold functions $p$. Compute an optimal $\lambda^*\in \mathbb{R}^n$ that maximizes $R(p,\lambda)$ given $p$ (such a lambda exists by Lemma \ref{optexistlambda}). We will construct a linear score auction $LSA^0$ with threshold functions $\hat{p}$ such that either $R(\hat{p},\lambda^*)\geq R(p,\lambda^*)$ (in \textbf{Grand case I} below) or $R(\hat{p},\lambda^{**})\geq R(p,\lambda^{**})\geq R(p,\lambda^{*})$ for some $\lambda^{**}$ (in \textbf{Grand case II}). When lambda is reoptimized for the $LSA^0$, the value of $R(\hat{p},\lambda)$ will be even (weakly) higher. 

The construction of the dominating LSA depends on $\lambda^*$ and is qualitatively different depending on whether $\lambda^*$ has negative components or not. 

\textbf{Grand case I.} $\lambda^*_i\geq 0$ for all $i=1,\ldots,n$. 

Note that for every $v_{-i}\in[0,\vmax]^{n-1}$, the value profile $(\vmax,v_{-i})\in W^{\geq}_i(p)$.  Given this fact and the fact that $\lambda^*_i\geq 0$, for every $i$ 
$$\inf\limits_{v\in W^{\geq}_i(p)}(p_i(v_{-i})-\lambda^* v)=\inf\limits_{v_{-i}\in[0,\vmax]^{n-1}}(p_i(v_{-i})-\lambda^*_{-i}v_{-i}-\lambda^*_i\vmax).$$
Thus, \eqref{reducedproblem} may be simplified further to 
\beq\label{reducedproblem1}
R(p,\lambda^*)=\lambda^* m+\min\{\min\limits_{i=1,\ldots,n}\inf\limits_{v_{-i}\in[0,\vmax]^{n-1}}(p_i(v_{-i})-\lambda^*_{-i}v_{-i}-\lambda^*_i\vmax),\inf\limits_{v\in W_0(p)}(-\lambda^* v)\}
\eeq

\textbf{Step 1.} The main idea is to replace each threshold function $p_i(\cdot)$ with a simpler function in such a way that the worst-case revenue \eqref{reducedproblem1} does not decrease. 
To this end, given a function $p_i(v_{-i})$ and $\lambda^*$, compute 
\[b_i:=\inf\limits_{w\in[0,\vmax]^{n-1}}(p_i(w)-\lambda^*_{-i} w)\]
and then
for every $v\in[0,\vmax]^{n-1}$ define 
\beq\label{tildepdef}
\tilde{p}_i(v_{-i}):=\max\{\lambda^*_{-i}v_{-i}+b_i,0\}.
\eeq

$\tilde{p}_i(v_{-i})$ may be thought of as the supporting hyperplane to the graph of $p_i(\cdot)$ with a given slope $\lambda^*_{-i}$ (with an appropriate truncation on the boundary of $[0,\vmax]^{n-1}$). Such a hyperplane exists even if $p_i(\cdot)$ is not convex because it is not a supporting hyperplane at a given point, but rather a supporting hyperplane with a given slope.

We replace each function $p_i(\cdot)$ with $\tilde{p}_i(\cdot).$ Even though the tuple $\tilde{p}$ typically violates the supply constraint, $R(\tilde{p},\lambda^*)$ can still be evaluated. 

\begin{proposition}\label{tildebetter}
$R(\tilde{p},\lambda^*)\geq R(p,\lambda^*)$.
\end{proposition}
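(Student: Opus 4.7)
The plan is to exploit the decomposition $R(p,\lambda^*)=\lambda^* m+\min\{A(p),B(p)\}$ suggested by \eqref{reducedproblem1}, where $A(p):=\min_i\inf_{v_{-i}\in[0,\vmax]^{n-1}}(p_i(v_{-i})-\lambda^*_{-i}v_{-i}-\lambda^*_i\vmax)$ and $B(p):=\inf_{v\in W_0(p)}(-\lambda^* v)$. Since the $\lambda^* m$ summand does not depend on $p$, it suffices to establish the two inequalities $A(\tilde p)\geq A(p)$ and $B(\tilde p)\geq B(p)$ separately, and combine.

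For the first inequality, I would exploit the fact that $\tilde p_i$ is, by construction, an affine supporter of $p_i$ from below (with a given slope $\lambda^*_{-i}$), subsequently truncated at zero. The definition of $b_i$ gives $\lambda^*_{-i}v_{-i}+b_i\leq p_i(v_{-i})$ for every $v_{-i}$, and so in particular $\tilde p_i(v_{-i})\geq \lambda^*_{-i}v_{-i}+b_i$ pointwise. Rearranging yields $\tilde p_i(v_{-i})-\lambda^*_{-i}v_{-i}\geq b_i$, so taking the infimum over $v_{-i}$ and subtracting $\lambda^*_i\vmax$ gives $\inf_{v_{-i}}(\tilde p_i(v_{-i})-\lambda^*_{-i}v_{-i}-\lambda^*_i\vmax)\geq b_i-\lambda^*_i\vmax$, where the right-hand side is precisely the $i$th term appearing in $A(p)$. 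Minimizing over $i$ yields $A(\tilde p)\geq A(p)$.

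For the second inequality, I would show the pointwise bound $\tilde p_i\leq p_i$ and deduce a set containment. Combining $\lambda^*_{-i}v_{-i}+b_i\leq p_i(v_{-i})$ with the feasibility requirement $p_i(v_{-i})\geq 0$ gives $\tilde p_i(v_{-i})=\max\{\lambda^*_{-i}v_{-i}+b_i,0\}\leq p_i(v_{-i})$. Hence if $v_i<\tilde p_i(v_{-i})$ for all $i$ then also $v_i<p_i(v_{-i})$ for all $i$, so $W_0(\tilde p)\subseteq W_0(p)$. Taking the infimum of $-\lambda^* v$ over the smaller set yields a weakly larger number, i.e. $B(\tilde p)\geq B(p)$.

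The two inequalities together give $\min\{A(\tilde p),B(\tilde p)\}\geq \min\{A(p),B(p)\}$, and adding $\lambda^* m$ to both sides delivers the claim. I do not expect serious obstacles: the argument rests entirely on the pointwise lower bound $\tilde p_i(v_{-i})\geq \lambda^*_{-i}v_{-i}+b_i$ (useful for the $A$-term) and the pointwise upper bound $\tilde p_i\leq p_i$ (useful for the $B$-term, via set containment), both of which follow directly from the definition of $b_i$ and the nonnegativity of $p_i$. Note also that nothing here uses the supply constraint, consistent with the remark that $R(\cdot,\lambda)$ is well-defined on infeasible tuples $\tilde p$.
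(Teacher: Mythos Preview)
Your proposal is correct and follows essentially the same approach as the paper: you compare the two infima separately, using the pointwise bound $\tilde p_i\leq p_i$ (hence $W_0(\tilde p)\subseteq W_0(p)$) for the $B$-term and the bound $\tilde p_i(v_{-i})-\lambda^*_{-i}v_{-i}\geq b_i$ for the $A$-term. The only cosmetic difference is that the paper actually shows the $A$-infima are equal, whereas you establish only the inequality $A(\tilde p)\geq A(p)$, which is all that is needed here.
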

\begin{proof}
It is sufficient to prove that each inner infimum in \eqref{reducedproblem1} is weakly greater under $\tilde{p}$ than 
under $p$. 

First, consider $\inf\limits_{v\in W_0(p)}(-\lambda v)$. Note that one can take $w=v_{-i}$ in the inner minimization in \eqref{tildepdef} and hence $p_i(\cdot)\geq \tilde{p}_i(\cdot)$. 
Thus, $W_0(\tilde{p})\subseteq W_0(p)$ and $\inf\limits_{v\in W_0(\tilde{p})}(-\lambda v)\geq \inf\limits_{v\in W_0(p)}(-\lambda v)$.

Now we prove that for all $i$, $\inf\limits_{v_{-i}\in[0,\vmax]^{n-1}}(p_i(v_{-i})-\lambda^*_{-i}v_{-i})=\inf\limits_{v_{-i}\in[0,\vmax]^{n-1}}(\tilde{p}_i(v_{-i})-\lambda^*_{-i}v_{-i})$. In this equation, $LHS\geq RHS$ because $p_i(\cdot)\geq \tilde{p}_i(\cdot)$. However, $LHS\leq RHS$ as well because by \eqref{tildepdef}, for every $v_{-i}$, $\tilde{p}_i(v_{-i})-\lambda^*_{-i}v_{-i}\geq \inf\limits_{w\in[0,\vmax]^{n-1}}(p_i(w)-\lambda^*_{-i} w)=LHS$. Hence, $LHS=RHS$.
\eop
\end{proof}

\begin{figure}[h!]
\centering
\begin{tikzpicture}[>=triangle 45, xscale=4,yscale=4]
\draw[->] (0,0) -- (1.1,0) node[below] {$v_1$};
\draw[->] (0,0) -- (0,1.1) node[left] {$v_2$};
\draw[dashed] (1,0)--(1,1);
\draw[dashed] (0,1)--(1,1);
\node[below] at (1,0){$\vmax$};
\node[left] at (0,1){$\vmax$};

\draw[thick, blue, domain=0:1]
plot (\x, {0.5+pow((\x-0.5),2)});

\draw[thick,red] plot [smooth,tension=1] coordinates{(0.5, 0.25)(0.75,0.5)(1,0.6)};

\draw[thick, red, domain=0.2:0.3]
plot (\x, {0.25-25*pow((\x-0.3),2)});

\draw[thick,red] (1,0.6)--(1,1); 
\draw[->](1.2,0.5)--(1.6,0.5);
\end{tikzpicture}
\begin{tikzpicture}[>=triangle 45, xscale=4,yscale=4]
\draw[->] (0,0) -- (1.1,0) node[below] {$v_1$};
\draw[->] (0,0) -- (0,1.1) node[left] {$v_2$};
\draw[dashed] (1,0)--(1,1);
\draw[dashed] (0,1)--(1,1);
\node[below] at (1,0){$\vmax$};
\node[left] at (0,1){$\vmax$};

\draw[dashed, blue, domain=0:1]
plot (\x, {0.5+pow((\x-0.5),2)});
\draw[thick, blue, domain=0:1]
plot (\x, {0.1875+0.5*\x});

\draw[dashed,red] plot [smooth,tension=1] coordinates{(0.5, 0.25)(0.75,0.5)(1,0.6)};

\draw[dashed, red, domain=0.2:0.3]
plot (\x, {0.25-25*pow((\x-0.3),2)});

\draw[thick,red] (0.05254860746,0)--(1,1);
\draw[dashed, red, domain=0.2:0.2788897449]
plot (\x, {0.25-25*pow((\x-0.3),2)});
\draw[dashed,red] (1,0.6)--(1,1); 

\end{tikzpicture}
\caption{Step 1. Taking  the transformation $p\to\tilde{p}$ for $n=2$. On the left display an arbitrary $p_1(v_2)$ is in red and an arbitrary $p_2(v_1)$ is in blue. On the right display, the resulting $\tilde{p}$-functions are solid and the initial $p$-functions are dashed. $\lambda_1=0.5$ and $\lambda_2\approx 1.055$.}\label{fig:step1}
\end{figure}
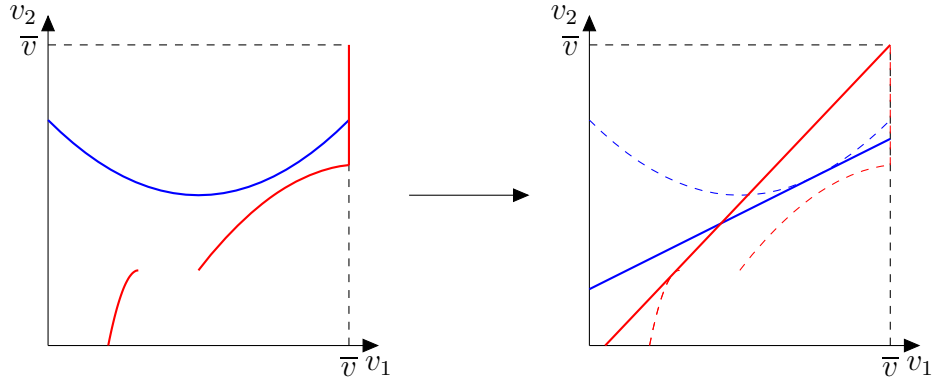

\textbf{Step 2.} In this step, given $\tilde{p}$ we find an LSA that gives revenue no less than $R(\tilde{p},\lambda^*)$ and hence, $R(p,\lambda^*)$. 

To this end, we use a fixed point of the map $v\to \tilde{p}(v)$ (The map defined for each $i$ by $v_i=\tilde{p}_i(v_{-i})$.) Indeed, this is a continuous map from $[0,\vmax]^n$ to itself, and thus, by Brouwer Theorem a fixed point exists. Denote the set of fixed points by $V^*\neq\emptyset$.

We proceed by considering three mutually exclusive and exhaustive cases concerning the structure of $V^*$. 

\textbf{Case 1.} There exists an interior fixed point, but there are no fixed points with at least one of coordinates equal to zero, i.e. $V^*\cap (0,\vmax)^n\neq\emptyset$ and $v^*>0$ for all $v^*\in V^*$.  

\textbf{Case 2.} There exists a fixed point with at least one coordinate equal to zero, i.e. $\exists v^*\in V^*: v^*_i=0$ for some $i\in\{1,\ldots,n\}$.

\textbf{Case 3.} There are no interior fixed points and no fixed points with at least one of coordinates equal to zero, i.e. for all $v^*\in V$ $v^*>0$ and $\exists i: v^*_i=\vmax$. 

\vspace{1.5em}
\textbf{Case 1.} There exists an interior fixed point, but there are no fixed points with at least one of coordinates equal to zero, i.e. $V^*\cap (0,\vmax)^n\neq\emptyset$ and $v^*>0$ for all $v^*\in V^*$. 

Take any interior fixed point $v^*$. It satisfies the equations  $v_i=\tilde{p}_i(v_{-i})$ for all $i$. As $v^*$ is interior, $v^*>0$, and thus, by \eqref{tildepdef}, it satisfies the system of linear equations
\beq\label{system}
A(\lambda^*)v=b,
\eeq
where the matrix $A(\lambda)$ of size $n$ is given by 
\beq\label{Adef}
a_{ij}(\lambda)=
\begin{cases}
1, &  i=j;\\ 
-\lambda_j,  & i\neq j,
\end{cases}
\eeq
while $b_i:=\inf\limits_{w\in[0,\vmax]^{n-1}}(p_i(w)-\lambda^*_{-i} w)$, $i=1,\ldots,n$.

Next Lemma shows that, given the supposition in Case 1, matrix $A(\lambda^*)$ should be of full rank.

\begin{lemma}\label{structsetoffixedpoints}
If matrix $A(\lambda^*)$ is singular and there exists a positive fixed point $v^*\in V^*$, then $V^*$ also contains a fixed point with a zero coordinate. 
\end{lemma}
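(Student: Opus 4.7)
The plan is to exploit the singularity of $A(\lambda^*)$ by finding a nonzero null vector and traveling from $v^*$ in a direction that decreases every coordinate, until some coordinate becomes zero. The resulting point will still satisfy the fixed-point equation because, on the nonnegative orthant, the linear system $A(\lambda^*) v = b$ coincides with the fixed-point equation $v = \tilde p(v)$.

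First I would observe that since $v^* > 0$, each $\tilde p_i$ attains its max at its first argument, so $v^*$ solves $A(\lambda^*) v = b$; singularity then produces a nonzero $d$ with $A(\lambda^*) d = 0$. The key structural claim to prove next is that any such $d$ has components of strictly the same sign. Setting $S := \sum_j \lambda^*_j d_j$, the equation $A(\lambda^*) d = 0$ rearranges to $(1 + \lambda^*_i) d_i = S$ for each $i$; since $\lambda^*_i \geq 0$ in Grand Case I, $1 + \lambda^*_i > 0$, so all $d_i$ share the sign of $S$ (with $S = 0$ forcing $d = 0$). Replacing $d$ by $-d$ if needed, I may take $d > 0$ componentwise and set $t^* := \min_i v^*_i / d_i > 0$. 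Then $v^{**} := v^* - t^* d$ lies in $[0,\vmax]^n$ (nonnegative by the choice of $t^*$, and bounded above by $v^* \leq \vmax$ because $d > 0$) and has at least one zero coordinate. Since $A(\lambda^*) v^{**} = b$ and $v^{**} \geq 0$, for each $i$ we get $v^{**}_i = \lambda^*_{-i} v^{**}_{-i} + b_i \geq 0$, so $\tilde p_i(v^{**}_{-i}) = \max\{\lambda^*_{-i} v^{**}_{-i} + b_i,\, 0\} = v^{**}_i$; hence $v^{**} \in V^*$ with a zero coordinate, as required.

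The main obstacle I anticipate is the sign-analysis in the second step. Without it, a null direction could have mixed signs, so traveling from $v^*$ along it might hit an upper boundary $v_j = \vmax$ before any lower boundary, and the desired conclusion would fail. The identity $(1+\lambda^*_i) d_i = S$, combined with $\lambda^* \geq 0$ from Grand Case I, is precisely what forces all components of a null vector to be strictly comonotone and rules this out.
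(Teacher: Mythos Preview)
Your proof is correct and follows essentially the same approach as the paper's: find a null vector of $A(\lambda^*)$, show via the identity $(1+\lambda^*_i)d_i = S$ that its components are all of the same strict sign, and then slide from $v^*$ along the null direction until a coordinate hits zero. Your argument is in fact slightly more explicit than the paper's in two places---the derivation of the sign identity and the final verification that the resulting point is a fixed point of $\tilde p$ (not merely a solution of $Av=b$)---but the structure is identical.
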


\begin{proof}
Suppose matrix $A$ is singular. Then it follows from basic linear algebra that the set of solutions to the system \eqref{system} includes the affine subspace $\{v^*+a \hat{v}|a\in\mathbb{R}\}$ where $\hat{v}$ is some nonzero solution to the homogeneous system $Av=0$. 
By subtracting rows of $A$ it is easy to prove that, given $\lambda^*_i\geq 0$, if $\hat{v}$ is a solution to $Av=0$, all coordinates of $\hat{v}$ must be of the same sign. Without loss of generality, take the coordinate-wise positive solution $\hat{v}$ and consider $a=-\min\limits_i(v^*_i/\hat{v}_i)$. Then,  $\tilde{v}$ where $\tilde{v}_i=v^*_i-\min\limits_i(v^*_i/\hat{v}_i)\hat{v}_i$ is a solution to \eqref{system} which is nonnegative in all coordinates and equal to zero in at least one coordinate. Also, it is less or equal to $v^*$ in every coordinate, so $\tilde{v}\in[0,\vmax]^n$. Hence, $\tilde{v}$ is the desired element of $V^*$ that has a zero coordinate.  
\eop
\end{proof}

Thus, as there are no fixed points with at least one of coordinates equal to zero by assumption in Case 1, matrix $A(\lambda^*)$ must be of full rank. Hence, $V^*\cap(0,\vmax)^n$ is a singleton.

Now take the unique point $v^*\in V^*\cap(0,\vmax)^n$ and consider $LSA(v^*)$. Denote by $\hat{p}$ the threshold functions corresponding to $LSA(v^*)$ (they are given by \eqref{p-lsa}).  We prove that $LSA(v^*)$ gives revenue no smaller than the the tuple of threshold functions $\tilde{p}$.

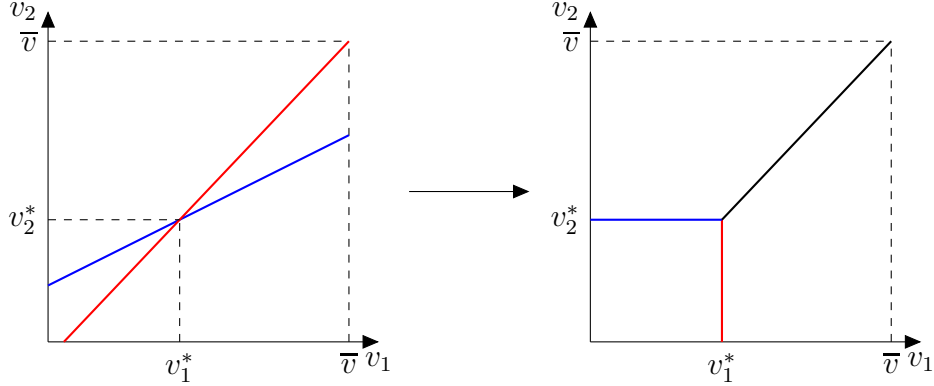
\begin{figure}[h!]
\centering
\begin{tikzpicture}[>=triangle 45, xscale=4,yscale=4]
\draw[->] (0,0) -- (1.1,0) node[below] {$v_1$};
\draw[->] (0,0) -- (0,1.1) node[left] {$v_2$};
\draw[dashed] (1,0)--(1,1);
\draw[dashed] (0,1)--(1,1);
\node[below] at (1,0){$\vmax$};
\node[left] at (0,1){$\vmax$};

\draw[thick, blue, domain=0:1]
plot (\x, {0.1875+0.5*\x});

\draw[thick,red] (0.05254860746,0)--(1,1);

\draw[dashed](0.437406,0)--(0.437406,0.406203);
\draw[dashed](0,0.406203)--(0.437406,0.406203);
\node[below] at (0.437406,0){$v^*_1$};
\node[left] at (0,0.406203){$v^*_2$};

\draw[->](1.2,0.5)--(1.6,0.5);
\end{tikzpicture}
\begin{tikzpicture}[>=triangle 45, xscale=4,yscale=4]
\draw[->] (0,0) -- (1.1,0) node[below] {$v_1$};
\draw[->] (0,0) -- (0,1.1) node[left] {$v_2$};
\draw[dashed] (1,0)--(1,1);
\draw[dashed] (0,1)--(1,1);
\node[below] at (1,0){$\vmax$};
\node[left] at (0,1){$\vmax$};

\draw[red,thick](0.437406,0)--(0.437406,0.406203);
\draw[blue,thick](0,0.406203)--(0.437406,0.406203);
\draw [thick] (0.437406,0.406203)--(1,1);
\node[below] at (0.437406,0){$v^*_1$};
\node[left] at (0,0.406203){$v^*_2$};
\end{tikzpicture}

\caption{Step 2. Taking  the transformation $\tilde{p}\to \hat{p}$ (LSA($v^*$)) for $n=2$. On the left display the fixed point $v^*$ of the map $v\to\tilde{p}(v)$ is identified. On the right display, the auction $LSA(v^*)$ is shown.}\label{fig:step2}
\end{figure}

\begin{proposition}\label{LSAbetter}
$R(\hat{p},\lambda^*)\geq R(\tilde{p},\lambda^*)$ in Case 1.
\end{proposition}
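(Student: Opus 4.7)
The plan is to establish the pointwise inequality $\hat{p}_i(v_{-i}) \geq \tilde{p}_i(v_{-i})$ for every $i$ and every $v_{-i} \in [0,\vmax]^{n-1}$, and then leverage this to bound both inner infima in \eqref{reducedproblem1}. Pointwise domination gives immediately $W_0(\hat p) \subseteq W_0(\tilde p)$, so $\inf_{v \in W_0(\hat p)}(-\lambda^* v) \geq \inf_{v \in W_0(\tilde p)}(-\lambda^* v)$. For each $i$, $\hat p_i(v_{-i}) - \lambda^*_{-i} v_{-i} \geq \tilde p_i(v_{-i}) - \lambda^*_{-i} v_{-i} \geq b_i$, and equality is attained at $v_{-i} = v^*_{-i}$ since the fixed-point equation gives $\hat p_i(v^*_{-i}) = v^*_i = \lambda^*_{-i} v^*_{-i} + b_i$; hence $\inf_{v_{-i}} (\hat p_i(v_{-i}) - \lambda^*_{-i} v_{-i}) = b_i$, matching the value for $\tilde p$. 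Combining, $R(\hat p, \lambda^*) \geq R(\tilde p, \lambda^*)$.

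To prove $\hat p_i \geq \tilde p_i$ pointwise I split into cases based on the sign pattern of $v_{-i} - v^*_{-i}$. If $v_j \leq v^*_j$ for all $j \neq i$, then $\hat p_i(v_{-i}) = v^*_i$ while, using $\lambda^*_j \geq 0$ (Grand Case~I) and the fixed-point identity, $\tilde p_i(v_{-i}) \leq \lambda^*_{-i} v^*_{-i} + b_i = v^*_i$. If instead some $v_j > v^*_j$ for $j \neq i$, let $j_0 \in \argmax_{j \neq i} \frac{v_j - v^*_j}{\vmax - v^*_j}$ index the coordinate that determines the LSA threshold; after substituting $v^*_i = \lambda^*_{-i} v^*_{-i} + b_i$, the desired inequality reduces to
\[
\frac{\vmax - v^*_i}{\vmax - v^*_{j_0}}(v_{j_0} - v^*_{j_0}) \;\geq\; \sum_{j \neq i} \lambda^*_j (v_j - v^*_j).
\]

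I would prove this in two steps. First, the defining property of the argmax gives $v_j - v^*_j \leq \frac{\vmax - v^*_j}{\vmax - v^*_{j_0}}(v_{j_0} - v^*_{j_0})$ for every $j \neq i$ (directly when $v_j \geq v^*_j$, trivially when $v_j < v^*_j$ since the right side is nonnegative). Multiplying by $\lambda^*_j \geq 0$ and summing,
\[
\sum_{j \neq i} \lambda^*_j (v_j - v^*_j) \;\leq\; \frac{v_{j_0} - v^*_{j_0}}{\vmax - v^*_{j_0}} \sum_{j \neq i} \lambda^*_j (\vmax - v^*_j).
\]
Second, it remains to show $\sum_{j \neq i} \lambda^*_j (\vmax - v^*_j) \leq \vmax - v^*_i$, which, using $v^*_i = \sum_{j \neq i} \lambda^*_j v^*_j + b_i$, simplifies to $\vmax(1 - \sum_{j \neq i} \lambda^*_j) \geq b_i$. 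This follows by evaluating the infimum defining $b_i$ at $w = \vmax \i_{n-1}$: $b_i \leq p_i(\vmax \i_{n-1}) - \vmax \sum_{j \neq i} \lambda^*_j \leq \vmax - \vmax \sum_{j \neq i} \lambda^*_j$, the last step using $p_i(\cdot) \leq \vmax$.

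The main obstacle is the displayed inequality in general dimension: for $n = 2$ it collapses to a single scalar comparison, but for $n \geq 3$ the argmax bounds the gaps $v_j - v^*_j$ only after weighting by $(\vmax-v^*_j)$, and the clean reduction is possible only by exploiting both the nonnegativity of $\lambda^*$ (Grand Case~I) and the ``corner-hitting'' normalization of the LSA thresholds. This is precisely why the construction must pass through a fixed point of $v \mapsto \tilde p(v)$ and not an arbitrary point: the fixed-point identity is what links $b_i$, $v^*$ and $\lambda^*$ in a way that makes the bound $\sum_{j \neq i} \lambda^*_j (\vmax - v^*_j) \leq \vmax - v^*_i$ sharp enough to close the argument.
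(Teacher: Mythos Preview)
Your pointwise inequality $\hat p_i \geq \tilde p_i$ and the argument for the infima over $v_{-i}$ are correct and essentially coincide with the paper's Lemma~\ref{LSAfunctionsgreater} (the paper packages the key bound $\sum_{j\neq i}\lambda^*_j(\vmax-v^*_j)\leq\vmax-v^*_i$ as an auxiliary LP, but your direct argument via $b_i\leq p_i(\vmax\i_{n-1})-\vmax\sum_{j\neq i}\lambda^*_j$ is equivalent).

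The gap is in the $W_0$ comparison. You write that pointwise domination $\hat p\geq\tilde p$ gives $W_0(\hat p)\subseteq W_0(\tilde p)$; in fact the inclusion goes the other way. Since $W_0(p)=\{v:v_i<p_i(v_{-i})\text{ for all }i\}$, raising the threshold functions \emph{enlarges} this set, so $W_0(\tilde p)\subseteq W_0(\hat p)$ and hence $\inf_{v\in W_0(\hat p)}(-\lambda^*v)\leq\inf_{v\in W_0(\tilde p)}(-\lambda^*v)$---the wrong direction for your purposes.

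The paper closes this by showing that the two infima are in fact equal to $-\lambda^*v^*$. For $\hat p$ this is immediate from the LSA threshold formulas. For $\tilde p$, one must show that $v^*$ is a limit point of $W_0(\tilde p)$: the paper constructs the sequence $v^k=A(\lambda^*)^{-1}(b-\i_n/k)$, which satisfies $A(\lambda^*)v^k<b$ (so $v^k_i<\tilde p_i(v^k_{-i})$ for all $i$) and $v^k\to v^*$. This construction requires $A(\lambda^*)$ to be invertible, which is precisely what the Case~1 hypothesis (no fixed point with a zero coordinate) guarantees via Lemma~\ref{structsetoffixedpoints}. Without this step, the Case~1 assumption plays no role in your argument---a sign that something is missing.
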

\begin{proof}
Again, it is sufficient to prove that each inner infimum in \eqref{reducedproblem1} is weakly greater under $\hat{p}$ than under $\tilde{p}$.

First, consider $\inf\limits_{v_{-i}\in[0,\vmax]^{n-1}}(\tilde{p}_i(v_{-i})-\lambda^*_{-i}v_{-i})$. It is sufficient to show that $\hat{p}\geq\tilde{p}$ at every point. 
To this end, note that by definition of $v^*$ and $\tilde{p}$ \eqref{tildepdef}, $\tilde{p}_i(v_{-i})$ may be rewritten as
\beq\label{fp-representation}
\tilde{p}_i(v_{-i})=\max\{v_i^*+\lambda_{-i}^*(v_{-i}-v^*_{-i}),0\}.
\eeq

Thus, we need to prove the inequality
\beq\label{ineqtoprove}
\max\{v_i^*+\lambda_{-i}^*(v_{-i}-v^*_{-i}),0\}\leq 
v_i^*+(\vmax-v^*_i)\max\left\{0,\max\limits_{j\neq i}\frac{v_{j}-v^*_{j}}{\vmax-v^*_{j}}\right\}
\eeq
where the right-hand side is the expression for  $\hat{p}$. Note that because $v^*\in (0,\vmax)^n$, no bidder is excluded from the auction LSA($v^*$).

We will prove a slightly more general lemma that will be also of help later.

\begin{lemma}\label{LSAfunctionsgreater}
Suppose $\chi_{-i}^*\in \mathbb{R}^{n-1}_+$ and $v^*\in[0,\vmax]^n$ are such that $v_i^*+\chi^*_{-i}(\vmax\i_{n-1}-v^*_{-i})\leq\vmax$ for all $i$. Denote by $I(v^*)$ the set of indices $i$ such that $v^*_i<\vmax$. Then, for all $i$ and for all $v_{-i}\in[0,\vmax]^{n-1}$,
\beq\label{LSAfunctionsgreater1}
\max\{v_i^*+\chi^*_{-i}(v_{-i}-v^*_{-i}),0\}\leq 
v_i^*+(\vmax-v^*_i)\max\left\{0,\max\limits_{j\in I(v^*)\setminus\{i\}}\frac{v_{j}-v^*_{j}}{\vmax-v^*_{j}}\right\}.
\eeq 
\end{lemma}
\begin{proof}
First, note that if the LHS of \eqref{LSAfunctionsgreater1} is 0, the inequality holds. 

Suppose it is not zero. Then note that
\beq\label{22} 
v_i^*+\chi^*_{-i}(v_{-i}-v^*_{-i})\leq v_i^*+\sum\limits_{j\in I(v^*)\setminus\{i\}}\chi^*_j(v_j-v^*_j)
\eeq
because $\chi^*_j(v_j-v^*_j)\leq 0$ for $j\notin I(v^*)$. 

Now consider an auxiliary linear optimization problem in $|I(v^*)|-1$ variables 
\begin{align}
\max\limits_{\chi_j\geq 0} &\mbox{ } v_i^*+ \sum_{j\in I(v^*)\setminus\{i\}} \chi_j(v_j-v^*_{j})\label{auxproblem}\\
\mbox{s.t. } & v^*_i+\sum_{j\in I(v^*)\setminus\{i\}} \chi_j(\vmax-v_j^*)\leq \vmax \label{auxconstr}
\end{align}
Whenever $(v_j-v^*_{j})>0$ for at least one $j\in I(v^*)\setminus\{i\}$, the problem \eqref{auxproblem} is solved by putting all weight on the variable with the highest ``bang for the buck'', i.e.
$$\chi_j=\frac{\vmax-v^*_i}{\vmax-v^*_j}$$ for 
$$j\in\arg\max_{j\in I(v^*)\setminus\{i\}}\frac{v_j-v_j^*}{\vmax-v_j^*}$$ and
$\chi_j=0$ for other $j$. 
If $(v_j-v^*_{j})\leq 0$ for all $j$, one should set $\chi_j=0$ for all $j$. Hence, the maximum value of problem \eqref{auxproblem} is equal precisely to the RHS of \eqref{LSAfunctionsgreater1}. On the other hand, 
$\chi^*$ is feasible for the problem \eqref{auxproblem} as \eqref{auxconstr} holds for $\chi^*$ due to the lemma's supposition, and the value of the objective function at $\chi^*$ is precisely the RHS of \eqref{22}.\eop
\end{proof}

Now note that the supposition of lemma \ref{LSAfunctionsgreater} holds for $\chi^*=\lambda^*$ and the unique $v^*\in V^*$ due to the fact that $\tilde{p}_i(\vmax\i_{n-1})\leq\vmax$. Also, $I(v^*)=\{1,2,\ldots,n\}$. Thus, applying lemma \ref{LSAfunctionsgreater} to $\chi^*=\lambda^*$ and $v^*$   yields the desired inequality \eqref{ineqtoprove}.

Second, consider  $\inf\limits_{v\in W_0(p)}(-\lambda^* v)$. We will prove that $\inf\limits_{v\in W_0(\hat{p})}(-\lambda^* v)=\inf\limits_{v\in W_0(\tilde{p})}(-\lambda^* v)$.

Because, as we proved above,  $\hat{p}\geq \tilde{p}$, $W_0(\tilde{p})\subseteq W_0(\hat{p})$ and hence, $LHS\leq RHS$. Note that $LHS=-\lambda^*v^*$ by the construction of $LSA(v^*)$ and positivity of $\lambda$. To prove that $LHS\geq RHS$, we prove that $v^*$ is a limit point of the set $W_0(\tilde{p})$. 

Consider again the system \eqref{system}. Because $v^*$ is the unique solution to it, matrix $A$ must be of full rank, and so $A^{-1}$ exists. Consider the sequence $v^k=A^{-1}(b-\i_{n}/k)$ where $\i_n$ is a vector of all ones of dimension $n$. $v^k$ may not lie in $(0,\vmax)^n$ for all $k$. However, $v^k$ approaches $v^*$. Hence, it lies, after some $k$, in $(0,\vmax)^n$ and, as $Av^k<b$ coordinate-wise, in $W_0(\tilde{p})$. 
\eop
\end{proof}

\vspace{1.5em}
\textbf{Case 2.} There exists a fixed point with at least one coordinate equal to zero, i.e. $\exists v^*\in V^*: v^*_i=0$ for some $i$.

In this case, we will again prove that LSA($v^*$) (again, call the respective threshold functions $\hat{p}$) provides a weakly higher revenue than $\tilde{p}$. The major novelty in this case is that because  $v^*_i=0$, $v^*$ may not satisfy the system \eqref{system} (recall that $v_i^*=\tilde{p}_i(v_{-i}^*)=\max\{\lambda^*_{-i}v_{-i}^*+b_i,0\}$ and so we only have that $v_i^*\geq \lambda^*_{-i}v_{-i}^*+b_i$ if $v_i^*=0$). Due to this, the representation \eqref{fp-representation} won't hold for $\tilde{p}$ and one is not able to apply Lemma \ref{LSAfunctionsgreater} directly.

To deal with this issue, we introduce auxiliary functions 
\beq \label{p_aux}
p^{aux}_i(v_{-i}):=\max\{v_i^*+k_i\lambda^*_{-i}(v_{-i}-v^*_{-i}),0\},
\eeq
where $k_i$ are individual-specific coefficients set in a way that the functions $p^{aux}_i$ satisfy conditions of Lemma \ref{LSAfunctionsgreater}, so $\hat{p}\geq p^{aux}$, and simultaneously ensure that $p^{aux}\geq \tilde{p}$. 

That is, we prove the inequality $\hat{p}\geq \tilde{p}$ not directly, as in Case 1, but, if necessary, via the chain $\hat{p}\geq p^{aux}\geq \tilde{p}$.

\begin{proposition}\label{LSAbetter2}
$R(\hat{p},\lambda^*)\geq R(\tilde{p},\lambda^*)$ in Case 2.
\end{proposition}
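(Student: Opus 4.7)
The plan is to mimic the structure of Case 1 but to interpose the auxiliary functions $p^{aux}_i$ from \eqref{p_aux} between $\tilde{p}$ and the threshold functions $\hat{p}$ of $LSA(v^*)$, establishing the chain $\hat{p}_i \geq p^{aux}_i \geq \tilde{p}_i$ pointwise for every $i$ via a careful choice of the $k_i$. This immediately yields $\inf_{v_{-i}}(\hat{p}_i(v_{-i}) - \lambda^*_{-i} v_{-i}) \geq \inf_{v_{-i}}(\tilde{p}_i(v_{-i}) - \lambda^*_{-i} v_{-i})$ for each $i$, handling the threshold terms in \eqref{reducedproblem1}; the $W_0$ term will be essentially trivial.

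For each $i$ with $v^*_i > 0$ the fixed-point equation forces $v^*_i = \lambda^*_{-i} v^*_{-i} + b_i$, so taking $k_i = 1$ makes $p^{aux}_i \equiv \tilde{p}_i$, and the hypothesis of Lemma~\ref{LSAfunctionsgreater} reduces to $\tilde{p}_i(\vmax\i_{n-1}) \leq \vmax$, which is automatic. For each $i$ with $v^*_i = 0$, only the inequality $\lambda^*_{-i} v^*_{-i} + b_i \leq 0$ holds, which is precisely why Case~1's argument fails. I would compensate by choosing
\[k_i = \frac{\vmax}{\lambda^*_{-i}(\vmax\i_{n-1} - v^*_{-i})}\]
(in the generic case; the degenerate case $\lambda^*_{-i}(\vmax\i_{n-1} - v^*_{-i}) = 0$ makes $\tilde{p}_i$ constant in $v_{-i}$ and admits a separate, simpler argument). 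With this choice $k_i\lambda^*_{-i}(\vmax\i_{n-1}-v^*_{-i}) = \vmax$, the hypothesis of Lemma~\ref{LSAfunctionsgreater} is saturated with $\chi^*_{-i} := k_i\lambda^*_{-i}$, and the lemma delivers $\hat{p}_i \geq p^{aux}_i$.

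To verify $p^{aux}_i \geq \tilde{p}_i$ for such $i$, I would reparametrize by $\xi := \lambda^*_{-i} v_{-i}$ and note that on the region $\xi > -b_i$ where $\tilde{p}_i > 0$ the inequality reduces to the affine statement $(k_i - 1)\xi \geq k_i\lambda^*_{-i} v^*_{-i} + b_i$. By linearity it suffices to check the two endpoints of the relevant interval: at $\xi = -b_i$ it collapses (after dividing by $k_i > 0$) to $-b_i \geq \lambda^*_{-i} v^*_{-i}$, which is exactly the fixed-point inequality characterizing $v^*_i = 0$; at $\xi = \lambda^*_{-i}\vmax\i_{n-1}$ it collapses, after using the defining relation $k_i\lambda^*_{-i}(\vmax\i_{n-1}-v^*_{-i}) = \vmax$, to $\tilde{p}_i(\vmax\i_{n-1}) \leq \vmax$, which again holds universally.

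The $W_0$ term requires no extra work: the Case~1 computation shows $W_0(\hat{p}) \subseteq \prod_j [0, v^*_j)$, and since $v^*_i = 0$ for some $i$ in Case~2, this Cartesian product is empty, so $\inf_{v \in W_0(\hat{p})}(-\lambda^* v) = +\infty$ and trivially dominates the corresponding infimum for $\tilde{p}$. The main obstacle is that the two desired pointwise dominations pull $k_i$ in opposite directions — Lemma~\ref{LSAfunctionsgreater} bounds $k_i$ from above while domination of $\tilde{p}_i$ effectively bounds it from below — and the simultaneous solvability of these constraints depends delicately on the universal bound $\tilde{p}_i(\vmax\i_{n-1}) \leq \vmax$, which makes the interval of admissible $k_i$ non-empty.
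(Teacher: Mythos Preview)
Your proposal is correct and follows essentially the same architecture as the paper: interpose $p^{aux}_i$ between $\tilde p_i$ and $\hat p_i$, invoke Lemma~\ref{LSAfunctionsgreater} for the upper inequality, and observe that $W_0(\hat p)=\emptyset$ because some $v^*_i=0$. The differences are purely in implementation. The paper chooses $k_i=\dfrac{\lambda^*_{-i}\vmax\i_{n-1}+b_i}{\lambda^*_{-i}(\vmax\i_{n-1}-v^*_{-i})}$, which makes $p^{aux}_i$ and $\tilde p_i$ agree at $v_{-i}=\vmax\i_{n-1}$, and then verifies $p^{aux}_i\ge \tilde p_i$ by writing any $v_{-i}$ with $\lambda^*_{-i}v_{-i}\ge \lambda^*_{-i}v^*_{-i}$ as a convex combination of a point $u$ with $\lambda^*_{-i}u=\lambda^*_{-i}v^*_{-i}$ and $\vmax\i_{n-1}$, applying convexity of $\tilde p_i$. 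Your choice instead saturates the hypothesis of Lemma~\ref{LSAfunctionsgreater}, and your endpoint check of the affine inequality in $\xi=\lambda^*_{-i}v_{-i}$ is a more elementary substitute for the paper's convexity decomposition; both work.

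One small imprecision: in the degenerate subcase $\lambda^*_{-i}(\vmax\i_{n-1}-v^*_{-i})=0$ you assert this ``makes $\tilde p_i$ constant in $v_{-i}$.'' The condition alone does not force constancy (some $\lambda^*_j$ may be positive with $v^*_j=\vmax$, so $\tilde p_i$ ostensibly varies with those $v_j$). What is true---and what the paper spells out---is that this condition implies $\lambda^*_{-i}v_{-i}\le \lambda^*_{-i}v^*_{-i}$ for every $v_{-i}\in[0,\vmax]^{n-1}$; combined with $0=v^*_i=\max\{\lambda^*_{-i}v^*_{-i}+b_i,0\}$ this forces $\lambda^*_{-i}v_{-i}+b_i\le 0$ everywhere, whence $\tilde p_i\equiv 0$. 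Your ``separate, simpler argument'' is therefore correct in spirit, but the justification you give needs this extra step.
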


\begin{proof}
See \hyperref[app]{Appendix}\eop
\end{proof}

\vspace{1.5em}
\textbf{Case 3.} There are no interior fixed points and no fixed points with at least one of coordinates equal to zero, i.e. for all $v^*\in V$ $\exists i: v^*_i=\vmax$, and $v^*>0$. 

This is the hardest case to consider due to problems with comparing $\inf\limits_{v\in W_0(p)}(-\lambda^* v)$ for $\hat{p}$ and $\tilde{p}$. On the other hand, $W_0(\hat{p})\neq \emptyset$ as $v^*>0$, so the comparison via  \eqref{W_0comparison} cannot be made. On the other hand, as $v^*$ is not interior, a sequence approaching $v^*$ and lying in $W_0(\tilde{p})$ might not exist, so the argument made in Case 1 may not be valid. 

Whether or not $v^*$ is a limit point of the set $W_0(\tilde{p})$ now depends on the geometry of the $\tilde{p}_i$ functions in a neighborhood of the point, which is  governed by the properties of the matrix $A(\lambda^*)$. 

In two dimensions, this is illustrated in figure \ref{fig:2geom}.

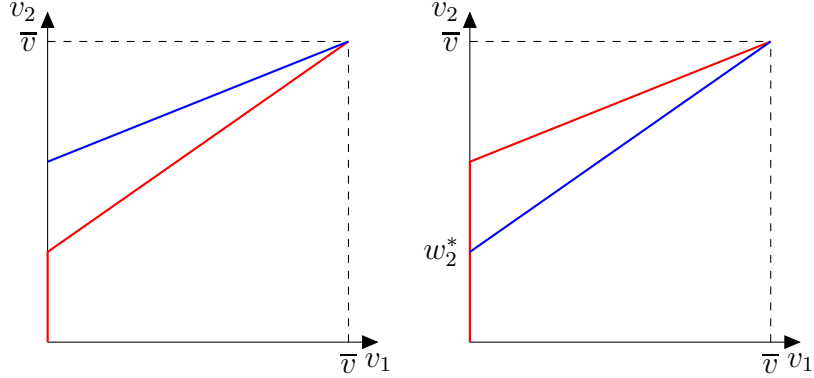
\begin{figure}[h!]
\centering
\begin{tikzpicture}[>=triangle 45, xscale=4,yscale=4]
\draw[->] (0,0) -- (1.1,0) node[below] {$v_1$};
\draw[->] (0,0) -- (0,1.1) node[left] {$v_2$};
\draw[dashed] (1,0)--(1,1);
\draw[dashed] (0,1)--(1,1);
\node[below] at (1,0){$\vmax$};
\node[left] at (0,1){$\vmax$};

\draw[red,thick] (0,0)--(0,0.3)--(1,1);
\draw[blue,thick] (0,0.6)--(1,1);

\end{tikzpicture}
\begin{tikzpicture}[>=triangle 45, xscale=4,yscale=4]
\draw[->] (0,0) -- (1.1,0) node[below] {$v_1$};
\draw[->] (0,0) -- (0,1.1) node[left] {$v_2$};
\draw[dashed] (1,0)--(1,1);
\draw[dashed] (0,1)--(1,1);
\node[below] at (1,0){$\vmax$};
\node[left] at (0,1){$\vmax$};

\draw[red,thick] (0,0)--(0,0.6)--(1,1);
\draw[blue,thick] (0,0.3)--(1,1);
\node[left] at (0,0.3){$w^*_2$};

\end{tikzpicture}

\caption{Two types of geometry of threshold functions around the fixed point $(\vmax,\vmax)$. In both pictures, $\tilde{p}_1(v_2) $ is in red and $\tilde{p}_2(v_1)$ is in blue. In the left picture, $\det(A)>0$ while in the right picture, $\det(A)<0$. Note that in the right picture there exists another fixed point $w^*$ with $w^*_1=0$.}\label{fig:2geom}
\end{figure} 

It turns out that in general the type of geometry is determined by the sign of $\det(A)$. Fortunately, $\det(A)$ can be computed explicitly. The formula, along with other properties of $A$ to be used, is given in the following lemma.
\begin{lemma}\label{Aproperties}%
For a matrix $A(\lambda)$, given by \eqref{Adef}, with $\lambda\geq 0$, the following statements hold:
\begin{enumerate}
\item $\det(A)=\left(1-\sum\limits_{i=1}^n \frac{\lambda_i}{1+\lambda_i}\right)\prod\limits_{i=1}^n(1+\lambda_i)$.
\item If $\det(A)>0$, all elements of $A^{-1}$ are nonnegative. 
\item If $v^*$ is the unique solution to $Av=b$, then $\lambda v^*=\frac{\sum\limits_{i=1}^n \frac{\lambda_i}{1+\lambda_i}b_i}{1-\sum\limits_{i=1}^n \frac{\lambda_i}{1+\lambda_i}}.$
\item The rank of $A$ is at least $n-1$.
\end{enumerate}
\end{lemma}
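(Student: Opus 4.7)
The plan is to recognize that $A(\lambda)$ is a rank-one perturbation of a positive diagonal matrix and then read off all four statements from Sherman--Morrison and the matrix determinant lemma. Concretely, observe that
\[
A(\lambda) = D - \mathbf{1}\lambda^{\top}, \qquad D:=\mathrm{diag}(1+\lambda_1,\ldots,1+\lambda_n),
\]
where $\mathbf{1}$ is the column vector of ones and $\lambda^{\top}$ is the row vector $\lambda$. Indeed, the $(i,j)$ entry of $D-\mathbf{1}\lambda^{\top}$ equals $(1+\lambda_i)[i=j]-\lambda_j$, which reduces to $1$ on the diagonal and $-\lambda_j$ off-diagonal, matching \eqref{Adef}. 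Note also that $D$ is invertible with $D^{-1}\mathbf{1}$ having $i$-th entry $1/(1+\lambda_i)$, so $\lambda^{\top}D^{-1}\mathbf{1}=\sum_i \lambda_i/(1+\lambda_i)=:s$.

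For claim 1, I would invoke the matrix determinant lemma: $\det(D-\mathbf{1}\lambda^{\top})=\det(D)\cdot(1-\lambda^{\top}D^{-1}\mathbf{1})$, which is exactly $\prod_i(1+\lambda_i)\cdot(1-s)$. For claim 4, since $D$ has full rank $n$ and $\mathbf{1}\lambda^{\top}$ has rank at most $1$, subadditivity of rank under rank-one perturbations forces $\mathrm{rank}(A)\geq n-1$.

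For claims 2 and 3, when $\det(A)>0$ (equivalently $s<1$, since $\prod(1+\lambda_i)>0$) the matrix is invertible and Sherman--Morrison gives
\[
A^{-1}=D^{-1}+\frac{D^{-1}\mathbf{1}\lambda^{\top}D^{-1}}{1-s}, \qquad (A^{-1})_{ij}=\frac{[i=j]}{1+\lambda_i}+\frac{1}{(1+\lambda_i)(1+\lambda_j)}\cdot\frac{\lambda_j}{1-s}.
\]
Given $\lambda\geq 0$ and $1-s>0$, every entry on the right-hand side is nonnegative, proving claim 2. For claim 3, the same identity yields $\lambda^{\top}A^{-1}=\lambda^{\top}D^{-1}+\frac{s}{1-s}\lambda^{\top}D^{-1}=\frac{1}{1-s}\lambda^{\top}D^{-1}$, so that $\lambda v^{*}=\lambda^{\top}A^{-1}b=\frac{1}{1-s}\sum_i\frac{\lambda_i}{1+\lambda_i}b_i$, which is precisely the stated formula.

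I do not anticipate a real obstacle: once one spots the decomposition $A=D-\mathbf{1}\lambda^{\top}$, all four items are one-line consequences of standard rank-one matrix identities. The only care required is to note in claim 3 that uniqueness of $v^{*}$ together with $\lambda\geq 0$ implies $s\neq 1$, so the denominator is well-defined; the sign of $1-s$ is inherited from $\det(A)/\prod(1+\lambda_i)$, which is either strictly positive or strictly negative whenever $A$ is invertible.
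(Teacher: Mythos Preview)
Your proposal is correct. The decomposition $A=D-\mathbf{1}\lambda^{\top}$ is spot on, and the matrix determinant lemma and Sherman--Morrison deliver all four items immediately; your remark that claim~3 only requires $s\neq 1$ (not $s<1$) is the right observation for the sign-agnostic case.

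The paper, however, proceeds by direct row reduction rather than by recognizing the rank-one structure. For parts~1 and~4 it subtracts row~$j$ from row~$1$ in the homogeneous system $Av=0$ to get $(1+\lambda_1)v_1=(1+\lambda_j)v_j$, substitutes back to obtain $\bigl(1-\sum_i\lambda_i/(1+\lambda_i)\bigr)v_1=0$, and concludes that the kernel is at most one-dimensional (part~4) and nontrivial exactly when $s=1$; the determinant formula is then pinned down via multilinearity in the rows and the normalization $\det A(0)=1$. For parts~2 and~3 it similarly writes down the explicit solution $v_i(b)$ of $Av=b$ by row operations and reads off $A^{-1}_{ij}=\partial v_i/\partial b_j$ and $\lambda v^{*}$ from that expression. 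Your approach is more structural and shorter once one knows Sherman--Morrison, and it makes the nonnegativity in part~2 transparent entry-by-entry; the paper's approach is more elementary (no named identities), and has the side benefit of also displaying $v^{*}_i$ explicitly, which it then uses to compute $\lambda v^{*}$ as $(1+\lambda_1)v_1-b_1$. Either route is fine for the lemma as stated.
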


Now consider two subcases separately.

\vspace{1em}
\emph{Subcase 1.} $\sum\limits_{i=1}^n \frac{\lambda^*_i}{1+\lambda^*_i}<1$. 

In this subcase, $\det(A)>0$ and $\tilde{p}_i$ exhibit the same type of geometry as in figure \ref{fig:2geom}, left.

In this case, we show that, as usual, $LSA(v^*)$ dominates $\tilde{p}$ (one can take any $v^*\in V^*$). For bidders with $v^*_i<\vmax$, the proof that $\hat{p}_i\geq \tilde{p}_i$ is entirely as in Cases 1 and 2 (by Lemma \ref{LSAfunctionsgreater}), since $v^*>0$. If $v^*_i=\vmax$, $\hat{p}_i\equiv\vmax\geq \tilde{p}_i$. 

Now consider $\inf\limits_{v\in W_0(p)}(-\lambda^* v)$. As in Case 1, $\inf\limits_{v\in W_0(\hat{p})}(-\lambda^* v)=-\lambda^*v^*$ by the definition of threshold functions for the linear score auction.  We show that $v^*$ is a limit point of the set $W_0(\tilde{p})$ so $\inf\limits_{v\in W_0(\hat{p})}(-\lambda^* v)\geq \inf\limits_{v\in W_0(\tilde{p})}(-\lambda^* v)$. 

Note that because $v^*>0$, $v^*$ satisfies the system \eqref{system} $A(\lambda^*)v=b$. Consider the sequence $v^k=A^{-1}(b-\i_n/k)$. ($A^{-1}$ exists as $\det(A)>0$.) This sequence satisfies $Av^k<<b$. Most importantly, by Lemma \ref{Aproperties}, part 2, $v^k=v^*-A^{-1}\i_n/k\leq v^*$ as all elements of $A^{-1}$ are nonnegative. Hence, $v^k$ lies in $[0,v]^n$ for all $k$ sufficiently large, and thus $v^k\in W_0(\tilde{p})$. Thus, $v^k\to v^*$ is the desired sequence showing that $v^*$ is a limit point of  $W_0(\tilde{p})$.

\vspace{1em}
\emph{Subcase 2.} $\sum\limits_{i=1}^n \frac{\lambda^*_i}{1+\lambda^*_i}\geq 1$.  In this subcase, $\det(A)\leq 0$ and $\tilde{p}_i$ exhibit the same type of geometry as in figure \ref{fig:2geom}, right.

We will show that in this subcase, there must, along with $v^*$, exist a fixed point $w^*\in V^*$ such that at least one coordinate of $w^*$ is equal to zero. Hence, this subcase is incompatible with the premise in Case 3. In two dimensions, the existence of the fixed point $w^*$ is illustrated in figure \ref{fig:2geom}, right.

Suppose that $\sum\limits_{i=1}^n \frac{\lambda^*_i}{1+\lambda^*_i}=1$. Then $\det(A)=0$ and thus the desired fixed point $w^*$ exists by Lemma \ref{structsetoffixedpoints}. 

From now on, suppose that $\sum\limits_{i=1}^n \frac{\lambda^*_i}{1+\lambda^*_i}>1$.

First, we show that when constructing a desired fixed point $w^*$ , one can, in a certain sense, ignore all bidders with $v^*_i<\vmax$ (where $v^*$ is the initial fixed point considered in Case 3). Indeed, by the premise of Case 3 there exists bidder $j$ such that $v^*_j=\vmax$. Thus, 
\beq
\vmax=\tilde{p}_j(v^*_{-j})=\lambda^*_{-j}v^*_{-j}+b_j\leq\lambda^*_{-j}+b_j\vmax\i_{n-1}\leq \vmax,
\eeq
where the first inequality is due to $\lambda^*\geq 0$ and the second is due to $\tilde{p}_j(\vmax)\leq\vmax$. 
Hence, all inequalities hold as equalities and thus $\lambda^*_{-j}(\vmax\i_{n-1}-v^*_{-j})=0$. But every term in this sum is nonnegative, so we must have $\lambda^*_i(\vmax-v^*_i)=0$ for all $i\neq j$. Thus, for all $i$ with $v^*_i<\vmax$ we must have $\lambda^*_i=0$. But this means that the values of all such bidders do not impact $\tilde{p}_j$ for $j$ such that $v^*_j=\vmax$. (Call this set of bidders $J_{max}$ again.) Hence, it is sufficient to find a desired fixed point restricted to the set of bidders $J_{max}$. To see this formally, denote by $[0 \mbox{  } \tilde{w}^*]$ a vector consisting of zeros at coordinates $j\notin J_{max}$ and of $\tilde{w}^*_j$ for $j\in J_{max}$. Then, if  a point $\tilde{w}^*\in[0,\vmax]^{|J_{max}|}$, satisfies the system $\tilde{w}_i=\tilde{p}_i([0\mbox{  } \tilde{w}]_{-i})$  for all $i\in J_{max}$, with some $\tilde{w}^*_i=0$, the point $w^*$ given by 
\[w^*_j=\begin{cases}
\tilde{w}^*_j, & j\in J_{max};\\
\tilde{p}_j([0\mbox{  } \tilde{w}^*]), & j\notin J_{max} 
\end{cases}\]
will satisfy the whole system $v_i=\tilde{p}_i(v_{-i})$  for all $i\in\{1,\ldots,n\}$, with some $w^*_i=0$.

Hence, it is with loss of generality to consider $v^*=\vmax\i_n$. Since $\vmax\i_n$ is a positive fixed point, it must satisfy $Av=b$, so for all $i$ we get 
\beq\label{biexpr}
b_i=(1-\sum_{j\neq i}\lambda^*_j)\vmax.
\eeq

The main idea to construct a fixed point $w^*$ with $w^*_i=0$ whenever $b_i<0$. So we first show that such an $i$ is guaranteed to exist in this subcase. This is achieved by the following lemma:

\begin{lemma}\label{bnegative}
Suppose $\lambda_1,\ldots,\lambda_n$ are nonnegative numbers satisfying 
\beq\label{Sums}
\sum_{i\neq k}\lambda_i\leq 1
\eeq
for every $k\in \{1,\ldots, n\}$.
Then, $\sum_{i=1}^n\frac{\lambda_i}{1+\lambda_i}\leq 1$. Moreover, if at least one of the inequalities \eqref{Sums} is strict, then $\sum_{i=1}^n\frac{\lambda_i}{1+\lambda_i}< 1$.
\end{lemma}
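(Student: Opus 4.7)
The plan is to reformulate the claim via the identity $\frac{\lambda_i}{1+\lambda_i}=1-\frac{1}{1+\lambda_i}$, which rewrites the target as $\sum_{i=1}^n \frac{1}{1+\lambda_i}\geq n-1$. This form is tailor-made for the AM--HM inequality applied to the positive numbers $1+\lambda_i$, and the rest of the argument is just one aggregation of the hypothesis plus one algebraic comparison.

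First I would aggregate the $n$ hypothesized inequalities. Summing $\sum_{i\neq k}\lambda_i\leq 1$ over all $k\in\{1,\ldots,n\}$ counts each $\lambda_i$ exactly $n-1$ times, so $(n-1)S\leq n$, where $S:=\sum_{i=1}^n\lambda_i$; equivalently, $S\leq \tfrac{n}{n-1}$. Note that this aggregated bound is strict whenever at least one of the hypothesized inequalities \eqref{Sums} is strict.

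Next I would apply AM--HM to $1+\lambda_1,\dots,1+\lambda_n$ to obtain
\[\sum_{i=1}^n \frac{1}{1+\lambda_i}\;\geq\; \frac{n^2}{\sum_{i=1}^n(1+\lambda_i)}\;=\;\frac{n^2}{n+S}.\]
Since the map $S\mapsto \tfrac{n^2}{n+S}$ is decreasing, combining with $S\leq \tfrac{n}{n-1}$ yields $\tfrac{n^2}{n+S}\geq \tfrac{n^2}{n+n/(n-1)}=n-1$. Chaining these two estimates gives $\sum \tfrac{1}{1+\lambda_i}\geq n-1$, and hence $\sum \tfrac{\lambda_i}{1+\lambda_i}\leq 1$.

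I do not anticipate a serious obstacle, but one subtlety deserves care when deriving the strict version: AM--HM becomes tight precisely when all $1+\lambda_i$ coincide, so strictness cannot reliably be harvested from that step. It must come from the aggregation: if any of the hypothesized inequalities is strict, then $(n-1)S<n$, so $S<\tfrac{n}{n-1}$, so $\tfrac{n^2}{n+S}>n-1$, and the chain becomes strict, giving $\sum \tfrac{\lambda_i}{1+\lambda_i}<1$.
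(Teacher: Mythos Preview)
Your argument is correct. The paper's own proof is a one-line optimization argument: it maximizes $\sum_{i=1}^n \frac{\lambda_i}{1+\lambda_i}$ over the constraint set and asserts that the maximizer is the symmetric point $\lambda_i=\frac{1}{n-1}$ (where the value is exactly $1$), with strict inequality following whenever the constraints are not all tight.

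Your route is genuinely different and in some ways more informative. By aggregating the $n$ constraints into the single bound $S:=\sum_i\lambda_i\leq \frac{n}{n-1}$ and then applying AM--HM to the numbers $1+\lambda_i$, you effectively show a \emph{stronger} statement: the conclusion already follows from the scalar bound $S\leq \frac{n}{n-1}$ alone, without using the full system of $n$ inequalities. This is more elementary and entirely self-contained, whereas the paper's approach leaves the verification that the symmetric point is indeed the constrained maximum to the reader. Conversely, the optimization viewpoint names the extremal configuration explicitly, which connects more transparently to where equality is attained. Your handling of the strict case is also cleaner: you correctly locate the strictness in the aggregation step (since AM--HM can be tight), and the monotonicity of $S\mapsto \frac{n^2}{n+S}$ carries it through.
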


Indeed, suppose $b_i\geq 0$ for all $i$. Then, by \eqref{biexpr}, \[\sum_{i\neq k}\lambda^*_i\leq 1\]
for every $k\in \{1,\ldots, n\}$, so $\lambda^*_i$ satisfy the premise of Lemma \ref{bnegative}. Hence, $\sum_{i=1}^n\frac{\lambda_i}{1+\lambda_i}\leq 1$ but this contradicts our premise that $\sum\limits_{i=1}^n \frac{\lambda^*_i}{1+\lambda^*_i}>1$. Hence, there exists $j$ such that $b_j<0$.
From now on, enumerate bidders in such a way that $b_1\geq b_2\geq\ldots\geq b_n$.

If $b_1\leq 0$, $0\cdot\i_n$ is the desired fixed point. Indeed, then we have $0=\max\{b_i,0\}=\tilde{p}_i(0\i_{n-1})$ for all $i$. So assume $b_1>0$. Hence, there exists $s^*\in\{1,2,\ldots, n-1\}$ such that $b_s\geq 0$ for all $s\leq s^*$ and $b_s<0$ for all $s>s^*$. 

We will construct a fixed point $w^*$ such that $w^*_i=0$ for $i>s^*$. Denote by $A^{restr}$ the restricted matrix formed by first $s^*$ rows and columns of $A$. Denote by $b^{restr}$ the vector formed by the first $s^*$ elements of $b$. 

Because, $b_1>0$, $b_i\geq 0$ for $i=2,\ldots, s^*$, by Lemma \ref{bnegative} we must have $\sum\limits_{i=1}^{s^*}\frac{\lambda^*_i}{1+\lambda^*_i}<1$. Thus, by Lemma \ref{Aproperties}, part 1, matrix $A^{restr}$ is of full rank. 

So define $w^*$ by 
\[w^*_j=
\begin{cases}
((A^{restr})^{-1}b^{restr})_j, & j\leq s^*;\\
0, & j>s^*.
\end{cases}\]

\begin{proposition}\label{newfixedpoint}
$w^*\in V^*$ and $w^*_i=0$ for some $i$.
\end{proposition}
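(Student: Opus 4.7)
The plan is to verify, in turn, that (a) $w^* \in [0,\vmax]^n$; (b) $w^*_i = \tilde{p}_i(w^*_{-i})$ for every $i \leq s^*$; and (c) $w^*_j = 0 = \tilde{p}_j(w^*_{-j})$ for every $j > s^*$. The clause ``$w^*_i = 0$ for some $i$'' is immediate from the construction, since $s^* \leq n-1$ guarantees at least one zero coordinate.

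For (a), I would first apply Lemma \ref{bnegative} to $\lambda^*_1,\dots,\lambda^*_{s^*}$, with the strict-inequality clause triggered by $b_1 > 0$, to obtain $T_1 := \sum_{i \leq s^*}\lambda^*_i/(1+\lambda^*_i) < 1$. Lemma \ref{Aproperties} parts 1 and 2 then yield $\det(A^{restr}) > 0$ and nonnegative entries of $(A^{restr})^{-1}$. Combined with $b^{restr} \geq 0$, this gives $w^*_{1:s^*} \geq 0$. The upper bound $w^*_{1:s^*} \leq \vmax\i_{s^*}$ follows by comparison with $v^* = \vmax\i_n$: restricting $Av^* = b$ to its first $s^*$ rows gives $A^{restr}(\vmax\i_{s^*}) = b^{restr} + \vmax\i_{s^*}\sum_{j > s^*}\lambda^*_j \geq b^{restr} = A^{restr}w^*_{1:s^*}$ componentwise, and multiplying by the entrywise nonnegative $(A^{restr})^{-1}$ preserves the inequality. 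Claim (b) follows directly: the system $A^{restr}w^*_{1:s^*} = b^{restr}$ rearranges to $w^*_i = b_i + \sum_{j\neq i}\lambda^*_j w^*_j$ (using $w^*_j = 0$ for $j > s^*$), which is manifestly nonnegative, so the max-with-zero in the definition of $\tilde{p}_i$ is inactive.

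The principal difficulty is (c), which reduces to showing $S + b_j \leq 0$ for every $j > s^*$, where $S := \sum_{i \leq s^*}\lambda^*_i w^*_i$. I would apply Lemma \ref{Aproperties} part 3 to $A^{restr}$ to obtain the closed form $S = \vmax(L - \Lambda T_1)/(1-T_1)$, with $L := \sum_{i \leq s^*}\lambda^*_i$ and $\Lambda := \sum_{i=1}^n \lambda^*_i$, after substituting the identity $b_i = (1-\Lambda+\lambda^*_i)\vmax$ valid under the Case 3 reduction to $v^* = \vmax\i_n$. Algebraic manipulation then reduces the target inequality $S + b_j \leq 0$ to the bound $(1-T_1)(1+\lambda^*_j) \leq \sum_{i > s^*}\lambda^*_i$; since the $\lambda^*_i$'s are sorted decreasingly (equivalently, the $b_i$'s are), the binding case is $j = s^*+1$. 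I would close out this last inequality by combining the Subcase 2 hypothesis $T_1 + T_2 \geq 1$, where $T_2 := \sum_{i > s^*}\lambda^*_i/(1+\lambda^*_i)$, with the strict bound $\lambda^*_{s^*+1} < \Lambda - 1$ (coming from $b_{s^*+1} < 0$), which together must deliver enough slack between the natural upper estimate $(1-T_1)(1+\lambda^*_{s^*+1}) \leq T_2(1+\lambda^*_{s^*+1})$ and the target $\sum_{i > s^*}\lambda^*_i$. This final estimate carries the bulk of the work and is where both the sorted order of the $b_i$'s and the exact form of the Subcase 2 threshold are used in an essential way.
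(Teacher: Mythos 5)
Your setup mirrors the paper's proof almost exactly: parts (a) and (b) are correct and are the published argument (entrywise nonnegativity of $(A^{restr})^{-1}$ together with $b^{restr}\geq 0$ for the lower bound, comparison with the restricted system satisfied by $\vmax\i_{s^*}$ for the upper bound), and your reduction of (c) --- Lemma \ref{Aproperties}, part 3, the substitution $b_i=(1-\Lambda+\lambda^*_i)\vmax$, and the observation that $j=s^*+1$ is the binding index because the $b_i$ are sorted --- lands on exactly the scalar inequality the paper states as \eqref{32}, which in your notation is $(1-T_1)(1+\lambda^*_{s^*+1})\leq\sum_{i>s^*}\lambda^*_i$ with $T_1=\sum_{i\leq s^*}\lambda^*_i/(1+\lambda^*_i)$.

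The gap is in how you propose to close that inequality. Your chain passes through the intermediate quantity $T_2(1+\lambda^*_{s^*+1})$, where $T_2=\sum_{i>s^*}\lambda^*_i/(1+\lambda^*_i)$, and would need $T_2(1+\lambda^*_{s^*+1})\leq\sum_{i>s^*}\lambda^*_i$. That inequality is false in general --- indeed it always goes the other way: since the $b_i$ are sorted decreasingly, $\lambda^*_{s^*+1}\geq\lambda^*_i$ for all $i>s^*$, so
\[
T_2\left(1+\lambda^*_{s^*+1}\right)=\sum_{i>s^*}\lambda^*_i\,\frac{1+\lambda^*_{s^*+1}}{1+\lambda^*_i}\;\geq\;\sum_{i>s^*}\lambda^*_i .
\]
Your intermediate bound therefore sits on the wrong side of the target. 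A concrete instance: $n=3$, $s^*=1$, $\lambda^*=(1.01,\,0.98,\,0.01)$ satisfies $b_1>0>b_2>b_3$ and $\sum_i\lambda^*_i/(1+\lambda^*_i)\approx 1.007>1$, yet $T_2(1+\lambda^*_2)\approx 0.9996$ while $\lambda^*_2+\lambda^*_3=0.99$; the target itself holds ($\approx 0.985\leq 0.99$), but your chain would require $0.9996\leq 0.99$. The auxiliary fact $\lambda^*_{s^*+1}<\Lambda-1$ does not repair this. The paper closes \eqref{32} through a different intermediate quantity: it rewrites the Subcase 2 hypothesis as $\lambda^*_{s^*+1}>1\big/\bigl(\sum_{i\neq s^*+1}\tfrac{\lambda^*_i}{1+\lambda^*_i}\bigr)-1$ and then shows this lower bound dominates the right-hand side of \eqref{32}, using only $\sum_{i\leq s^*}\lambda^*_i\leq\sum_{i\neq s^*+1}\lambda^*_i$. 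So the reduction is right, but the crux of the estimate is missing from your write-up and the route you sketch for it cannot work.
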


\begin{proof}
See \hyperref[app]{Appendix}.\eop
\end{proof}

We have constructed a fixed point $w^*\in V^*$ with a zero coordinate which shows that \emph{Subcase 2} is impossible within \textbf{Case 3.}    

This finishes the consideration of \textbf{Grand case I}.

\vspace{1.5em}
\textbf{Grand case II.} $\lambda^*_i< 0$ for some $i=1,\ldots,n$.

Here, given $p$ and $\lambda^*$, we pinpoint a pair of another mechanism and a nonegative vector $\lambda^{**}$ which gives a strictly higher $R$ thus reducing the problem to Grand Case I.

Denote the set  $\{i|\lambda^*_i\geq 0\}$ by $+(\lambda^*)$ and the set $\{i|\lambda^*_i< 0\}$ by $-(\lambda^*)$. Define $q:=|\{i|\lambda^*_i< 0\}|$. Redenote threshold functions by $p_i(v^+,v^-)$  where the profile of values of bidders from $+(\lambda^*)$ is $v^+$ and the profile of values of bidders from $-(\lambda^*)$ is $v^-$. 

For all bidders $i\in +(\lambda^*)$, define $p^{new}_i(v_{-i}):=p_i(v^+,0)$ and for all bidders $i\in -(\lambda^*)$ define $p^{new}_i(v_{-i}):=\vmax$. The transformation works as if bidders from $-(\lambda^*)$ are removed from the auction. 

\begin{lemma}\label{pnewbetter}
$R(p^{new},\lambda^*)\geq R(p,\lambda^*)$.
\end{lemma}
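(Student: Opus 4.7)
The plan is to verify the inequality $R(p^{new},\lambda^*)\geq R(p,\lambda^*)$ piecewise, by showing that each of the $n+1$ inner infima appearing in $R(\cdot,\lambda^*)$---one for each bidder's $W^{\geq}_i$ piece, plus one for $W_0$---is weakly larger under $p^{new}$ than under $p$. Since $R(\cdot,\lambda^*)=\lambda^* m+\min\{\cdots\}$ and the common $\lambda^* m$ term is unaffected, such a componentwise comparison will yield the lemma.

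For $i\in +(\lambda^*)$ (so $\lambda^*_i\geq 0$), the term $-\lambda^*_i v_i$ is nonincreasing in $v_i$ and $(\vmax,v_{-i})\in W^{\geq}_i$ for any $v_{-i}$, so I can evaluate both infima by fixing $v_i=\vmax$. The $p$-infimum then becomes $-\lambda^*_i\vmax+\inf_{v^+_{-i},v^-}\bigl\{p_i(v^+_{-i},v^-)-\lambda^*_{+,-i}v^+_{-i}-\lambda^*_- v^-\bigr\}$. Under $p^{new}$, the threshold $p^{new}_i(v_{-i})=p_i(v^+_{-i},0)$ does not depend on $v^-$, and since $-\lambda^*_- v^-\geq 0$ (as $\lambda^*_-<0$) is minimized at $v^-=0$, the infimum simplifies to $-\lambda^*_i\vmax+\inf_{v^+_{-i}}\bigl\{p_i(v^+_{-i},0)-\lambda^*_{+,-i}v^+_{-i}\bigr\}$; restricting the $p$-infimum to $v^-=0$ reproduces exactly this value, so the $p$-infimum is bounded above by the $p^{new}$-infimum. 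For $i\in -(\lambda^*)$ (so $\lambda^*_i<0$), the term $-\lambda^*_i v_i$ is strictly increasing in $v_i$, so the $p$-infimum is attained at $v_i=p_i(v_{-i})$ and equals $\inf_{v_{-i}}\bigl\{(1-\lambda^*_i)p_i(v_{-i})-\lambda^*_{-i}v_{-i}\bigr\}$; using $p_i\leq\vmax$ and $1-\lambda^*_i>0$, this is bounded above by $(1-\lambda^*_i)\vmax+\inf_{v_{-i}}(-\lambda^*_{-i}v_{-i})$, which is precisely the $p^{new}$-infimum because $W^{\geq}_i(p^{new})=\{v:v_i=\vmax\}$.

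The last comparison, $\inf_{v\in W_0(p^{new})}(-\lambda^* v)\geq \inf_{v\in W_0(p)}(-\lambda^* v)$, is the most delicate. Under $p^{new}$, the negative coordinates are only constrained by $v^-_j<\vmax$, and since $-\lambda^*_- v^-\geq 0$ is minimized at $v^-=0$, the $p^{new}$-infimum equals $\inf\{-\lambda^*_+ v^+:v^+_i<p_i(v^+_{-i},0),\ i\in +(\lambda^*)\}$. For each such $v^+$, the point $(v^+,0)$ satisfies $v^+_i<p_i(v^+_{-i},0)$ for $i\in +(\lambda^*)$ by construction and $0\leq p_j(v^+,0)$ for $j\in -(\lambda^*)$ automatically; when the latter inequality is strict, $(v^+,0)\in W_0(p)$ contributes the value $-\lambda^*_+ v^+$, yielding the desired bound. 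The boundary case where $p_j(v^+,0)=0$ for some $j\in -(\lambda^*)$ must be handled by an approximation---either by slightly shrinking $v^+$ to preserve the strict inequalities while forcing $p_j$ to be strictly positive at the perturbed point, or by noting that in this situation the $W^{\geq}_j(p)$-infimum already captures the bound $-\lambda^*_+ v^+$ and so the $W_0$ piece is not binding inside $R(p,\lambda^*)$. I expect this boundary argument to be the main obstacle, since the threshold functions $p_j$ carry no continuity or regularity assumptions that would make such approximations automatic.
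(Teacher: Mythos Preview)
Your proposal is correct and follows the paper's approach essentially verbatim; in particular, the paper resolves the $W_0$ boundary case via your second option---showing that when $p_j(v^+,0)=0$ for some $j\in -(\lambda^*)$, the point $(v^+,0)$ lies in $W^{\geq}_j(p)$ with $p_j(v_{-j})-\lambda^* v = -\lambda^*_+ v^+$, so the $W^{\geq}_j(p)$-infimum (rather than the $W_0(p)$-infimum) furnishes the required bound on $R(p,\lambda^*)$. Your first option (perturbing $v^+$) would indeed fail for exactly the reason you anticipate: the threshold functions $p_j$ are only measurable.
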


\begin{proof}
Again, we consider separately different infima in \eqref{reducedproblem}. 

First, note that all infima $\inf\limits_{v\in W^{\geq}_i(p)}(p_i(v_{-i})-\lambda^* v)$ for $i\in-(\lambda^*)$ are increased as $p^{new}_i(v_{-i})=\vmax\geq p_i(v_{-i})$ (the increase will be both due to direct rise in $p_i$ and the shrinkage of the set $W^{\geq}_i(p)$.)

Second, consider  $\inf\limits_{v\in W^{\geq}_i(p^{new})}(p^{new}_i(v_{-i})-\lambda^* v)$ for $i\in+(\lambda)$. As before, $\inf\limits_{v\in W^{\geq}_i(p)}(p_i(v_{-i})-\lambda v)=\inf\limits_{v_{-i}\in[0,\vmax]^{n-1}}(p_i(v^+_{-i},v_{-i}^-)-\lambda^*_{-i}v_{-i}-\lambda^*_i\vmax)$. 
For $p^{new}$, this infimum is equal to 
$$\inf\limits_{v_{-i}\in[0,\vmax]^{n-1}}(p_i(v^+_{-i},0)-\lambda^{*+}_{-i}v^+_{-i}-\lambda^{*-}_{-i}v^-_{-i}-\lambda^*_i\vmax).$$ But note that the minimand in latter infimum is nondecreasing in $v_{-i}^-$, due to the sign of $\lambda_{-i}^{*-}$. Hence, $\inf\limits_{v_{-i}\in[0,\vmax]^{n-1}}(p_i(v^+_{-i},0)-\lambda^{*+}_{-i}v^+_{-i}-\lambda^{*-}_{-i}v^-_{-i}-\lambda^*_i\vmax)=\inf\limits_{v_{-i}\in[0,\vmax]^{k-1}}(p_i(v^+_{-i},0)-\lambda^{*+}_{-i}v^+_{-i}-\lambda^*_i\vmax)=:R_0$.
On the other hand, the point $(v^+_{-i},0)$ is feasible in the minimization in $\inf\limits_{v_{-i}\in[0,\vmax]^{n-1}}(p_i(v^+_{-i},v_{-i}^-)-\lambda^*_{-i}v_{-i}-\lambda^*_i\vmax)$. Hence, this second infimum is not greater than $\inf\limits_{v_{-i}\in[0,\vmax]^{k-1}}(p_i(v^+_{-i},0)-\lambda^{*+}_{-i}v^+_{-i}-\lambda^*_i\vmax)=R_0$. Thus, the values of such infima do not decrease upon replacing $p_i(v_{-i})$ with $p_i(v^+,0)=p^{new}$. 

Finally, consider  $\inf\limits_{v\in W_0(p^{new})}(-\lambda^* v)$. 
We will prove that, $\inf\limits_{v\in W_0(p^{new})}(-\lambda^* v)$ is no less than at least one infimum in the expression for $R(p,\lambda^*)$, written as in \eqref{reducedproblem1}, and the statement of this lemma will follow. Note that unlike all previous steps, here we will not show that $\inf\limits_{v\in W_0(p^{new})}(-\lambda^* v)$ necessarily dominates a similar infimum, $\inf\limits_{v\in W_0(p)}(-\lambda^* v)$. Rather, we will show that it is \emph{either} no less than $\inf\limits_{v\in W_0(p)}(-\lambda^* v)$ \emph{or} no less than $\inf\limits_{v\in W^{\geq}_j(p)}(p_j(v_{-j})-\lambda^* v)$ for some $j$.

If $W_0(p^{new})$ is empty, the infimum in the new mechanism is equal to $+\infty$ and thus is trivially higher than $\inf\limits_{v\in W_0(p)}(-\lambda^* v)$. So suppose $W_0(p^{new})$ is nonempty. Consider minimizing sequences $v^k\in W_0(p^{new})$ for the problem $\inf\limits_{v\in W_0(p^{new})}(-\lambda^* v)$.  
Note that there exists a minimizing sequence $v^k$ such that for all $i\in-(\lambda^*)$, $v^k_i=0$ for all $k$, since for all $v\in W_0(p^{new})$, $(v^+,0)\in W_0(p^{new})$ and $\lambda_i^*< 0$. Consider such a sequence. 

\textbf{Case 1}. $v^k\in W_0(p)$ starting from some $k$. This implies that 
$\inf\limits_{v\in W_0(p)}(-\lambda^* v)\leq \lim\limits_{k\to\infty}(-\lambda^*v^k)=\inf\limits_{v\in W_0(p^{new})}(-\lambda^* v)$ which implies the Lemma.

\textbf{Case 2}. $v^k\notin W_0(p)$ for infinitely many $k$. Thus, as there is only a finite number of bidders, there exists $j^*$ such that $v^k_{j^*}\geq p_{j^*}(v^k_{-j^*})$ for infinitely many $k$. Consider hereafter only this subsequence. Recall that $v^k\in W_0(p^{new})$ for all $k$. If $j^*\in +(\lambda^*)$, we would have $p_{j^*}(v^k_{-j^*})=p^{new}(v^k_{-j^*})>v^k_{j^*}\geq p_{j^*}(v^k_{-j^*})$, a contradiction. So $j^*\in-(\lambda^*)$ and thus for all $k$ $v^k_{j^*}=0\geq p_{j^*}(v^k_{-j^*})\geq 0$. So $p_j(v^k_{-j^*})=0$ for all $k$. Moreover, $v^k\in W^{\geq}_{j^*}(p)$ for all $k$. Hence, $\inf\limits_{v\in W^{\geq}_{j^*}(p)}(p_j(v^k_{-j^*})-\lambda^* v)\leq \lim\limits_{k\to\infty}(0-\lambda^*v^k)=\inf\limits_{v\in W_0(p^{new})}(-\lambda^* v)$ which again implies the lemma. \eop
\end{proof}

Now consider a vector $\lambda^{**}$ obtained by replacing  $\lambda^*_i$ for $i\in -(\lambda^*)$ with zeros.

\begin{lemma}\label{newlambdapositive}
$R(p^{new},\lambda^{**})> R(p^{new},\lambda^*)$.
\end{lemma}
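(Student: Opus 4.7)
The plan is to split $R(p^{new},\lambda^{**})-R(p^{new},\lambda^*)$ into the change in the linear term $\lambda m$ and the change in the min-of-infima term appearing in \eqref{reducedproblem}, evaluate each infimum at both $\lambda^*$ and $\lambda^{**}$, and then appeal to the optimality of $\lambda^*$ for the original $p$ to handle the subtle remaining case. The linear term is immediate: $(\lambda^{**}-\lambda^*)m=\sum_{i\in-(\lambda^*)}|\lambda^*_i|m_i$, which is strictly positive because $\lambda^*_i<0$ and $m_i>0$ for each $i\in-(\lambda^*)$ (recall $m\in(0,\vmax)^n$).

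For the infima, denote $A_i(\lambda):=\inf_{v\in W^{\geq}_i(p^{new})}(p^{new}_i(v_{-i})-\lambda v)$ for $i\geq 1$ and $A_0(\lambda):=\inf_{v\in W_0(p^{new})}(-\lambda v)$, and compare $A_i(\lambda^{**})$ with $A_i(\lambda^*)$ case by case, exactly as in the proof of Lemma \ref{pnewbetter}. For $i\in+(\lambda^*)$, $p^{new}_i(v_{-i})=p_i(v^+_{-i},0)$ does not depend on $v^-$, so the infimum separates: the $v^-$-coordinates are optimally set to $0$ under $\lambda^*$ (because $-\lambda^{*-}v^-\geq 0$ is minimized there) and contribute zero under $\lambda^{**}$ (since $\lambda^{**-}=0$); hence $A_i(\lambda^{**})=A_i(\lambda^*)$. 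An analogous argument gives $A_0(\lambda^{**})=A_0(\lambda^*)$. For $i\in-(\lambda^*)$, $p^{new}_i\equiv\vmax$ forces $W^{\geq}_i(p^{new})=\{v:v_i=\vmax\}$, and a direct evaluation yields
\beq
A_i(\lambda^*)=\vmax\Bigl(1-\lambda^*_i-\sum_{j\in+(\lambda^*)}\lambda^*_j\Bigr),\qquad A_i(\lambda^{**})=\vmax\Bigl(1-\sum_{j\in+(\lambda^*)}\lambda^*_j\Bigr),
\eeq
so $A_i$ strictly decreases by $\vmax|\lambda^*_i|$ on each such coordinate.

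Thus the only way $\min\{A_0,A_1,\ldots,A_n\}$ could strictly drop when passing from $\lambda^*$ to $\lambda^{**}$ is if some $A_i$ with $i\in-(\lambda^*)$ is the binding minimum under $\lambda^{**}$. The main obstacle is ruling this out, i.e., showing that any such drop is dominated by the linear gain. I would handle it by exploiting the optimality of $\lambda^*$ for the original mechanism $p$: the subgradient characterization of an optimal dual, $m\in\mathrm{conv}(\arg\max_v(\lambda^*v-t^p(v)))$, yields for every $k\in-(\lambda^*)$ an argmax point with $v_k\leq m_k<\vmax$. Combined with the inequality $t^{new}(v)\geq t^p(v)$ used in Lemma \ref{pnewbetter} (which strictly increases the minimand at the measure-zero slice $\{v_k=\vmax\}$ by $\vmax-p_k(v_{-k})$), this forces the argmin of $t^{new}(v)-\lambda^*v$ to lie away from the slices $\{v_i=\vmax\}$ for $i\in-(\lambda^*)$. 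Consequently the $A_i$'s with $i\in-(\lambda^*)$ are non-binding under $\lambda^*$, so the minimum is unchanged when passing to $\lambda^{**}$, and the strict inequality $R(p^{new},\lambda^{**})>R(p^{new},\lambda^*)$ follows from the strictly positive linear term alone.
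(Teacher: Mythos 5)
Your decomposition is the right one, and your evaluations of the individual infima agree with the paper's: the linear term gains $\sum_{i\in-(\lambda^*)}|\lambda^*_i|m_i>0$, the infima indexed by $+(\lambda^*)$ and by $0$ are unchanged, and each $A_i$ with $i\in-(\lambda^*)$ drops from $\vmax\bigl(1-\lambda^*_i-\sum_{j\in+(\lambda^*)}\lambda^*_j\bigr)$ to $\vmax\bigl(1-\sum_{j\in+(\lambda^*)}\lambda^*_j\bigr)$. The gap is in how you rule out that this drop moves the overall minimum. You argue that the $A_i$'s with $i\in-(\lambda^*)$ are non-binding \emph{under $\lambda^*$} and conclude that ``the minimum is unchanged when passing to $\lambda^{**}$.'' That inference is invalid: these infima strictly decrease by $\vmax|\lambda^*_i|$ while the others stay put, so non-bindingness at $\lambda^*$ says nothing about bindingness at $\lambda^{**}$. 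What is needed is a comparison at $\lambda^{**}$ itself, i.e.\ $A_i(\lambda^{**})\geq\min_{j\notin-(\lambda^*)}A_j(\lambda^{**})$. The paper supplies exactly this: for any $j\in+(\lambda^*)$ the point with $v_s=\vmax$ for $s\in+(\lambda^*)\cup\{i\}$ and $v_s=0$ otherwise lies in $W^{\geq}_j(p^{new})$ and, since $p^{new}_j\leq\vmax$, witnesses $A_j(\lambda)\leq A_i(\lambda)$ for \emph{both} $\lambda=\lambda^*$ and $\lambda=\lambda^{**}$; if $+(\lambda^*)=\emptyset$ one uses $A_0(\lambda)=0<\vmax(1-\lambda_i)$ instead (a case your write-up does not treat).

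The specific mechanism you propose for closing the gap also does not work as stated. First, the pointwise inequality $t^{new}(v)\geq t^{p}(v)$ is not established in Lemma \ref{pnewbetter} (which compares infima, not transfers) and is not true in general: for $i\in+(\lambda^*)$ the new threshold $p_i(v^+_{-i},0)$ need not dominate $p_i(v^+_{-i},v^-_{-i})$. Second, even granting the subgradient claim $m\in\mathrm{conv}\bigl(\arg\min_v(t^{p}(v)-\lambda^* v)\bigr)$ (which itself requires care since the infimum may not be attained), it only yields information at $\lambda^*$, so it runs into the same ``bindingness can flip at $\lambda^{**}$'' problem. The fix is elementary and does not require dual optimality of $\lambda^*$ at all --- indeed the paper's proof of this lemma never uses it --- so you should replace the last step with the explicit dominating-point argument above.
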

\begin{proof}
We will prove that the minimal of all infima in \eqref{reducedproblem} won't change when $\lambda^*_i$ for $i\in -(\lambda^*)$ are replaced with zeros. Since $m_i>0$, revenue will strictly increase. 

Note that the infima except $\inf\limits_{v\in W^{\geq}_i(p^{new})}(p^{new}_i(v_{-i})-\lambda^* v)$ for $i\in -(\lambda^*)$ are minimized by setting $v_j=0$ for all $j\in -(\lambda^*)$. This will continue to hold when $\lambda^*_i$ is replaced with $\lambda^{**}$. Thus, replacing $\lambda^*$ with $\lambda^{**}$ will not change any infimum except $\inf\limits_{v\in W^{\geq}_i(p^{new})}(p^{new}_i(v_{-i})-\lambda^* v)=\vmax(1-\sum_{j\in+(\lambda^*)}\lambda^*_j-\lambda^*_i)$ for $i\in-(\lambda^*)$. But note that none of $\inf\limits_{v\in W^{\geq}_i(p^{new})}(p^{new}_i(v_{-i})-\lambda v)$, $i\in-(\lambda^*)$, is strictly minimal of all the infima, for both $\lambda=\lambda^*$ and $\lambda=\lambda^{**}$. Indeed, if $+(\lambda^*)\neq\emptyset$, one can take any $j\in +(\lambda^*)$ and consider $\inf\limits_{v\in W^{\geq}_j(p^{new})}(p_j^{new}(v_{-j})-\lambda v)$. Since the point $v_s=\begin{cases} \vmax, & s\in +(\lambda^*)\cup \{i\};\\ 0, & \mbox{ o/w}\end{cases}$ is feasible, and $p^{new}_j(\cdot)\leq\vmax$, this infimum is weakly lower than  $\inf\limits_{v\in W^{\geq}_i(p^{new})}(p^{new}_i(v_{-i})-\lambda v)=\vmax(1-\sum_{j\in+(\lambda^*)}\lambda_j-\lambda_i)$ for both $\lambda=\lambda^*$ and $\lambda=\lambda^{**}$. If, on the other hand, $+(\lambda^*)=\emptyset$,  $\inf\limits_{v\in W_0(p^{new})}(-\lambda v)$ is lower, as  it is equal to $0<\vmax(1-\lambda_i)$ for both $\lambda=\lambda^*$ and $\lambda=\lambda^{**}$.
\eop 
\end{proof}
Along with Lemma \ref{pnewbetter}, we get the inequality $R(p^{new},\lambda^{**})> R(p,\lambda^*)$.

Finally, to find an LSA that dominates $p^{new}$, and thus, $p$, feed the pair $[p^{new},\lambda^{**}]$ to \textbf{Grand Case I.} This is possible because all entries of $\lambda^{**}$ are nonnegative. Note that now the tilde-transformation will be applied to $p^{new}$. Also, even though the whole vector $\lambda^{**}$ might not be an optimal one for the mechanism $p^{new}$, this does not create a problem since optimality of $\lambda^*$ is not used throughout constructions in \textbf{Grand Case 1}. \eop

The proof of theorem \ref{main} is complete.

\section{Parametric solutions}\label{paramsol}
Theorem \ref{main} identifies \emph{the form} of an optimal mechanism. It remains to identify optimal values of the parameters $r_i$. To do this, one has to maximize $R(p,\lambda)$ where functions $p$ are given by \eqref{p-lsa} for some $r$, both with respect to $r$ and $\lambda$. Note that it follows from lemma \ref{newlambdapositive} that any optimal mechanism $M^*$ is such that its optimal Lagrange multiplier $\lambda^*(M^*)$ is nonnegative. Together with proposition \ref{optauction exists}, this implies that that in finding the optimal linear score auction, one can restrict oneself to $\lambda\geq 0$.

Before one optimizes over $r$ and $\lambda$, one has to solve the inner minimization problems in \eqref{transformedproblem}.  This is done in the proof of the following lemma:

\begin{lemma}\label{innerproblsol}
Suppose threshold functions $p$ are given by \eqref{p-lsa} and $\lambda\geq 0$.  Then, 
\beq\label{mlambda}
R(p,\lambda)=\begin{cases}
\lambda m+\min\left\{\vmax\left(1-\sum\lambda_i\right),\min\limits_i(r_i-\lambda_{-i}r_{-i}-\lambda_i\vmax),-\lambda r \right\}, & r>0,\\
\lambda m+\min\left\{\vmax\left(1-\sum\lambda_i\right),\min\limits_i(r_i-\lambda_{-i}r_{-i}-\lambda_i\vmax)\right\}, & \mbox{o/w}.
\end{cases}
\eeq
\end{lemma}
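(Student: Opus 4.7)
The plan is to separately evaluate each kind of infimum in the simplified dual representation \eqref{reducedproblem1} (which applies since $\lambda\geq 0$) and then combine them. Writing $R(p,\lambda)=\lambda m+\min\{\min_i g_i,G_0\}$ with $g_i:=\inf_{v_{-i}\in[0,\vmax]^{n-1}}(p_i(v_{-i})-\lambda_{-i}v_{-i}-\lambda_i\vmax)$ and $G_0:=\inf_{v\in W_0(p)}(-\lambda v)$, I aim to show $g_i=\min\{A_i,B\}$ for every $i$, where $A_i:=r_i-\lambda_{-i}r_{-i}-\lambda_i\vmax$ and $B:=\vmax(1-\sum_k\lambda_k)$, and that $G_0=-\lambda r$ when $r>0$ while $G_0=+\infty$ otherwise (in which case the term drops out of the outer minimum).

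The upper bound $g_i\leq\min\{A_i,B\}$ follows by plugging $v_{-i}=r_{-i}$ (which, by \eqref{p-lsa}, forces $p_i=r_i$ and yields value $A_i$) and $v_{-i}=\vmax\i_{n-1}$ (which forces $p_i=\vmax$ and yields value $B$). The reverse inequality is the main technical step. For an arbitrary $v_{-i}\in[0,\vmax]^{n-1}$, introduce the parameter $t:=\max\{0,\max_{j\in I(r)\setminus\{i\}}(v_j-r_j)/(\vmax-r_j)\}\in[0,1]$, so that $p_i(v_{-i})=r_i+(\vmax-r_i)t$. The definition of $t$ as a maximum gives $v_k\leq r_k+t(\vmax-r_k)$ for every $k\in I(r)\setminus\{i\}$, while $v_k\leq\vmax=r_k$ trivially for $k\notin I(r)\cup\{i\}$. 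Multiplying by $\lambda_k\geq 0$ and summing,
\[\lambda_{-i}v_{-i}\leq\lambda_{-i}r_{-i}+t\sum_{k\in I(r)\setminus\{i\}}\lambda_k(\vmax-r_k).\]
Substituting into $p_i(v_{-i})-\lambda_{-i}v_{-i}-\lambda_i\vmax$ yields an affine-in-$t$ lower bound; a short computation (using $r_k=\vmax$ for $k\notin I(r)$) shows it equals $A_i$ at $t=0$ and $B$ at $t=1$, so it is $(1-t)A_i+tB\geq\min\{A_i,B\}$ throughout $t\in[0,1]$, delivering $g_i\geq\min\{A_i,B\}$.

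For $G_0$, I will show from \eqref{p-lsa} that $v\in W_0(p)$ iff $v_i<r_i$ for every $i\in I(r)$ and $v_j<\vmax$ for every $j\notin I(r)$: one direction is immediate, and for the other, if $v_j\geq r_j$ for some $j\in I(r)$, the index $j^*\in I(r)$ attaining the largest $(v_k-r_k)/(\vmax-r_k)$ satisfies $p_{j^*}(v_{-j^*})\leq v_{j^*}$, contradicting membership in $W_0$. When $r>0$, this set is nonempty with $\sup_{W_0}\lambda v=\lambda r$ (approached from below along any $v\uparrow r$), so $G_0=-\lambda r$; if some $r_i=0$, the set is empty and $G_0=+\infty$. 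Since $B$ does not depend on $i$, $\min_i\min\{A_i,B\}=\min\{\min_iA_i,B\}$, and adding $\lambda m$ reproduces the two cases of \eqref{mlambda}. The main obstacle throughout is the sharp lower bound for $g_i$, which forces one to exploit the specific corner-hitting scaling of the LSA threshold functions rather than treat $p_i$ as a generic piecewise-linear function on the box.
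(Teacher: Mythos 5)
Your proof is correct and follows essentially the same route as the paper's: both reduce the $(n-1)$-dimensional infimum defining each $g_i$ to a one-dimensional affine problem in the common normalized score level ($t$ in your notation, $\eta$ in the paper's), whose endpoint values at $v_{-i}=r_{-i}$ and $v_{-i}=\vmax\i_{n-1}$ are exactly $A_i$ and $B$, and both treat the $W_0$ term identically (empty, hence $+\infty$, when some $r_i=0$). The only difference is in execution: the paper locates the minimizers via coordinate-wise convexity of the piecewise-affine objective, whereas you verify the value directly through the affine minorant $(1-t)A_i+tB$, which sidesteps characterizing where the infimum is attained and is if anything slightly more airtight.
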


The two cases arise because when $r_i=0$ for some $i$, $W_0(p)=\emptyset$, so $\inf\limits_{W_0(p)}(-\lambda v)$ jumps to $+\infty$ at all such points. 

One approach to maximize this function of $2n$ variables would be to maximize first over $\lambda$ and then over $r$. This corresponds to finding explicitly the worst-case revenue for every linear score auction and then finding the best auction. Even though this might seem a more natural approach, it turns out to be more cumbersome. 

So we optimize first over $r$ and then over $\lambda$ instead. Abusing notation, denote the revenue function by $R(r,\lambda)$. The function $r\to R(r,\lambda)$ is piecewise-linear and potentially discontinuous at points where $r_i=0$ for some $i$. Maximization suggests that at least two of the arguments of the minimum operator in \eqref{innerproblsol} should be equal to each other at an optimum, but it is unclear \emph{a priori}, which two (or more). Fortunately, one can show that (1) when $\sum_i\frac{\lambda_i}{1+\lambda_i}\leq 1$, all arguments but $\vmax\left(1-\sum\lambda_i\right)$, must be equal to each other; (2) the case $\sum_i\frac{\lambda_i}{1+\lambda_i}> 1$ is uninteresting from the standpoint of overall optimization over $r$ and $\lambda$. This is stated in the following two lemmas.

Given $\lambda\geq 0$, define the point $r^*(\lambda)$ by 
\[r_i^*(\lambda):=\frac{\lambda_i}{1+\lambda_i}\vmax.\] 
 
\begin{lemma}\label{rstaroptimal}
If $\sum_i\frac{\lambda_i}{1+\lambda_i}\leq 1$, $r^*(\lambda)$ solves the problem $\max\limits_r R(r,\lambda)$.
\end{lemma}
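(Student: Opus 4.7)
\textbf{Proof plan for Lemma \ref{rstaroptimal}.}

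The plan is to (i) compute $R(r^*(\lambda),\lambda)$ explicitly, and (ii) show that $R(r,\lambda)\le R(r^*(\lambda),\lambda)$ for every $r\in[0,\vmax]^n$ via an exact averaging identity that trivializes the maximization. It is convenient to write the arguments of the minimum in \eqref{mlambda} as
\[
A:=\vmax\Bigl(1-\sum_i\lambda_i\Bigr),\qquad c_i(r):=r_i-\lambda_{-i}r_{-i}-\lambda_i\vmax=(1+\lambda_i)r_i-\lambda r-\lambda_i\vmax,\qquad d(r):=-\lambda r,
\]
and to abbreviate $S:=\sum_i \lambda_i/(1+\lambda_i)\le 1$.

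First I would plug $r_i=r_i^*(\lambda)=\frac{\lambda_i}{1+\lambda_i}\vmax$ into these expressions. A direct calculation gives $c_i(r^*)=\lambda_i\vmax-\lambda r^*-\lambda_i\vmax=-\lambda r^*=d(r^*)$ for every $i$, and $\lambda r^*=\vmax\sum_i\lambda_i^2/(1+\lambda_i)$. To identify which argument of the minimum is active, I would note that
\[
A-(-\lambda r^*)=\vmax\Bigl(1-\sum_i\lambda_i+\sum_i\frac{\lambda_i^2}{1+\lambda_i}\Bigr)=\vmax\Bigl(1-\sum_i\frac{\lambda_i}{1+\lambda_i}\Bigr)=\vmax(1-S)\ge 0,
\]
so $A\ge -\lambda r^*$ and hence $R(r^*,\lambda)=\lambda m-\lambda r^*=\lambda m-\vmax\sum_i\lambda_i^2/(1+\lambda_i)$.

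The heart of the argument is to exhibit nonnegative weights that produce a convex combination of the $c_i(r)$'s and $d(r)$ which is independent of $r$. With $\alpha_i:=\lambda_i/(1+\lambda_i)$ and $\alpha_0:=1-S$, one has $\alpha_0\ge 0$ by hypothesis and $\alpha_0+\sum_i\alpha_i=1$. A one-line calculation,
\[
\sum_i\alpha_i c_i(r)+\alpha_0\,d(r)=\sum_i\lambda_i r_i-S\,\lambda r-\vmax\sum_i\frac{\lambda_i^2}{1+\lambda_i}-(1-S)\lambda r=-\lambda r^*,
\]
shows that this convex combination equals the constant $-\lambda r^*$ for every $r$. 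Consequently, $\min_i\{c_i(r),d(r)\}\le -\lambda r^*$, and since $A\ge -\lambda r^*$, the minimum of all three blocks of arguments in \eqref{mlambda} is at most $-\lambda r^*$. Thus $R(r,\lambda)\le\lambda m-\lambda r^*=R(r^*,\lambda)$ whenever $r>0$.

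The only wrinkle is the second case of \eqref{mlambda}, when some $r_j=0$ (so that $-\lambda r$ is dropped from the minimum). Here the plan is to observe that for such a $j$, $c_j(r)=-\lambda r-\lambda_j\vmax\le -\lambda r$, so $\min_i c_i(r)\le c_j(r)\le -\lambda r$, and therefore omitting or including the term $-\lambda r$ in the minimum yields the same value. The previous paragraph's bound then applies verbatim. I do not anticipate a genuine obstacle: the averaging identity is tight and $S\le 1$ is exactly what is needed to make $\alpha_0$ nonnegative; the computations are short, and the $r_j=0$ boundary is handled by a one-line observation.
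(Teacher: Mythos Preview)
Your proof is correct and considerably shorter than the paper's. The key averaging identity
\[
\sum_i \frac{\lambda_i}{1+\lambda_i}\,c_i(r)+\Bigl(1-\sum_i\frac{\lambda_i}{1+\lambda_i}\Bigr)d(r)=-\lambda r^*
\]
is exact, the weights are nonnegative precisely when $S\le 1$, and together with $A\ge -\lambda r^*$ this immediately gives $R(r,\lambda)\le \lambda m-\lambda r^*=R(r^*,\lambda)$. Your boundary observation that $c_j(r)\le -\lambda r$ whenever $r_j=0$ correctly shows the two lines of \eqref{mlambda} coincide there, so the argument covers all of $[0,\vmax]^n$.

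The paper takes a different, constructive route: it defines $E_i(r)=c_i(r)$ and builds a finite sequence of price vectors, at each step raising the coordinates in the current $\arg\min_i E_i(r)$ until another $E_j$ is equalized, and shows (using $S\le 1$ via the representation $g(r)=r_i(1+\lambda_i)(1-\sum_{j\in L_1}\lambda_j/(1+\lambda_j))+\text{const}$) that $g$ weakly increases at every step; the procedure terminates at $r^*(\lambda)$. Points with $r_i=0$ are handled by a separate case and an extension of the domain. Your averaging argument bypasses all of this.

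What each approach buys: yours is a one-shot upper bound and is the cleaner proof of the lemma as stated. The paper's iterative scheme, while longer, records \emph{where} the inequalities can be weak versus strict along the path, and this bookkeeping is exactly what the paper exploits in Proposition~\ref{optimal prices} to characterize the full set of optimal reserve prices (e.g., the last step is weak when the geometry constraint binds, yielding the one-parameter family of optima). If you wanted to recover that characterization from your approach, you would need to analyze equality in $\min\{c_i(r),d(r)\}\le -\lambda r^*$, which amounts to asking when all terms with positive weight coincide; this is doable but is essentially the content of the paper's step-by-step argument.
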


\begin{lemma}\label{geomconstraintexists}
Suppose $(\lambda^0,r^0)$ is a solution to $\max\limits_{r,\lambda}R(r,\lambda)$. Then, $\sum_i\frac{\lambda^0_i}{1+\lambda^0_i}\leq 1$.
\end{lemma}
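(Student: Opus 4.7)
The plan is to argue by contradiction: suppose $(\lambda^0,r^0)$ is optimal yet $\sum_i \mu_i^0 > 1$, where $\mu_i := \lambda_i/(1+\lambda_i)$, and construct a strictly better pair $(\lambda',r')$ with $\sum_i \mu'_i = 1$. To find $\lambda'$, I will scale $\lambda^0$ uniformly by a factor $c\in(0,1)$; to find $r'$, I will invoke Lemma \ref{rstaroptimal} and set $r' := r^*(\lambda')$.

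The starting observation is that $\vmax(1-\sum_i\lambda_i)$ appears as one of the arguments of the minimum in \eqref{mlambda} and carries no dependence on $r$, yielding the $r$-free upper bound
\[R(r^0,\lambda^0) \;\leq\; \lambda^0 m + \vmax\bigl(1 - \textstyle\sum_i \lambda_i^0\bigr).\]
For the construction, I would note that $c\mapsto \sum_i c\lambda_i^0/(1+c\lambda_i^0)$ is continuous and nondecreasing in $c$ on $[0,1]$, vanishing at $c=0$ and exceeding $1$ at $c=1$ by hypothesis. The Intermediate Value Theorem then delivers some $c\in(0,1)$ at which the sum equals $1$; set $\lambda':=c\lambda^0$ and $r' := r^*(\lambda')$. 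Because $\sum_i\mu'_i=1\leq 1$, Lemma \ref{rstaroptimal} applies, so $R(r',\lambda')=\max_r R(r,\lambda')$.

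A short computation using \eqref{mlambda} at $r=r^*(\lambda')$ shows that the inner arguments of the minimum (three in the first case of the formula, two in the second) all collapse to $\vmax(1-\sum_i\lambda'_i)$; the collapse hinges on the identity $\sum_i(\lambda'_i)^2/(1+\lambda'_i)=\sum_i\lambda'_i-\sum_i\mu'_i$, which at $\sum_i\mu'_i=1$ specializes to $\sum_i\lambda'_i-1$. Hence $R(r',\lambda') = \lambda' m + \vmax(1-\sum_i\lambda'_i)$. Subtracting the upper bound on $R(r^0,\lambda^0)$ gives
\[R(r',\lambda') - R(r^0,\lambda^0) \;\geq\; (1-c)\sum_i \lambda_i^0(\vmax - m_i),\]
which is strictly positive because $c<1$, because $m_i<\vmax$ for all $i$ by the standing assumption $m\in(0,\vmax)^n$, and because $\sum_i\mu_i^0>1$ forces at least one $\lambda_i^0>0$. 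This contradicts the assumed optimality of $(\lambda^0,r^0)$.

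The main obstacle is spotting the right perturbation: reducing a single component of $\lambda^0$ will not produce the clean algebraic simplification above, whereas uniform scaling drives $\lambda'$ precisely onto the boundary $\sum\mu_i=1$, where Lemma \ref{rstaroptimal} becomes available and the three arguments of the minimum collapse to a single value. Everything else is routine bookkeeping.
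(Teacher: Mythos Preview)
Your argument is correct. The key ingredients match the paper's: the $r$-free upper bound $R(r,\lambda)\le \lambda m+\vmax(1-\sum_i\lambda_i)$ from \eqref{mlambda}, and the fact that this bound is attained at $r^*(\lambda)$ once $\sum_i\lambda_i/(1+\lambda_i)=1$.

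The only substantive difference is the path you take from $\lambda^0$ to the boundary. The paper lowers one coordinate of $\lambda$ at a time (possibly to zero, then moves on to the next), staying throughout in the region $\sum_i\lambda_i/(1+\lambda_i)\ge 1$; along that entire region it shows directly that $\max_r R(r,\lambda)=\lambda m+\vmax(1-\sum_i\lambda_i)$, and then uses that this linear expression is strictly decreasing in each $\lambda_j$ since $m_j<\vmax$. You instead jump to the boundary in one step via uniform scaling $\lambda'=c\lambda^0$ and compare the two endpoint values directly. Your route is slightly more economical (one comparison instead of a sequential argument), while the paper's route has the minor advantage of not invoking Lemma~\ref{rstaroptimal} at all---it establishes that $r^*(\lambda)$ attains the upper bound throughout $\sum_i\mu_i\ge 1$ by a direct computation. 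In your write-up the appeal to Lemma~\ref{rstaroptimal} is in fact superfluous: you compute $R(r',\lambda')$ explicitly, so you never use that $r'$ is the $r$-maximizer.
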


In the proof of lemma \ref{rstaroptimal}, we build an iterative procedure that at each iteration moves the price vector $r$ to equalize a growing number of arguments of the minimum operator in \eqref{mlambda}, terminating when all but $\vmax\left(1-\sum\lambda_i\right)$ are equal to each other, implying that it finishes at $r^*(\lambda)$. Using the condition $\sum_i\frac{\lambda_i}{1+\lambda_i}\leq 1$, we show that at each step the value of \eqref{mlambda} does not decrease. The prices $r$ with $r_i=0$ for some $i$ are treated in a special way.

It follows from lemmas \ref{rstaroptimal}, \ref{geomconstraintexists} that any optimal Lagrange multiplier $\lambda^*$ solves the following problem:
\begin{align}
\max\limits_{\lambda\geq 0} &\mbox{ } \sum_{i=1}^n m_i\lambda_i-\frac{\lambda_i^2}{1+\lambda_i}\vmax\label{finalproblem}\\
\mbox{s.t. } &\sum_{i=1}^n\frac{\lambda_i}{1+\lambda_i}\leq 1  \label{geomconstr}
\end{align}

Recall from the proof of theorem \ref{main} that the constraint \eqref{geomconstr} ensures that the determinant of the matrix $A(\lambda)$ is nonnegative. It is no coincidence that this constraint appears also in the present part of analysis. Indeed, if this constraint is violated, the transformed threshold functions ($\tilde{p}$) are mutually located as in figure \ref{fig:2geom}, right. But in this case, proceeding to the revenue-improving LSA improves revenue strictly as threshold functions are moved strictly up (except at one point), and the object is allocated at every value profile. Hence, the initial mechanism (which could be itself an LSA) was suboptimal\footnote{This is a geometric intuition behind the (algebraic) proof of Lemma \ref{geomconstraintexists}.}. Thus, the constraint \eqref{geomconstr} eliminates situations with the ``wrong'' implied geometry of $\tilde{p}$. We refer to it as ``geometry constraint''.

The solution to the problem \eqref{finalproblem} differs depending on whether the geometry constraint \eqref{geomconst} binds. If the constraint does not bind, the solution can be easily computed to be 
\[\lambda_i^*=\sqrt{\frac{\vmax}{\vmax-m_i}}-1.\]
Thus, the geometry constraint binds whenever $\sum_i\frac{\lambda^*_i}{1+\lambda_i^*}\geq 1$, that is, when
\[\sum_{i=1}^n\sqrt{1-m_i/\vmax}\leq n-1,\]
i.e., when the means are relatively high. Whether this inequality holds will determine two different regimes for the solution of the optimal auction problem.

When the geometry constraint \eqref{geomconstr} binds, the nonnegativity constraints might also start to bind for bidders whose means are relatively low (even though the means are high overall.) The full solution to the problem \eqref{finalproblem}-\eqref{geomconstr} is given in the following lemma.

\begin{lemma}\label{optimallambdas}
Enumerate bidders in such a way that $m_1\leq m_2\leq\ldots\leq m_n$. The solution to the problem \eqref{finalproblem}-\eqref{geomconstr} is as follows:
\begin{enumerate}
\item If $\sum\limits_{i=1}^n\sqrt{1-m_i/\vmax}>  n-1$, 
\beq\label{lambdalowmeans}
\lambda_i^*=\sqrt{\frac{\vmax}{\vmax-m_i}}-1\mbox{ for all }i.
\eeq
\item If $\sum\limits_{i=1}^n\sqrt{1-m_i/\vmax}\leq n-1$, 
\beq\label{lambdahighmeans}
\lambda_i^*=\begin{cases}
0, & i<k^*;\\
\frac{\sum\limits_{j=k^*}^n\sqrt{\vmax-m_j}/(n-k^*)}{\sqrt{\vmax-m_i}}-1, & i\geq k^*,
\end{cases}
\eeq
where $k^*=\min\left\{k\leq n-1|\sum\limits_{i=k+1}^n\frac{\sqrt{\vmax-m_i}}{\sqrt{\vmax-m_k}}>n-k-1\right\}$.
\end{enumerate}
\end{lemma}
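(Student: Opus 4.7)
My plan is to recognize (\ref{finalproblem})--(\ref{geomconstr}) as a concave program and solve it via KKT conditions. First I would change variables to $\mu_i := \lambda_i/(1+\lambda_i)$, a bijection from $[0,\infty)$ to $[0,1)$ that maps the geometry constraint $\sum_i \lambda_i/(1+\lambda_i) \le 1$ to the standard simplex $\sum_i \mu_i \le 1$, $\mu_i \ge 0$. Using $\lambda_i^2/(1+\lambda_i) = \mu_i^2/(1-\mu_i)$, the objective becomes $\sum_i h_i(\mu_i)$ with
$$h_i(\mu_i) = -\frac{\vmax - m_i}{1-\mu_i} + \vmax\mu_i + (\vmax - m_i),$$
and $h_i''(\mu_i) = -2(\vmax-m_i)/(1-\mu_i)^3 < 0$. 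The reparametrized problem is therefore strictly concave maximization over a convex set, so KKT is both necessary and sufficient and the optimum is unique.

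Next I would form the Lagrangian with multiplier $\eta \ge 0$ on the simplex constraint and $\xi_i \ge 0$ on $\mu_i \ge 0$. Stationarity reads $\vmax - (\vmax - m_i)/(1-\mu_i)^2 = \eta - \xi_i$. At an interior $\mu_i > 0$ this yields $1 - \mu_i = \sqrt{(\vmax - m_i)/(\vmax - \eta)}$, equivalently $\lambda_i = \sqrt{(\vmax - \eta)/(\vmax - m_i)} - 1$; at $\mu_i = 0$ it reduces to $m_i \le \eta$. I would then split on whether the simplex constraint is tight. If $\eta = 0$, each $\mu_i$ is interior (since $m_i > 0$), giving $\lambda_i = \sqrt{\vmax/(\vmax - m_i)} - 1$ as in (\ref{lambdalowmeans}), and feasibility $\sum_i \mu_i \le 1$ becomes $\sum_i \sqrt{1 - m_i/\vmax} \ge n-1$, matching part~1. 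If $\eta > 0$, then $\sum_i \mu_i = 1$; after sorting $m_1 \le \ldots \le m_n$, the $\mu_i = 0$ coordinates form a prefix. Letting $k^*$ be the smallest index with $m_{k^*} > \eta$, plugging $\mu_i = 1 - \sqrt{(\vmax - m_i)/(\vmax - \eta)}$ for $i \ge k^*$ into $\sum_{i \ge k^*}\mu_i = 1$ produces
$$\sqrt{\vmax - \eta} = \frac{\sum_{i = k^*}^n \sqrt{\vmax - m_i}}{n - k^*},$$
and substituting back into the interior-$\mu_i$ formula gives exactly (\ref{lambdahighmeans}).

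The main obstacle will be to match the intrinsic definition of $k^*$ in the statement with the one that emerges from the Lagrangian. Self-consistency in the derivation requires $\eta < m_{k^*}$, i.e., $\sqrt{\vmax - \eta} > \sqrt{\vmax - m_{k^*}}$; combining with the displayed formula for $\sqrt{\vmax - \eta}$ and isolating $\sqrt{\vmax - m_{k^*}}$ on the right-hand side produces precisely $\sum_{i = k^*+1}^n \sqrt{\vmax - m_i}/\sqrt{\vmax - m_{k^*}} > n - k^* - 1$. Conversely, the requirement that indices $i < k^*$ satisfy the KKT side-condition $m_i \le \eta$ (so driving them to zero is consistent) forces $k^*$ to be the \emph{smallest} index for which the above inequality holds; I would verify this by showing that if the inequality failed at some $k < k^*$, then the putative $\eta$ computed from the formula with threshold $k$ would already force $m_k \le \eta$, contradicting the interior condition at index $k$. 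Finally I would check that in the regime $\sum_i \sqrt{1 - m_i/\vmax} \le n-1$ the set of admissible $k$ is non-empty: otherwise the $\eta = 0$ solution from part~1 would be feasible, contradicting the regime hypothesis. Strict concavity of the reparametrized program then pins down uniqueness and closes the argument.
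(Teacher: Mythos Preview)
Your proposal is correct and, at its core, follows the same KKT-with-explicit-multipliers route as the paper. The paper's proof is terse: it writes down the unconstrained optimum, then in the constrained case simply exhibits the multipliers $\xi^*=\vmax-\bigl(\tfrac{1}{n-k^*}\sum_{i\ge k^*}\sqrt{\vmax-m_i}\bigr)^2$ and $\kappa_i^*=\max\{\xi^*-m_i,0\}$ and asserts that these, together with \eqref{lambdahighmeans}, satisfy first-order and complementary slackness, concluding via uniqueness of the KKT point plus existence of a maximizer. Your substitution $\mu_i=\lambda_i/(1+\lambda_i)$ is a genuine refinement: in the original $\lambda$-coordinates the feasible region $\sum_i \lambda_i/(1+\lambda_i)\le 1$ is \emph{not} convex (the left-hand side is concave in $\lambda$), so the paper must lean on KKT necessity together with uniqueness of the stationary point, whereas after your change of variables the program is strictly concave over a simplex and KKT sufficiency is immediate. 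Your derivation of the threshold $k^*$ from the Lagrangian conditions $m_{k^*}>\eta\ge m_{k^*-1}$ is also more transparent than the paper's verification-by-plugging-in; the one place your sketch is slightly loose is the minimality of $k^*$, but the required monotonicity follows easily once you note that $m_j\le\eta$ for all $j<k^*$ gives $\sqrt{\vmax-m_j}\ge\sqrt{\vmax-\eta}=\tfrac{1}{n-k^*}\sum_{i\ge k^*}\sqrt{\vmax-m_i}$, from which the defining inequality fails at every $k<k^*$.
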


The bidders with $\lambda^*_i=0$ will be the ones who can be excluded from the auction without loss of worst-case revenue. For a given bundle of parameters $(m,\vmax)$, denote the set $\{i: \lambda_i^*=0\}$ by $WE(m,\vmax)$ -- those are bidders that are \emph{weakly excluded}. Denote the complement set of bidders by $SI(m,\vmax)$ -- those that are \emph{strictly included}.  Note that no more than $n-2$ bidders can be weakly excluded in any case. This may be explained by the need to keep at least a minimal level of competition. It follows that no one can be weakly excluded if $n=2$.

Given the definition of $k^*$, the condition that ensures that no bidder gets weakly excluded for $n\geq 3$ is 
\[\sqrt{\vmax-m_1}<\frac{n-1}{n-2}\frac{\sum\limits_{i=2}^n\sqrt{\vmax-m_i}}{n-1}.\] 
That is, the lowest of the means has to be relatively high still.

Now, given the optimal $\lambda^*$ stated in lemma \ref{optimallambdas}, we want to recover optimal reserve prices $r^*$ for the linear score auction. Recall that by Lemma \ref{rstaroptimal}, $r^*(\lambda)$ given by $r^*_i=\frac{\lambda_i}{1+\lambda_i}\vmax$ is \emph{an} optimal vector of prices. However, there may be others. 
Indeed, at some of the steps in the proof of Lemma \ref{rstaroptimal}, the inequalities may have been only weak. Carefully tracing this leads to the following answer. 

\begin{proposition}\label{optimal prices}
The set of optimal generalized reserve prices for the corner-hitting linear score auction is as follows: 
\begin{enumerate}
\item If $\sum\limits_{i=1}^n\sqrt{1-m_i/\vmax}> n-1$ (means are low), there is a unique vector of optimal prices $r^*$ given by
\beq\label{priceslowmeans}
r^*_i=\vmax-\sqrt{\vmax(\vmax-m_i)}\mbox{ for all } i;
\eeq
\item If $\sum\limits_{i=1}^n\sqrt{1-m_i/\vmax}\leq n-1$ (means are high), a vector of reserve prices $r^*\in[0,\vmax]^n$ is optimal if and only if it is satisfies the following system
\beq \label{priceshighmeans}
\begin{cases}
\sum\limits_{i\in SI(m,\vmax)} r^*_i\leq \vmax;\\
\frac{\vmax-r^*_i}{\vmax-r^*_j}=\sqrt{\frac{\vmax-m_i}{\vmax-m_j}}\mbox{ for all }i,j\in SI(m,\vmax),
\end{cases}
\eeq
where $SI(m,\vmax)$ is the set of bidders for whom the optimal Lagrange multipliers $\lambda^*$, as given by \eqref{lambdahighmeans}, are positive.
\end{enumerate}
\end{proposition}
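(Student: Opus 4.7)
The plan is to use Lemma \ref{rstaroptimal} to produce one optimal vector and then to characterize the full set of optimizers of $R(r,\lambda^*)$ for the (unique) optimal $\lambda^*$ from Lemma \ref{optimallambdas}. The first step is to substitute the explicit formulas for $\lambda^*$ into $r^*_i=\lambda^*_i\vmax/(1+\lambda^*_i)$: in the low-means regime this produces exactly \eqref{priceslowmeans}, while in the high-means regime it produces an element of \eqref{priceshighmeans} (namely the unique one on the boundary $\sum_{i\in SI} r^*_i=\vmax$, since $\sum_{i\in SI}\lambda^*_i/(1+\lambda^*_i)=1$). A reserve vector $r$ is then optimal if and only if $R(r,\lambda^*)=R(r^*(\lambda^*),\lambda^*)$: from any optimal $r'$, Lemma \ref{optexistlambda} produces a $\lambda'$ attaining the supremum, this $\lambda'$ is itself optimal for the outer problem, and uniqueness of $\lambda^*$ in Lemma \ref{optimallambdas} forces $\lambda'=\lambda^*$.

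Let $V:=\vmax\sum_j(\lambda^*_j)^2/(1+\lambda^*_j)$. Using \eqref{mlambda}, the condition $R(r,\lambda^*)=R(r^*(\lambda^*),\lambda^*)$ amounts to every argument of the inner minimum being $\geq -V$, i.e.\ (a)~$\lambda^* r\leq V$ (when $r>0$) and (b)$_i$~$(1+\lambda^*_i)r_i-\lambda^* r\geq\lambda^*_i\vmax-V$ for each $i$, together with at least one binding. Multiplying (b)$_i$ by $\lambda^*_i/(1+\lambda^*_i)$ and summing over $i$, and using the identity $\vmax\sum_i(\lambda^*_i)^2/(1+\lambda^*_i)=V$, gives the weighted-sum inequality
\[
(\lambda^* r-V)\Bigl(1-\sum_i\tfrac{\lambda^*_i}{1+\lambda^*_i}\Bigr)\geq 0.
\]

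In case 1, Lemma \ref{optimallambdas} yields $\sum_i\lambda^*_i/(1+\lambda^*_i)<1$, so this inequality forces $\lambda^* r\geq V$; combined with (a) it gives $\lambda^* r=V$, and substituting back into (b)$_i$ collapses each constraint to $(1+\lambda^*_i)r_i=\lambda^*_i\vmax$, pinning $r$ uniquely at $r^*(\lambda^*)$. In case 2 the geometry binds, the weighted-sum inequality is vacuous, and $\vmax(1-\sum_i\lambda^*_i)=-V$, so the first argument of the minimum already attains $-V$. Restricting the weighted sum to $i\in SI$ and using $\sum_{i\in SI}\lambda^*_i/(1+\lambda^*_i)=1$ yields the identity
\[
\sum_{i\in SI}\tfrac{\lambda^*_i}{1+\lambda^*_i}\bigl[(1+\lambda^*_i)r_i-\lambda^* r-\lambda^*_i\vmax+V\bigr]=0,
\]
in which every bracketed term is $\geq 0$ (by (b)$_i$) and every weight is strictly positive; hence (b)$_i$ must bind for every $i\in SI$. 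Taking pairwise differences of these binding equations yields $(1+\lambda^*_i)(\vmax-r_i)=(1+\lambda^*_j)(\vmax-r_j)$, and substituting $1+\lambda^*_i=A/\sqrt{\vmax-m_i}$ from \eqref{lambdahighmeans} produces the ratio condition in \eqref{priceshighmeans}. Dividing each binding (b)$_i$ by $1+\lambda^*_i$ and summing gives $\sum_{i\in SI}r_i=\vmax+(\lambda^* r-V)(|SI|-1)$, so (since $|SI|\geq 2$ in case 2) the condition $\lambda^* r\leq V$ is equivalent to the sum constraint $\sum_{i\in SI}r_i\leq\vmax$. The remaining (b)$_i$ for $i\notin SI$ reduce to $r_i\geq\lambda^* r-V$ and are automatic under the sum constraint, so reserves of weakly excluded bidders stay free in $[0,\vmax]$; the converse is a direct verification from the parametrization.

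I expect the principal obstacle to be handling the two-branch structure of \eqref{mlambda} when some $r_j=0$: the constraint (a) is then formally absent, and one must check separately that the binding (b)$_j$ (for $j\in SI$) or the $j$-th instance of (b)$_i$ (for $j\notin SI$) still forces $\lambda^* r\leq V$, so that the sum constraint remains equivalent to dual feasibility in every sub-case. A secondary subtlety is that the weighted-sum argument pins down binding of (b)$_i$ only for $i\in SI$, which is why the ratio constraint is quantified only over strictly included bidders.
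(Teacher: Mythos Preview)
Your argument is correct and takes a genuinely different route from the paper's own proof. The paper establishes the proposition by going back into the iterative equalization procedure from the proof of Lemma \ref{rstaroptimal} and tracking, step by step, which moves produce strict versus weak increases of $g(r)$: in the low-means regime every step is strict (since $\sum_i\lambda^*_i/(1+\lambda^*_i)<1$ and all $\lambda^*_i>0$), so $r^*(\lambda^*)$ is the unique maximizer; in the high-means regime only the final step (where all $E_i$ have been equalized and $L_1$ is the full set of included bidders) is flat, yielding the one-parameter family of solutions described by the linear system $A(\lambda^*)r=b$ intersected with $r\leq r^*(\lambda^*)$. The ratio condition and the sum bound are then read off from the kernel structure of $A(\lambda^*)$.

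Your approach bypasses the iterative procedure entirely and characterizes the optimizers of $R(\cdot,\lambda^*)$ directly, via the weighted-sum identity
\[
\sum_i \frac{\lambda^*_i}{1+\lambda^*_i}\bigl[(1+\lambda^*_i)r_i-\lambda^* r-\lambda^*_i\vmax+V\bigr]=(\lambda^* r-V)\Bigl(1-\sum_i\frac{\lambda^*_i}{1+\lambda^*_i}\Bigr).
\]
This is cleaner and more self-contained: one does not need to revisit the internals of Lemma \ref{rstaroptimal}'s proof, and the same identity drives both cases (forcing $\lambda^* r=V$ and hence equality in every $(b)_i$ when the geometry constraint is slack, and forcing equality in every $(b)_i$ for $i\in SI$ when it binds). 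The paper's approach, on the other hand, makes the \emph{source} of multiplicity more transparent --- it is precisely the last, flat step of the equalization --- and connects naturally to the geometric picture of the $E_i$ being successively merged. Your identification of the boundary case (some $r_j=0$, where the $-\lambda r$ argument drops out) as the main subtlety is accurate, and the sketched resolution via the binding $(b)_j$ is correct; the paper handles this through \textbf{Case 2} of Lemma \ref{rstaroptimal}'s proof rather than explicitly here.
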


Note that if the means are high, the set of reserve prices for strictly included bidders is one-dimensional. 
If some bidders are weakly excluded, the reserve prices $r_i$ for such bidders can be set arbitrarily (and the exclusion is achieved when $r_i=\vmax$). Figure \ref{fig:parametric}, left, depicts areas of parameter space that correspond to low/high means and to the presence/absence of weak exclusion for a specific example with $n=3$. Figure \ref{fig:parametric}, right, shows the set of optimal price vectors for the case of high means if $n=2$.

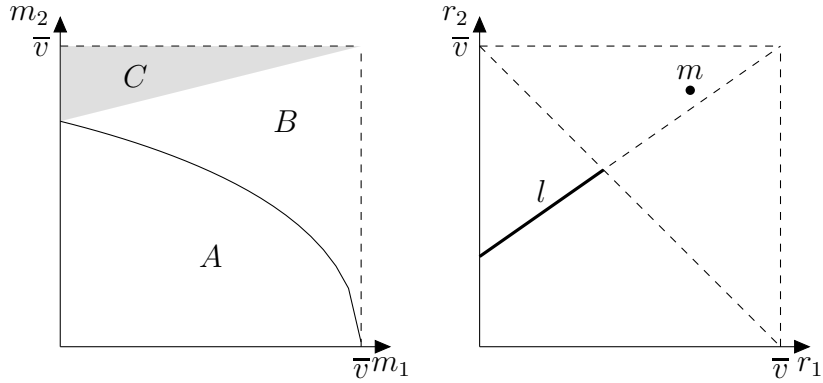
\begin{figure}[h!]
\centering

\begin{tikzpicture}[>=triangle 45, xscale=4,yscale=4]
\draw[->] (0,0) -- (1.1,0) node[below] {$m_1$};
\draw[->] (0,0) -- (0,1.1) node[left] {$m_2$};
\draw[dashed] (1,0)--(1,1);
\draw[dashed] (0,1)--(1,1);
\node[below] at (1,0){$\vmax$};
\node[left] at (0,1){$\vmax$};
\fill[gray, nearly transparent] (0,0.75)--(1,1)--(0,1);
\draw[domain=0:1]
plot (\x, {1-pow(1-sqrt(1-\x)/2,2)});
\node at (0.5,0.3){$A$};
\node at (0.75,0.75){$B$};
\node at (0.25,0.9){$C$};
\end{tikzpicture}
\begin{tikzpicture}[>=triangle 45, xscale=4,yscale=4]
\draw[->] (0,0) -- (1.1,0) node[below] {$r_1$};
\draw[->] (0,0) -- (0,1.1) node[left] {$r_2$};
\draw[dashed] (1,0)--(1,1);
\draw[dashed] (0,1)--(1,1);
\node[below] at (1,0){$\vmax$};
\node[left] at (0,1){$\vmax$};
\draw[dashed] (0,1)--(1,0);
\draw[dashed] (0,0.3)--(1,1);

\node[circle,fill=black,inner sep=1.25pt,minimum size=1pt] at (0.7,0.853) {};
\node[above] at (0.7,0.853){$m$};

\draw[very thick] (0,0.3)--(7/17,10/17);
\node[above] at (7/34,151/340){$l$};
\end{tikzpicture}
\caption{(Left.) Different regimes for the optimal solution depending on the means, $n=3$. One bidder's value has a mean of $m_1$ and the other two  bidders' values have a mean of $m_2$. In area $A$ means are low and the vector of optimal reserve prices is unique. In areas $B$ and $C$ means are high and there are multiple solutions. In addition, in area $C$ the bidder with the lowest mean is weakly excluded from the auction. (Right.) The set of optimal price vectors in the high-means case if $n=2$, $m_2>m_1$.}\label{fig:parametric}
\end{figure}

As noted in the introduction, the solution features discrimination against stronger bidders. Indeed, for any two bidders $i,j\in SI(m,\vmax)$, $m_i>m_j$ implies $r_i^*>r^*_j$ for any optimal price vector $r^*$. This is consistent with the standard result in auction theory \citep{myerson1981optimal}. The differences from the classic solution include weak exclusion of bidders discussed above, multiplicity of solutions, the fact that the set of optimal optimal reserve prices depends on $n$ in the symmetric case.

The multiplicity of solutions in the high-means case stems from the fact that many auctions share the same relevant region of $conv[t^M(v)]$ and induce the same worst-case distribution. This distribution is such that the sale always happens (see more on this in section \ref{wcdistrsect}). Note, however, that the (generalized) reserve prices also control the slope of the boundary determining which bidder gets the object, which is important. This is why the prices must belong to a certain line segment. %

In the symmetric case, we obtain the following corollary of proposition \ref{optimal prices}:

\begin{corollary}
Suppose bidders are symmetric ($m_i=m_j$). Then, a second-price auction with reserve price $r$ solves problem \eqref{main} if and only if $r=\vmax-\sqrt{\vmax(\vmax-m)}$ and $n < \frac{1}{1-\sqrt{1-m/\vmax}}$ or $r\in[0,\vmax/n]$ and $n\geq \frac{1}{1-\sqrt{1-m/\vmax}}$.\footnote{This parametric solution has been obtained by \cite{koccyiugit2017distributionally}. However, they take the auction format -- SPA --as given.}
\end{corollary}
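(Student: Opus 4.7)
The plan is to specialize Proposition~\ref{optimal prices} to the symmetric case $m_1=\cdots=m_n=m$, using the fact that an SPA with reserve $r<\vmax$ coincides with the corner-hitting LSA $LSA(r\i_n)$ (after rescaling all scores by the common positive factor $\vmax-r$, which leaves the allocation unchanged). By Theorem~\ref{main} and Proposition~\ref{optauction exists}, the supremum in~\eqref{problem} is attained by an optimal corner-hitting LSA, and Proposition~\ref{optimal prices} characterizes the set of all optimal reserve vectors. Hence the SPA with reserve $r$ solves~\eqref{problem} if and only if the uniform vector $r\i_n$ belongs to this set.

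First I would simplify the low/high-means dichotomy. Under symmetry, $\sum_i\sqrt{1-m_i/\vmax}=n\sqrt{1-m/\vmax}$, and the inequality $\sum_i\sqrt{1-m_i/\vmax}>n-1$ rearranges to $n\bigl(1-\sqrt{1-m/\vmax}\bigr)<1$, i.e., $n<\frac{1}{1-\sqrt{1-m/\vmax}}$. This matches the regime split in the corollary.

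In the low-means subcase, part~1 of Proposition~\ref{optimal prices} gives a unique optimal vector $r_i^*=\vmax-\sqrt{\vmax(\vmax-m)}$ for all $i$. Since this vector is already uniform, the SPA with reserve $r$ is optimal if and only if $r=\vmax-\sqrt{\vmax(\vmax-m)}$. In the high-means subcase, the main preliminary is to identify $SI(m,\vmax)$. I would argue that no bidder is weakly excluded by reading off $k^*$ from Lemma~\ref{optimallambdas}: for $k=1$, symmetry yields $\sum_{i=2}^n\sqrt{\vmax-m_i}/\sqrt{\vmax-m_1}=n-1>n-2$, so $k^*=1$ and $WE(m,\vmax)=\emptyset$, giving $SI(m,\vmax)=\{1,\ldots,n\}$. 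The system~\eqref{priceshighmeans} then reduces (the ratio condition becoming trivial) to $\sum_{i=1}^n r_i^*\leq\vmax$ together with $r_i^*=r_j^*$ for all $i,j$; substituting $r_i^*=r$ yields $nr\leq\vmax$, i.e., $r\in[0,\vmax/n]$.

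The only step demanding genuine attention is the verification that $k^*=1$ under symmetry, ensuring no bidder can be weakly excluded (so that the SPA's uniform reserves are comparable to the optimal vectors of Proposition~\ref{optimal prices} on the full set of bidders); the remainder of the argument is straightforward algebraic substitution.
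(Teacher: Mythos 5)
Your argument is correct and is exactly the paper's intended derivation: the corollary is stated as a direct specialization of Proposition~\ref{optimal prices} (together with Theorem~\ref{main} and Proposition~\ref{optauction exists} to pass from optimality among corner-hitting LSAs to optimality in $\mathcal{M}$), and your computations of the regime boundary, of $k^*=1$ (hence $SI(m,\vmax)=\{1,\ldots,n\}$), and of the reduction of \eqref{priceshighmeans} to $nr\leq\vmax$ are all accurate. The identification of the SPA with reserve $r$ as $LSA(r\i_n)$ via the common rescaling of scores is the right (and needed) bridging observation.
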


Thus, the set of optimal reserve prices decreases in strong set order and converges to point zero as $n\to\infty$. Thus, as competition increases, a reserve price is no longer needed to protect the seller from low-revenue distributions. The result that a second-price auction without a reserve is an optimal mechanism for $n$ sufficiently high echoes results in \cite{he2019robustly}, \cite{che2019robust} and \cite{suzdal2020distributionally} who show that setting a reserve equal to the seller's own value is either exactly or asymptotically optimal in related maxmin settings. 

Note also that with symmetric bidders, the seller never sets the reserve above $\vmax/n$, even if the mean $m$ is arbitrarily close to $\vmax$. This contrasts with the classical case in which the optimal reserve may be arbitrarily close to $\vmax$ if the distribution is concentrated in a small neighborhood of $\vmax$. This is yet another manifestation of the extreme cautiousness exercised by the seller.

\section{The set of optimal mechanisms}\label{set}
The proof of Theorem \ref{main} establishes that a corner-hitting linear score auction is an optimal mechanism in the considered environment. But do there exist other optimal mechanisms?  In other words, to what extent is linearity important for the optimality of the mechanism?

In this section, we study this aspect of the problem for the case of two bidders. We characterize the whole set of optimal (deterministic) mechanisms. This set is far from being a singleton. The features of the set turn out to depend significantly on whether the means are low ($\sqrt{1-m_1/\vmax}+\sqrt{1-m_2/\vmax}>1$) or high ($\sqrt{1-m_1/\vmax}+\sqrt{1-m_2/\vmax}\leq 1$). When the means are high, the optimal mechanism is pinned down uniquely for sufficiently high values, that is, linearity of the relative boundary is indeed necessary for optimality. 

We start with a lemma that says that all optimal mechanisms induce  the same Lagrange multipliers on the means constraints in Nature's problem. Denote by $R(M,\lambda)$ the revenue under a mechanism $M$ and Lagrange multipliers $\lambda$. Let $\lambda^*(M)$ be any Lagrange multipliers solving the problem \eqref{biconj} (equivalently, $\max\limits_\lambda R(M, \lambda)$ for a mechanism $M$.) Denote by $LSA^{opt}$ any optimal linear score auction.

\begin{lemma}\label{lambdasame}
If a mechanism $M^0$ solves problem \eqref{main}, any solution $\lambda^*(M^0)$ to the dual of Nature's problem \eqref{biconj} given $M^0$ coincides with $\lambda^*(LSA^{opt})$, given by \eqref{lambdalowmeans} and \eqref{lambdahighmeans}, 
\end{lemma}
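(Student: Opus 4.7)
Let $V^* := \underline{R}(M^0) = \underline{R}(LSA^{opt})$ denote the optimal worst-case revenue, and let $\lambda^0$ be any Lagrange multiplier solving the dual of Nature's problem given $M^0$; by Lemma \ref{optexistlambda}, $R(M^0, \lambda^0) = \underline{R}(M^0) = V^*$. The plan is to show that $\lambda^0$ solves the reduced parametric problem \eqref{finalproblem}-\eqref{geomconstr}, whose unique maximizer is $\lambda^*(LSA^{opt})$ by Lemma \ref{optimallambdas}.

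First, I would establish $\lambda^0 \geq 0$ componentwise. If some coordinate were negative, the Grand Case II construction in the proof of Theorem \ref{main} applied to $(M^0, \lambda^0)$ would yield threshold functions $p^{new}$ and a nonnegative vector $\lambda^{**}$ with $R(p^{new}, \lambda^{**}) > R(M^0, \lambda^0) = V^*$ by Lemmas \ref{pnewbetter} and \ref{newlambdapositive}. But then $\underline{R}(p^{new}) \geq R(p^{new}, \lambda^{**}) > V^*$, contradicting the optimality of $V^*$ over $\mathcal{M}$. Hence $\lambda^0 \geq 0$.

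Next, I would exhibit an optimal LSA whose dual optimizer is $\lambda^0$. Feed the pair $(M^0, \lambda^0)$ into Grand Case I of the proof of Theorem \ref{main}; inspection of Lemma \ref{LSAfunctionsgreater} and Propositions \ref{tildebetter}, \ref{LSAbetter}, \ref{LSAbetter2} confirms that only $\lambda^0 \geq 0$ (and $M^0 \in \mathcal{M}$) is actually used in those inequalities, not the optimality of $\lambda^0$ for $M^0$. This yields a corner-hitting LSA $LSA(r')$ satisfying $R(LSA(r'), \lambda^0) \geq R(M^0, \lambda^0) = V^*$. Combined with $\underline{R}(LSA(r')) \leq V^*$ (since $V^*$ is the overall maximum), this forces equality throughout, so the pair $(r', \lambda^0)$ attains the joint maximum
\[
\max_{r \in [0, \vmax]^n,\, \lambda \geq 0} R(r, \lambda) \;=\; V^*.
\]

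Finally, Lemma \ref{geomconstraintexists} implies that $\lambda^0$ satisfies \eqref{geomconstr}, and Lemma \ref{rstaroptimal} (applied at $\lambda = \lambda^0$) reduces the outer maximization to exactly \eqref{finalproblem}-\eqref{geomconstr} in $\lambda$ alone. By Lemma \ref{optimallambdas}, this problem admits a unique solution, equal to $\lambda^*(LSA^{opt})$ given by \eqref{lambdalowmeans}-\eqref{lambdahighmeans}. Since both $\lambda^0$ and $\lambda^*(LSA^{opt})$ solve it, they coincide. The main subtlety is the assertion that the Grand Case I construction still works when $\lambda^0$ is merely nonnegative rather than optimal for $R(M^0, \cdot)$; addressing this amounts to a careful check that the supporting-hyperplane transformation \eqref{tildepdef} and the fixed-point/LSA step use only $\lambda^0 \geq 0$, which is evident from the relevant proofs in Section \ref{proof}.
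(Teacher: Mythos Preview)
Your proposal is correct and follows essentially the same approach as the paper: feed $(M^0,\lambda^0)$ into the Theorem~\ref{main} construction to obtain a dominating LSA at the same multiplier, conclude that $(r',\lambda^0)$ attains the joint optimum, and then invoke uniqueness of the parametric solution. You are more explicit than the paper on two points it glosses over---first ruling out negative components of $\lambda^0$ via Grand Case~II before applying Grand Case~I, and tracing uniqueness back through Lemmas~\ref{geomconstraintexists}--\ref{optimallambdas} rather than asserting it for $R(LSA^{opt},\cdot)$ directly---but the logical skeleton is identical.
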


\begin{proof}
Denote by $R^*$ the optimal revenue achieved by any mechanism. As $M_0$ is optimal,  $R^*=R(M^0,\lambda^*(M^0))$. By the proof of theorem \ref{main}, there exists a linear score auction $LSA^0$ such that $R(M^0,\lambda^*(M^0))\leq R(LSA^0,\lambda^*(M^0))$. Hence, $R(LSA^0,\lambda^*(M^0))\geq R^*$ and this $LSA^0$ has to be, in a fact, an optimal LSA, $LSA^{opt}$. Thus, $\lambda^*(M^0)$ has to be a maximizer of $R(LSA^{opt},\lambda)$. Such a maximizer is unique and is given by \eqref{lambdalowmeans} and \eqref{lambdahighmeans}.
\eop
\end{proof}

The set of optimal mechanism for $n=2$ is characterized in the following two propositions.

\begin{proposition}\label{setmlow}
Suppose $\sqrt{1-m_1/\vmax}+\sqrt{1-m_2/\vmax}>1$. Let $r^*_i$ be the reserve prices of the optimal LSA given by \eqref{priceslowmeans} and $\lambda^*_i$ be the optimal Lagrange multipliers given by \eqref{lambdalowmeans}. Then, a mechanism $(p_1(v_2),p_2(v_1))$ solves problem \eqref{main} if and only if the following conditions all hold:
\begin{enumerate}
\item $p_i(v_{-i})\geq r^*_i+\lambda^*_i(v_{-i}-r^*_{-i})$ for all $v_{-i}\in[0,\vmax]$;
\item $p_i(v_{-i})\leq \left(\lambda^*_1r^*_1+\lambda_2^*r_2^*-\lambda^*_{-i}v_{-i}\right)/\lambda^*_i$ for $v_{-i}\leq r^*_{-i}$;
\item $p_i(v_{-i})$ are weakly increasing for $v_{-i}\geq r^*_{-i}$. 
\item $p_1(v_2)$ and $p_2(v_1)$ are inverse to each other for $v_i\geq r^*_i$ whenever possible, i.e. for any interval $(v_1',v_1'')$, $v_1'\geq r_1^*$ such that $p_2(v_1)$ is strictly increasing on it, $p_1(p_2(v_1))=v_1$ for all $v_1\in (v_1',v_1'')$, and similarly for $p_1(v_2)$.
\end{enumerate}
\end{proposition}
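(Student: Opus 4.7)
The plan is to exploit the expression $R(p,\lambda) = \lambda m + \min\{A_1,A_2,A_0\}$ from \eqref{reducedproblem} combined with Lemma \ref{lambdasame}. First I would invoke Lemma \ref{lambdasame} to fix the dual variable to be the optimal LSA's Lagrange multiplier $\lambda^*$ from \eqref{lambdalowmeans}. Then a mechanism $(p_1,p_2)$ is optimal iff $R(p,\lambda^*) \geq R^*$, iff each of $A_i = \inf_{v_{-i}}(p_i(v_{-i}) - \lambda_{-i}^* v_{-i}) - \lambda_i^*\vmax$ (for $i=1,2$) and $A_0 = \inf_{v\in W_0(p)}(-\lambda^* v)$ is at least $R^* - \lambda^* m$. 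A direct computation for the optimal LSA (using $r_i^* = \vmax - \sqrt{\vmax(\vmax-m_i)}$ and $\lambda_i^* = r_i^*/(\vmax - r_i^*)$) shows all three of these infima equal $-\lambda^* r^*$ and that the fourth candidate constraint $\vmax(1-\lambda_1^*-\lambda_2^*)$ from Lemma \ref{innerproblsol} is slack --- this latter fact is precisely where the low-means assumption $\sqrt{1-m_1/\vmax}+\sqrt{1-m_2/\vmax}>1$ enters.

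Next I would derive each of the four conditions in turn. The inequality $A_i \geq R^*-\lambda^* m$ translates, after rearranging, to the pointwise bound $p_i(v_{-i}) \geq r_i^* + \lambda_{-i}^*(v_{-i} - r_{-i}^*)$ for all $v_{-i}\in[0,\vmax]$, which is condition~1. The inequality $A_0 \geq -\lambda^* r^*$ is equivalent to the inclusion $W_0(p) \subseteq \{v: \lambda^* v \leq \lambda^* r^*\}$; restricting attention to $v$ in the sub-corner region $\{v_{-i} \leq r_{-i}^*\}$ and letting $v_i \uparrow p_i(v_{-i})$ yields the pointwise upper bound on $p_i$ that is condition~2. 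Conditions 3 and 4 then emerge from analyzing the same inclusion over the super-corner quadrant $(r_1^*,\vmax]\times(r_2^*,\vmax]$, where every point satisfies $\lambda^* v > \lambda^* r^*$, so $W_0(p)$ must be empty there. This emptiness, together with the supply constraint requiring $W_1(p)\cap W_2(p)=\emptyset$, forces the boundary of $W_0$ in that quadrant to be a single monotone curve, which in terms of thresholds is precisely the weak monotonicity of condition~3 together with the inverse-relation of condition~4.

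For sufficiency the arguments reverse. Condition~1 immediately gives the required lower bound on $A_i$. For $A_0$ I split $W_0(p)$ into four pieces according to the position of $v$ relative to the corner $(r_1^*,r_2^*)$. In the lower-left quadrant the bound $\lambda^* v \leq \lambda^* r^*$ is trivial; in each side strip $\{v_i\leq r_i^*,\, v_{-i}>r_{-i}^*\}$, combining $v_{-i}<p_{-i}(v_i)$ with condition~2 applied to $p_{-i}$ produces $\lambda_{-i}^*(v_{-i}-r_{-i}^*) < \lambda_i^*(r_i^*-v_i)$, which rearranges to $\lambda^* v < \lambda^* r^*$; in the upper-right quadrant, condition~4 implies that on any strictly-increasing piece of the inverse boundary the pair of strict inequalities $v_1 < p_1(v_2)$ and $v_2 < p_2(v_1)$ cannot simultaneously hold, and condition~3 together with the forced corner equality $p_i(r_{-i}^*)=r_i^*$ (which follows from conditions 1 and 2 evaluated at $v_{-i}=r_{-i}^*$) rules out the remaining flat pieces from harboring $W_0$-points.

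The main obstacle is this last case: verifying emptiness of $W_0$ in the super-corner quadrant. This is essentially a geometric argument about how two threshold functions fit together on the no-sale boundary. One must handle carefully the flat pieces of $p_i$ on which condition~4 is vacuous and show that condition~3, supplemented by the forced corner equality $p_i(r_{-i}^*)=r_i^*$, still prevents intrusion of $W_0$ above the dual supporting hyperplane; the inverse condition has to be interpreted carefully because $p_i$ need not be continuous or strictly monotone on intervals of flatness. A secondary care point is that all four conditions must mesh at the corner $(r_1^*,r_2^*)$ itself, which is what glues the sub-corner and super-corner arguments into a single characterization.
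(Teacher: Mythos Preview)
Your proposal is correct and follows essentially the same route as the paper: fix $\lambda^*$ via Lemma~\ref{lambdasame}, translate optimality into the three infimum bounds $A_1,A_2,A_0\ge R^*-\lambda^*m$, read off condition~1 from the $A_i$ bounds and condition~2 from the location of $W_0(p)$ relative to the hyperplane $\lambda^*v=\lambda^*r^*$, and obtain conditions~3--4 from the interplay of the supply constraint with the requirement that $W_0(p)$ not protrude above that hyperplane in the quadrant $v>r^*$. The only stylistic difference is that the paper packages the necessity argument for condition~1 by feeding $M$ into the $\tilde{p}$/fixed-point machinery of Theorem~\ref{main} (so that $r^*$ is recovered as the fixed point and $p_i\ge\tilde{p}_i$ yields the bound), whereas you unpack this and argue directly from the inequality $\inf_{v_{-i}}(p_i(v_{-i})-\lambda^*_{-i}v_{-i})\ge r^*_i-\lambda^*_{-i}r^*_{-i}$; these are equivalent. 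You also spell out the sufficiency side (the quadrant decomposition of $W_0$ and the corner equality $p_i(r^*_{-i})=r^*_i$) more explicitly than the paper, which simply declares it ``straightforward to check.''
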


\begin{proposition}\label{setmhigh}
Suppose $\sqrt{1-m_1/\vmax}+\sqrt{1-m_2/\vmax}\leq 1$. Let $\lambda^*_i$ be the optimal Lagrange multipliers given by \eqref{lambdahighmeans}. Note that $\lambda^*_1\lambda^*_2=1$. Let $r^*_i=\lambda_i^*/(1+\lambda_i^*)\vmax$ be the highest optimal reserve prices for the LSA.  Then, a mechanism $(p_1(v_2),p_2(v_1))$ solves problem \eqref{main} if and only if
the following conditions all hold:
\begin{enumerate}
\item  $p_i(v_{-i})= r^*_i+\lambda^*_i(v_{-i}-r^*_{-i})$ for all $v_{-i}\geq r^*_{-i}$;
\item $p_i(v_{-i})\geq r^*_i+\lambda^*_i(v_{-i}-r^*_{-i})$ for all $v_{-i}\in[0,\vmax]$;
\item $p_i(v_{-i})\leq \left(\lambda^*_1r^*_1+\lambda_2^*r_2^*-\lambda^*_{-i}v_{-i}\right)/\lambda^*_i$ for $v_{-i}\leq r^*_{-i}$.
\end{enumerate}
\end{proposition}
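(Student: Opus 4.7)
The plan is to apply Lemma~\ref{lambdasame}, which pins down the optimal Lagrange multiplier of any optimal mechanism to be the $\lambda^*$ of the optimal LSA. In the high-means regime Lemma~\ref{optimallambdas} yields $\lambda_1^*\lambda_2^*=1$, which together with the binding geometry constraint gives $r_1^*+r_2^*=\vmax$. A direct computation using \eqref{mlambda} shows that for the optimal LSA all three arguments of the minimum collapse to the common value $V:=\vmax(1-\lambda_1^*-\lambda_2^*)=-\lambda^* r^*$, so $R^*=\lambda^* m-\lambda^* r^*$. A mechanism with thresholds $p$ is therefore optimal if and only if $R(p,\lambda^*)\geq R^*$, which, after replacing $W_i(p)$ by $W_i^{\geq}(p)$ via Lemma~\ref{getridofO} and performing the inner minimization over $v_i$, reduces to two requirements: (a) for each $i$, the inequality in condition 2 of the proposition; and (b) $\sup_{v\in W_0(p)}(\lambda^* v)\leq\lambda^* r^*$.

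For necessity of condition 1, suppose $p_i(v_{-i}^0)$ strictly exceeds $r_i^*+\lambda_i^*(v_{-i}^0-r_{-i}^*)$ at some $v_{-i}^0\geq r_{-i}^*$. Choose $v_i$ in the nonempty interval between these two numbers and set $v_{-i}=v_{-i}^0$. Then $v_i<p_i(v_{-i})$, so $v\notin W_i$; condition 2 applied to $p_{-i}$ combined with $\lambda_1^*\lambda_2^*=1$ yields $p_{-i}(v_i)>v_{-i}^0$, so $v\notin W_{-i}$ either, placing $v\in W_0(p)$. A short algebraic manipulation then gives $\lambda^* v>\lambda^* r^*$, contradicting (b). Condition 3 is handled analogously: for $v_{-i}^0\leq r_{-i}^*$ violating it, one first checks that necessarily $p_i(v_{-i}^0)>r_i^*$ (otherwise condition 3 holds automatically), so the point $v_i=p_i(v_{-i}^0)-\epsilon$, $v_{-i}=v_{-i}^0$ lies in $W_0(p)$ by condition 2 for $p_{-i}$ and produces a violation of (b) as $\epsilon\to 0$.

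Sufficiency follows by fixing any $v\in W_0(p)$ and partitioning into four subcases based on the signs of $v_i-r_i^*$. If $v_1\geq r_1^*$ and $v_2\geq r_2^*$, condition 1 forces $v\in W_0$ to satisfy $v_i-r_i^*<\lambda_i^*(v_{-i}-r_{-i}^*)$ for both $i$; chaining these and using $\lambda_1^*\lambda_2^*=1$ yields the contradiction $v_1-r_1^*<v_1-r_1^*$, so this subcase is vacuous. If both $v_i<r_i^*$, the bound $\lambda^* v<\lambda^* r^*$ is immediate from $\lambda^*\geq 0$. In the two mixed subcases ($v_i\geq r_i^*$, $v_{-i}<r_{-i}^*$), condition 3 rewrites as $\lambda_i^* p_i(v_{-i})+\lambda_{-i}^* v_{-i}\leq\lambda^* r^*$, and $v_i<p_i(v_{-i})$ yields $\lambda^* v<\lambda^* r^*$. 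Hence $\sup_{v\in W_0(p)}(\lambda^* v)\leq\lambda^* r^*$, establishing (b).

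The main obstacle is the necessity of condition 1, i.e.\ the exact equality on the upper range. In the low-means case (Proposition~\ref{setmlow}) one has $\lambda_1^*\lambda_2^*<1$, which permits bumping $p_i$ strictly above the LSA line whenever $p_{-i}$ is adjusted to preserve an inverse-function relationship (condition 4 there), since the bumped point escapes $W_0$ into $W_{-i}$. The identity $\lambda_1^*\lambda_2^*=1$ in the high-means case eliminates this flexibility: condition 2 for $p_{-i}$ forces the bumped point into $W_0$ regardless of the choice of $p_{-i}$, so equality in condition 1 is compelled. No continuity of $p_i$ is required, since each argument evaluates $p_i$ only at the single point $v_{-i}^0$ and applies condition 2 pointwise to $p_{-i}$.
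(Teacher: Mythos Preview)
Your argument is correct and tracks the paper's proof closely for conditions~1 and~3: both you and the paper establish necessity by exhibiting a point of $W_0(p)$ lying strictly above the line $\lambda_1^* v_1+\lambda_2^* v_2=\lambda_1^* r_1^*+\lambda_2^* r_2^*$, which violates the bound $\sup_{v\in W_0(p)}(\lambda^* v)\le \lambda^* r^*$. Your sufficiency case analysis is a spelled-out version of the paper's one-line ``straightforward to check'' remark.

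The one genuine difference is in the necessity of condition~2. The paper reconstructs $\tilde p(M)$ by running the mechanism $M$ through the Theorem~\ref{main} construction: since $M$ is optimal, the dominating LSA produced there must itself be optimal, so its reserve vector $\hat r$ is a fixed point of $\tilde p(M)$ lying on the optimal-price segment of Proposition~\ref{optimal prices}; knowing the slope $\lambda^*$ from Lemma~\ref{lambdasame} and this fixed point pins down $\tilde p(M)$, and then $p\ge\tilde p$ yields condition~2. You instead obtain condition~2 directly from the algebraic requirement that $\inf_{v_{-i}}\bigl(p_i(v_{-i})-\lambda_{-i}^* v_{-i}\bigr)-\lambda_i^*\vmax\ge -\lambda^* r^*$, using the identity $\lambda_i^*(\vmax-r_i^*)=r_i^*$. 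Your route is shorter and avoids the fixed-point machinery; the paper's route is more conceptual in that it explains \emph{why} the lower envelope must be the LSA line (it is literally $\tilde p$). Both are valid.

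One minor remark: your reduction ``$M$ optimal $\iff R(p,\lambda^*)\ge R^*$'' uses Lemma~\ref{lambdasame} for the forward direction (so that $\lambda^*$ is actually the maximizer of $R(p,\cdot)$). You mention the lemma at the outset but it may be worth making explicit that this is where the ``only if'' half of the equivalence comes from.
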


In both propositions, sufficiency is easy to prove from the fact that for any mechanism satisfying the conditions it can be verified that $R(M,\lambda^*)=R^*$. Necessity is trickier. The main idea is that if one submits any optimal mechanism to the proof of theorem \ref{main}, the resulting LSA should be an optimal one, characterized by proposition \ref{optimal prices}. This means that one of fixed points of $\tilde{p}$, derived from $M$, must be an optimal vector of prices. Given that we know the slopes of $\tilde{p}$ from lemma \ref{lambdasame}, this reconstructs $\tilde{p}$ for any optimal mechanism $M$, thus yielding condition 1 in proposition \ref{setmlow} (per property $\tilde{p}\leq p$). Condition 2 in the same proposition stems from the fact that for any optimal mechanism, $\inf\limits_{v\in W_0}(-\lambda^*v)$ may not be lower than that for the optimal LSA. Finally, conditions 3 and 4 in proposition \ref{setmlow} stem from the fact that the threshold functions for the optimal mechanism must both satisfy supply constraint and be tight to each other, not allowing the ``holes'' that would expand the set $W_0$. 

In proposition \ref{setmhigh}, the linearity of threshold functions (condition 1) is just a consequence of the fact that that reconstructed $\tilde{p}$ are tight to each other, and thus, per the property $\tilde{p}\leq p$ and the fact that one shouldn't expand the set $W_0$ too much, threshold functions must coincide with $\tilde{p}$ when values are sufficiently high. Because the threshold functions are pinned down for $v_i\geq r^*_i$, one does not have to prove analogs of conditions 3 and 4 of proposition \ref{setmlow} in proposition \ref{setmhigh}. 

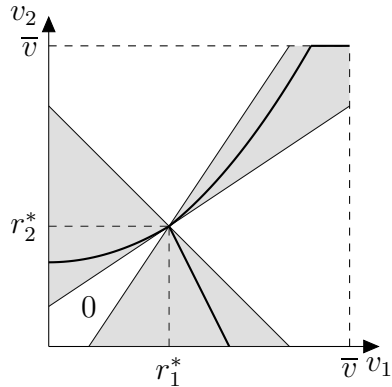
\begin{figure}[h!]\label{fig:setlowmeans}
\centering
\begin{tikzpicture}[>=triangle 45, xscale=4,yscale=4]
\draw[->] (0,0) -- (1.1,0) node[below] {$v_1$};
\draw[->] (0,0) -- (0,1.1) node[left] {$v_2$};
\draw[dashed] (1,0)--(1,1);
\draw[dashed] (0,1)--(1,1);
\node[below] at (1,0){$\vmax$};
\node[left] at (0,1){$\vmax$};

\draw[dashed] (0.4,0)--(0.4,0.4);
\draw[dashed] (0,0.4)--(0.4,0.4);
\node[left] at (0,0.4){$r^*_2$};
\node[below] at (0.4,0){$r^*_1$};

\draw (0,0.1333333)--(1,0.8);
\draw (0.1333333,0)--(0.8,1);
\draw (0,0.8)--(0.8,0);

\fill[gray,nearly transparent] (0.4,0.4)--(1,0.8)--(1,1)--(0.8,1);
\fill[gray,nearly transparent] (0.4,0.4)--(0,0.8)--(0,0.1333333);
\fill[gray,nearly transparent] (0.4,0.4)--(0.8,0)--(0.1333333,0);

\draw[thick,domain=0.4:0.87177] plot(\x,{0.24+\x*\x});
\draw[thick,domain=0:0.4] plot(\x,{0.28+0.75*\x*\x});
\draw[thick] (0.4,0.4)--(0.6,0);
\draw[thick] (0.87177,1)--(1,1);

\node at (0.1333333,0.1333333){0};

\end{tikzpicture}
\caption{Optimal mechanisms in a low-means case. $n=2$, $\vmax=1$, $m_i=0.64$. $r_i^*=0.4$ and $\lambda_i^*=2/3$. The graphs of all optimal threshold functions must lie within the shaded areas. A sample optimal mechanism is shown.}
\label{fig:setlowmeans}
\end{figure}

The sets of optimal mechanisms in the low-means and high-means case are depicted in figures \ref{fig:setlowmeans} and \ref{fig:sethighmeans}. We see that in the low means case (figure \ref{fig:setlowmeans}) it is necessary that the graphs of threshold functions of any optimal mechanism pass exactly through the point $(r_1^*,r_2^*)$. This signifies the fact that when means are relatively low, the main trade-off that the seller faces is the trade-off between not selling the good and the prices upon selling and not the trade-off between favoring bidder 1 or 2 when both values are above the reserve prices.  The existence of the area denoted by ``0'' shows that it is strictly optimal not to sell the good sometimes. 
Note, however, that the threshold functions should be still ``close enough'' to those of the optimal linear score auction, with the ``safe neighborhood'' of the optimal linear score auction given by $\lambda^*$.

In the high-means case (figure \ref{fig:sethighmeans}), the ``safe neighborhood'' collapses so that the any optimal mechanism should coincide with the optimal linear score auction for $v_i\geq r^*_i$. This stems from the fact that with high means, the probability of not selling the good will be small anyway (provided that prices are not very high), so the main trade-off is about designing the right mode of bidders' competition. Note also that, as the ``0'' set collapses, it becomes weakly optimal to always sell the object. 

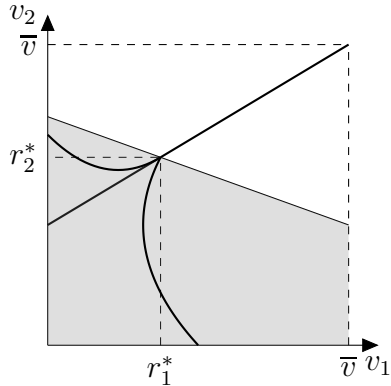
\begin{figure}[h!]\label{fig:sethighmeans}
\centering
\begin{tikzpicture}[>=triangle 45, xscale=4,yscale=4]
\draw[->] (0,0) -- (1.1,0) node[below] {$v_1$};
\draw[->] (0,0) -- (0,1.1) node[left] {$v_2$};
\draw[dashed] (1,0)--(1,1);
\draw[dashed] (0,1)--(1,1);
\node[below] at (1,0){$\vmax$};
\node[left] at (0,1){$\vmax$};

\draw[dashed] (0.375,0)--(0.375,0.625)--(0,0.625);
\node[left] at (0,0.625){$r^*_2$};
\node[below] at (0.375,0){$r^*_1$};

\draw[thick] (0.375,0.625)--(1,1);
\draw[thick] (0,0.4)--(0.375,0.625);
\draw (0,0.76)--(1,0.4);
\fill[gray,nearly transparent] (0,0.76)--(1,0.4)--(1,0)--(0,0);
\draw[domain=0:0.625, smooth, variable=\y, thick]  
plot ({8*\y*\y/7-(6.4)*\y/7+0.5}, {\y});
\draw[thick, domain=0:0.375] plot(\x,{0.7-\x+(32/15)*\x*\x});

\end{tikzpicture}
\caption{Optimal mechanisms in a high-means case. $n=2$, $\vmax=1$, $m_1=0.75$, $m_2=0.91$. $r_1^*=3/8$, $r^*_2=5/8$, $\lambda^*_1=1/\lambda_2^*=0.6$. The graphs of all optimal threshold functions must coincide with the thick line within the unshaded area, and must lie above it within the shaded area. The thick line itself represents an optimal mechanism -- a linear score auction that always allocates the good. Two curves correspond to another mechanism.}
\label{fig:sethighmeans}
\end{figure}

\section{Extensions}\label{ext}
\subsection{Different upper bounds}
In this section, we analyze the case where the seller may put different upper bounds on different bidders' values. For simplicity, we focus on the case of two bidders and assume that the support of the joint distribution $F$ is contained in $[0,\vmax_1]\times[0,\vmax_2]$ for some $\vmax_1\geq\vmax_2>0$. 

The optimality of linear score auctions still holds. The difference is that the optimal linear score auction ceases to be a corner-hitting linear score auction. 

\begin{theorem}\label{main2}
Suppose $\vmax_1\geq\vmax_2$. Then there exists a mechanism $M^*$ that solves \eqref{problem} and is a linear score auction with parameters $\beta_1=\gamma$, $\alpha_1=\gamma r_1$, $\beta_2=1$, $\alpha_2=r_2$ for some $r_i\in[0,\vmax_i]$ and $\gamma\in\mathbb{R}_+$.
\end{theorem}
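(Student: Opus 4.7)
The strategy is to rerun the proof of Theorem~\ref{main} essentially verbatim, with the common upper bound $\vmax$ replaced, wherever it occurs, by the bidder-specific bound $\vmax_i$. The convex-closure representation (Lemma~\ref{convex hull}) and the Fenchel--Moreau reformulation leading to \eqref{reducedproblem} depend on the support only through compactness, so they transfer directly; the only change is that the box $[0,\vmax]^n$ becomes $[0,\vmax_1]\times[0,\vmax_2]$. The optimal LSA produced by the construction will generically no longer be corner-hitting, because the ratios $v_i^*/\vmax_i$ coming from the fixed point $v^*$ typically differ across bidders, but a single uniform rescaling of both scores brings it into the form stated in the theorem.

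In Grand case~I ($\lambda^*\geq 0$), the profile $(\vmax_i,v_{-i})$ still belongs to $W_i^{\geq}(p)$, so the inner infima simplify to $\inf_{v_{-i}\in[0,\vmax_{-i}]}\bigl(p_i(v_{-i})-\lambda^*_{-i}v_{-i}-\lambda^*_i\vmax_i\bigr)$. The supporting-hyperplane replacement $p_i\to\tilde p_i$ of Step~1 and Proposition~\ref{tildebetter} are unaffected. For Step~2, Brouwer applied to $v\mapsto\tilde p(v)$ on $[0,\vmax_1]\times[0,\vmax_2]$ yields a fixed point $v^*$, from which I build the LSA with threshold functions
\[
\hat p_i(v_{-i})=v_i^*+(\vmax_i-v_i^*)\max\Bigl\{0,\;\frac{v_{-i}-v_{-i}^*}{\vmax_{-i}-v_{-i}^*}\Bigr\},
\]
with the obvious convention if $v_j^*=\vmax_j$ for some $j$ (then bidder $j$ is effectively excluded and $r_j=\vmax_j$). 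This is the linear score auction with natural scores $s_i(v_i)=(v_i-v_i^*)/(\vmax_i-v_i^*)$; multiplying both scores by the positive constant $\vmax_2-v_2^*$ yields exactly the parameterization claimed in the theorem with $r_i=v_i^*\in[0,\vmax_i]$ and $\gamma=(\vmax_2-v_2^*)/(\vmax_1-v_1^*)\in\mathbb{R}_+$.

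The inequality $R(\hat p,\lambda^*)\geq R(\tilde p,\lambda^*)$ requires the asymmetric analog of Lemma~\ref{LSAfunctionsgreater}, in which $\vmax$ is replaced by $\vmax_i$ both in the normalization on the right-hand side and in the hypothesis $v_i^*+\chi^*_{-i}(\vmax_{-i}-v_{-i}^*)\leq\vmax_i$. The same auxiliary LP argument works: the optimal ``bang-for-buck'' weight becomes $\chi_j=(\vmax_i-v_i^*)/(\vmax_j-v_j^*)$, and the hypothesis is inherited from $\tilde p_i(\vmax_{-i})\leq p_i(\vmax_{-i})\leq\vmax_i$. Cases~2 and~3 of Step~2, and Grand case~II, proceed identically: the matrix $A(\lambda^*)$ and Lemma~\ref{Aproperties} do not involve the upper bounds, and one only has to replace $\vmax\i_n$ by the relevant corner $(\vmax_1,\vmax_2)$ of the box and $p^{\mathrm{new}}_i\equiv\vmax$ by $p^{\mathrm{new}}_i\equiv\vmax_i$ for $i\in-(\lambda^*)$.

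The main obstacle I anticipate is the bookkeeping in Case~3 of Step~2, specifically the ``WLOG $v^*=\vmax\i_n$'' reduction used to rule out Subcase~2. In the asymmetric setting it becomes ``WLOG $v_j^*=\vmax_j$ for all $j\in J_{\max}$,'' and the expression $b_i=\vmax_i-\sum_{j\neq i}\lambda^*_j\vmax_j$ now carries bidder-specific weights. Using Lemma~\ref{bnegative} (which is purely algebraic in $\lambda$) together with this updated $b_i$, I reverify that Subcase~2 is incompatible with the absence of zero-coordinate fixed points; for $n=2$ this reduces to the direct check that $(b_1,0)$ (resp.\ $(0,b_2)$) is a fixed point of $\tilde p$ whenever $b_1>0\geq b_2$ (resp.\ $b_2>0\geq b_1$), which follows from the identity $\lambda_{-i}^*b_{-i}+b_i=\vmax_i(1-\lambda_1^*\lambda_2^*)<0$ under $\lambda_1^*\lambda_2^*>1$. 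With this step in hand the construction closes and yields the desired LSA.
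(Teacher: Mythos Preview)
There is a genuine gap, and it lies precisely where the paper says the modification is needed. Your claim that ``the Fenchel--Moreau reformulation leading to \eqref{reducedproblem} \ldots transfers directly'' is incorrect: the passage from \eqref{transformedproblem} to \eqref{reducedproblem} uses Lemma~\ref{getridofO}, and that lemma fails when $\vmax_1>\vmax_2$. The key inequality in its proof is $p_i(v^0_{-i})=\vmax\geq p_j(v^0_{-j})$; with unequal bounds and $i=2$, $j=1$, this becomes $\vmax_2\geq p_1(v^0_2)$, which need not hold since $p_1$ ranges up to $\vmax_1>\vmax_2$. Concretely, whenever $p_2(v_1)=\vmax_2$ on a region where bidder~1 wins (i.e.\ $v_1>p_1(\vmax_2)$), the point $(v_1,\vmax_2)$ lies in $W_2^{\geq}(p)\setminus W_2(p)$ but is \emph{not} approximable from $W_2(p)$ and has a strictly lower value of $p_2(v_1)-\lambda_1^*v_1$ than anything in $W_2(p)$ once $\lambda_1^*>0$. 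Hence your simplification $\inf_{W_2}(\cdot)=\inf_{v_1\in[0,\vmax_1]}(\cdot)$ strictly understates $R(p,\lambda^*)$ for such mechanisms, and the domination chain $R(\hat p,\lambda^*)\geq R(\tilde p,\lambda^*)\geq R(p,\lambda^*)$ no longer proves that your LSA beats the original mechanism.

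The fix in the paper is to replace $W_2^{\geq}$ by $\tilde W_2:=\{v:v_2\geq p_2(v_1),\ v_1\leq p_1(\vmax_2)\}$, so that the infimum over $v_1$ runs only over $[0,\tilde v_1]$ with $\tilde v_1:=p_1(\vmax_2)$. This changes the definition of $\tilde p_2$ (it is the supporting line on $[0,\tilde v_1]$, set to $\vmax_2$ for $v_1>\tilde v_1$), and the Brouwer fixed point is found in $[0,\tilde v_1]\times[0,\vmax_2]$. The resulting dominating LSA has its boundary passing through $(\tilde v_1,\vmax_2)$---it ``hits the wall'' rather than the corner---which is exactly why your corner-hitting construction with $\gamma=(\vmax_2-v_2^*)/(\vmax_1-v_1^*)$ is not enough: as the paper notes (Figure~\ref{fig:auctions2}), corner-hitting LSAs can be strictly suboptimal when $\vmax_1>\vmax_2$. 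Your handling of Subcase~2 for $n=2$ is fine algebraically, but it is carried out for the wrong $\tilde p_2$; once $\tilde v_1$ is in place, the paper's direct two-bidder check replaces that part of the argument.
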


\begin{figure}[h!]
\centering
\begin{tikzpicture}[>=triangle 45, xscale=3,yscale=3]
\draw[->] (0,0) -- (2.1,0) node[below] {$v_1$};
\draw[->] (0,0) -- (0,1.1) node[left] {$v_2$};
\draw[dashed] (2,0)--(2,1);
\draw[dashed] (0,1)--(2,1);
\node[below] at (2,0){$\vmax_1$};
\node[left] at (0,1){$\vmax_2$};

\draw[thick] (0.4,0)--(0.4,0.7);
\draw[thick] (0,0.7)--(0.4,0.7);
\draw[thick] (0.4,0.7)--(2,1);
\node[left] at (0,0.7){$r_1$};
\node[below] at (0.4,0){$r_2$};

\node at (0.7,0.35){1};
\node at (0.35,0.85){2};
\node at (0.2,0.35){0};

\end{tikzpicture}
\hspace{1em}
\begin{tikzpicture}[>=triangle 45, xscale=3,yscale=3]
\draw[->] (0,0) -- (2.1,0) node[below] {$v_1$};
\draw[->] (0,0) -- (0,1.1) node[left] {$v_2$};
\draw[dashed] (2,0)--(2,1);
\draw[dashed] (0,1)--(2,1);
\node[below] at (2,0){$\vmax_1$};
\node[left] at (0,1){$\vmax_2$};

\draw[thick] (0.4,0)--(0.4,0.7);
\draw[thick] (0,0.7)--(0.4,0.7);
\draw[thick] (0.4,0.7)--(0.8,1);
\node[left] at (0,0.7){$r_1$};
\node[below] at (0.4,0){$r_2$};

\node at (0.7,0.35){1};
\node at (0.35,0.85){2};
\node at (0.2,0.35){0};

\end{tikzpicture}
\caption{Two linear score auctions with two bidders. The first auction may be suboptimal with upper-bound asymmetric bidders. Unlike the case $\vmax_1=\vmax_2$, one may stictly prefer the boundary ``hitting the wall'' (right picture) rather than ``hitting the corner" (left).  Digits 1 and 2 denote areas of value space where the corresponding bidders get the object. Zero denotes areas where the object is kept by the seller.}\label{fig:auctions2}
\end{figure}
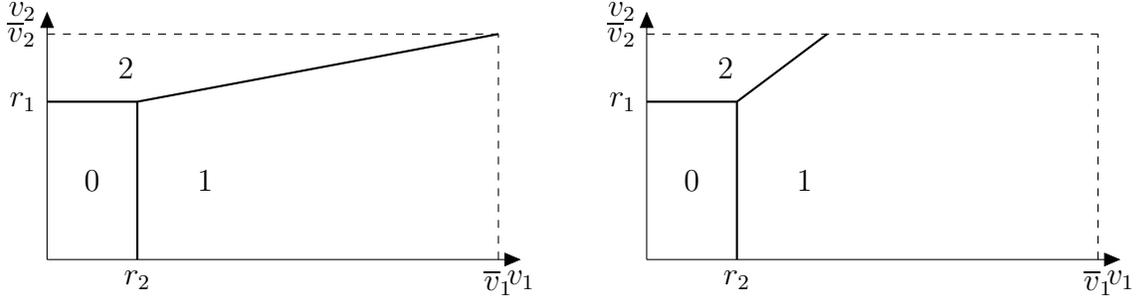

The parametric solution is given in the following proposition:
\begin{proposition}\label{optparam}
Suppose $\vmax_1\geq \vmax_2$. Then,
\begin{enumerate}
\item If $\sqrt{1-m_1/\vmax_1}+\sqrt{1-m_2/\vmax_2}>1$, the optimal prices $r_i^*$ are unique and given by
\beq \label{optpriceslowmeans}r_i^*=\vmax_i-\sqrt{\vmax_i(\vmax_i-m_i)},\mbox{ }i=1,2.
\eeq
The optimal slope $\gamma^*$ is not unique; $\gamma^*$ is optimal if and only if
\beq\label{optgammameanslow}
\gamma^*\in\left[\frac{r_1^*}{\vmax_1-r_1^*},\frac{\vmax_2-r_2^*}{r_2^*}\right].\eeq
\item If $\sqrt{1-m_1/\vmax_1}+\sqrt{1-m_2/\vmax_2}\leq 1$, the optimal slope $\gamma^*$ is unique and given by the (unique) positive solution to the equation 
\beq\label{gamma}
\frac{\vmax_1-\vmax_2}{(\gamma+1)^2}+\frac{\vmax_2-m_2}{\gamma^2}=\vmax_1-m_1.
\eeq
The optimal pair of prices is not unique; $(r_1^*,r_2^*)$ is optimal if and only if 
$$\begin{cases}
\frac{\vmax_2-r_2^*}{\tilde{v}_1-r_1^*}=\gamma^*\\
\frac{r_1^*}{\vmax_1}+\frac{r_2^*}{\vmax_2}\leq 1,
\end{cases}$$
where $\tilde{v}_1=\frac{\gamma^*\vmax_1+\vmax_2}{\gamma^*+1}\leq \vmax_1$ is the minimum reported value of bidder 1 such that she wins regardless of the second bidder's report.
\end{enumerate}
\end{proposition}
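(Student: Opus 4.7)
The plan is to specialize the duality framework of Section~\ref{prep} to the class of linear score auctions identified in Theorem~\ref{main2} and jointly optimize over $(\gamma, r_1, r_2, \lambda)$, paralleling Lemmas~\ref{innerproblsol}--\ref{optimallambdas}. Lemma~\ref{newlambdapositive} lets me restrict to $\lambda \geq 0$. The first step is to write $R(p,\lambda)$ for the LSA explicitly. With threshold functions $p_1(v_2) = r_1 + \max\{0, (v_2 - r_2)/\gamma\}$ and $p_2(v_1) = \min\{\vmax_2,\, r_2 + \max\{0, \gamma(v_1 - r_1)\}\}$, the set $W_0$ equals $[0, r_1) \times [0, r_2)$ and $I_0 := \inf_{W_0}(-\lambda v) = -\lambda r$ when $r_i > 0$. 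Each remaining infimum $I_i := \inf_{W_i^{\geq}}(p_i(v_{-i}) - \lambda v)$, $i = 1, 2$, is the minimum of a piecewise-linear function on a box; an elementary case analysis on the signs of $\gamma - \lambda_1$, $1 - \gamma\lambda_2$, and $\hat\gamma - \lambda_1$ (with $\hat\gamma := (\vmax_2 - r_2)/(\vmax_1 - r_1)$) reduces each $I_i$ to one of a handful of corner values such as $r_1 - \lambda_2 r_2 - \lambda_1 \vmax_1$, $r_2 - \lambda_1 r_1 - \lambda_2 \vmax_2$, $\vmax_2(1 - \lambda_2) - \lambda_1 \vmax_1$, or $\tilde{v}_1 - \lambda_1 \vmax_1 - \lambda_2 \vmax_2$ with $\tilde{v}_1 = r_1 + (\vmax_2 - r_2)/\gamma$.

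Next I would optimize $R = \lambda m + \min\{I_0, I_1, I_2\}$. Following Lemma~\ref{rstaroptimal}, I balance the active infima. In the interior regime $\lambda_1 \leq \min(\gamma, \hat\gamma)$ and $\lambda_2 \leq 1/\gamma$, setting $I_0 = I_1 = I_2$ gives $r_i^*(\lambda) = \lambda_i \vmax_i/(1 + \lambda_i)$, with $\gamma$ free in $[\lambda_1, 1/\lambda_2]$. This interval is nonempty iff $\lambda_1 \lambda_2 \leq 1$, which serves as the analog of the geometry constraint~\eqref{geomconstr} in this problem. Substituting into $R$ and optimizing over $\lambda \geq 0$, the unconstrained critical point separates across bidders and yields $\lambda_i^* = \sqrt{\vmax_i/(\vmax_i - m_i)} - 1$; setting $a_i := \sqrt{1 - m_i/\vmax_i}$, direct algebra shows $\lambda_1^* \lambda_2^* \leq 1$ iff $a_1 + a_2 \geq 1$. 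This is Case~1 of the proposition, and back-substitution yields $r_i^* = \vmax_i - \sqrt{\vmax_i(\vmax_i - m_i)}$ and $\gamma^* \in [\lambda_1^*, 1/\lambda_2^*]$, which after simplification equals the interval in~\eqref{optgammameanslow}.

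In Case~2 ($a_1 + a_2 < 1$), the geometry constraint binds. Because $\vmax_1 > \vmax_2$, the asymmetry breaks the symmetry between the two corners relevant for $I_2$, and I expect that at the constrained optimum $\lambda_1 > \hat\gamma$, so that $I_2$ is attained at $(\vmax_1, \vmax_2)$ and equals $\vmax_2(1-\lambda_2) - \lambda_1 \vmax_1$. Balancing $I_0 = I_1 = I_2$ then leaves one degree of freedom in $(r_1, r_2)$, and the substituted objective becomes a one-dimensional optimization in $\gamma$ (with $\lambda_1 = \gamma$ and $\lambda_2 = 1/\gamma$ forced by the binding geometry constraint); the first-order condition in $\gamma$ yields~\eqref{gamma}. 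The set of optimal prices is then parametrized by the line-slope condition $(\vmax_2 - r_2^*)/(\tilde{v}_1 - r_1^*) = \gamma^*$, together with $r_1^*/\vmax_1 + r_2^*/\vmax_2 \leq 1$, the latter ensuring $I_0 \geq I_2$ so that $W_0$ does not bind.

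The main technical obstacle is the case-analysis bookkeeping in the first step: verifying that the configuration of active corners for $I_1$ and $I_2$ at the candidate optimum is indeed the one assumed during the optimization in each regime, so that the remaining configurations can be dismissed as dominated. A related subtlety arises at boundary configurations in which some $r_i^*$ is $0$ or $\vmax_i$ (so that $W_0$ collapses and $I_0$ drops out of the $\min$); the inequality $r_1/\vmax_1 + r_2/\vmax_2 \leq 1$ in Case~2 is precisely what rules these out.
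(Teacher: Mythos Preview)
Your overall plan mirrors the paper's: compute $R(p,\lambda)$ explicitly for the LSA, optimize over $(r,\gamma,\lambda)$, and let the geometry constraint $\lambda_1\lambda_2\le 1$ (equivalently $\sum_i \lambda_i/(1+\lambda_i)\le 1$) separate the two regimes. Case~1 is handled correctly and agrees with the paper; your interval $[\lambda_1^*,1/\lambda_2^*]$ does simplify to \eqref{optgammameanslow}.

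The gap is in Case~2, and it originates earlier, in how you compute $I_2$. You rely on $W_2^{\ge}=\{v:v_2\ge p_2(v_1)\}$, but Lemma~\ref{getridofO} \emph{fails} when $\vmax_1>\vmax_2$; the paper notes this explicitly in the proof of Theorem~\ref{main2} and replaces $W_2^{\ge}$ by $\tilde W_2=\{v:v_2\ge p_2(v_1),\,v_1\le \tilde v_1\}$. The point $(\vmax_1,\vmax_2)$ lies in your $W_2^{\ge}$ but belongs to $W_1$, since $\vmax_1>p_1(\vmax_2)=\tilde v_1$. Consequently your competing corner value $\vmax_2(1-\lambda_2)-\lambda_1\vmax_1$ is spurious; the correct one, obtained on $\tilde W_2$, is $\vmax_2(1-\lambda_2)-\lambda_1\tilde v_1$. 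If you proceed in Case~2 with the spurious term and substitute $\lambda_1=\gamma$, $\lambda_2=1/\gamma$, the objective becomes $\gamma(m_1-\vmax_1)+(m_2-\vmax_2)/\gamma+\vmax_2$, whose first-order condition is $\gamma^2=(\vmax_2-m_2)/(\vmax_1-m_1)$---which coincides with \eqref{gamma} only when $\vmax_1=\vmax_2$.

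The paper's remedy is to treat $\tilde v_1$ as a separate decision variable and optimize it first by equalizing the two ``high-value'' terms $\tilde v_1-\lambda_2\vmax_2-\lambda_1\vmax_1$ and $\vmax_2-\lambda_1\tilde v_1-\lambda_2\vmax_2$, giving $\tilde v_1^*=(\lambda_1\vmax_1+\vmax_2)/(1+\lambda_1)$. After this step the analysis reduces exactly to the symmetric-$\vmax$ program $\max_\lambda \sum_i m_i\lambda_i-\lambda_i^2\vmax_i/(1+\lambda_i)$ subject to $\sum_i \lambda_i/(1+\lambda_i)\le 1$, and the first-order condition under the binding constraint $\lambda_1\lambda_2=1$ (with $\gamma^*=\lambda_1^*$) is precisely~\eqref{gamma}. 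That equalization of the two high-value corners is the step your Case~2 is missing.
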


\subsection{Different lower bounds}
The main result and its proof is almost unchanged if the seller puts different lower bounds $\vmin_i$ on bidder's values that can also differ from the seller's own valuation $c$ (but the upper bound is still the same). %
Note, however, that in the ``gap case'' ($c<\max_i\vmin_i$) the parametric solution will be substantially different from the one identified in section \ref{paramsol}. As setting $r_i$  lower than $\vmin_i$ raises the worst-case probability of sale discontinuously to one, the seller will sometimes strictly prefer such ``sure-sale'' linear score auctions to the ones identified in the baseline case. The part of the above analysis that fails in this case is lemma \ref{rstaroptimal}, as now discontinuity in \eqref{mlambda} starts playing a role. %

\subsection{Inequality constraints}
Throughout the analysis, we have maintained the assumption that the mean of the valuation vector distribution is known exactly, i.e. the mean constraint is an equality constraint. This assumption has required additional work to rule out negative Lagrange multipliers in \textbf{Grand case II.} of the proof of the main result; in contrast, had we assumed that only a lower bound for the mean is known (as in \cite{koccyiugit2017distributionally}), we would get 
the nonnegativity of $\lambda$ for free. We think that the equality constraint is a more plausible modeling choice under the ``educated guess'' interpretation of the known mean assumption. Note, however, that the inequality constraint may be a better choice when the mean information comes from data obtained from previous auctions with the same bidders. It is well-recognized that in that case bidders, anticipating that their reports will affect the design of a future auction, may strategically shade their bids in an otherwise truthful mechanism\footnote{\cite{kanoria2017dynamic} propose an approximate solution to the incentive problem  for the case of iid values. It is less clear how to alleviate it in the case where bidders are ex-ante asymmetric and may have correlated values. }. Thus, such data will indeed provide only a lower bound on the valuations' means. It is therefore warranted to articulate the following result.

\begin{corollary}
The set of optimal mechanisms is the same regardless of whether the seller knows that $\mathbb{E}(v)=m$ or $\mathbb{E}(v)\geq m$.
\end{corollary}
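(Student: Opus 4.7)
The plan is to work directly with the Lagrangian dual representations developed in Section~\ref{prep}. For any mechanism $M$, Nature's problem under the equality constraint $\mathbb{E}_F(v)=m$ has dual value $\underline{R}^{=}(M)=\sup_{\lambda\in\mathbb{R}^n}[\lambda m+\inf_v(t^M(v)-\lambda v)]$, while under the inequality constraint $\mathbb{E}_F(v)\geq m$ the standard Lagrangian calculation (the multiplier on a ``$\geq$''-constraint being nonnegative) gives $\underline{R}^{\geq}(M)=\sup_{\lambda\geq 0}[\lambda m+\inf_v(t^M(v)-\lambda v)]$. Hence $\underline{R}^{\geq}(M)\leq \underline{R}^{=}(M)$ for every $M$, which is also obvious since $\Delta(m,\vmax)\subseteq\Delta^{\geq}(m,\vmax)$.

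Next I would show that the two problems have the same optimal value $R^*$. The optimal corner-hitting LSA identified in Proposition~\ref{optimal prices} is associated with the Lagrange multiplier $\lambda^*$ given by \eqref{lambdalowmeans}/\eqref{lambdahighmeans}, which is nonnegative. Therefore $\sup_{\lambda\in\mathbb{R}^n}L(LSA^{opt},\lambda)=\sup_{\lambda\geq 0}L(LSA^{opt},\lambda)$, so $\underline{R}^{=}(LSA^{opt})=\underline{R}^{\geq}(LSA^{opt})$. Because $LSA^{opt}$ is optimal under the equality formulation and feasible under the inequality one, this common value is simultaneously $\sup_M \underline{R}^{=}(M)$ and an attainable lower bound on $\sup_M \underline{R}^{\geq}(M)$; combined with the pointwise inequality from the previous paragraph, this forces $\sup_M \underline{R}^{\geq}(M)=\sup_M \underline{R}^{=}(M)=R^*$.

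For the two inclusions between the sets of optimal mechanisms: the easy direction is that any $\underline{R}^{\geq}$-optimal $M^*$ satisfies $\underline{R}^{=}(M^*)\geq \underline{R}^{\geq}(M^*)=R^*$, hence is $\underline{R}^{=}$-optimal as well. For the reverse direction, take any $\underline{R}^{=}$-optimal $M^*$ and let $\lambda^*(M^*)$ be an optimizing multiplier guaranteed by Lemma~\ref{optexistlambda}. If some coordinate of $\lambda^*(M^*)$ were strictly negative, the construction of \textbf{Grand Case II} in the proof of Theorem~\ref{main} (Lemmas~\ref{pnewbetter} and \ref{newlambdapositive}) would produce a mechanism $p^{new}$ and a nonnegative multiplier $\lambda^{**}$ with $R(p^{new},\lambda^{**})>R(p,\lambda^*(M^*))=R^*$, contradicting optimality of $M^*$. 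Therefore $\lambda^*(M^*)\geq 0$, the dual supremum in $\underline{R}^{=}(M^*)$ is already attained on the nonnegative orthant, and $\underline{R}^{\geq}(M^*)=\underline{R}^{=}(M^*)=R^*$, so $M^*$ is $\underline{R}^{\geq}$-optimal.

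The only step requiring non-routine work is the converse direction in the last paragraph, and the hard part there — ruling out optimal mechanisms with negative Lagrange multipliers — is already carried out in \textbf{Grand Case II} of the main proof, so the corollary reduces to invoking that argument together with the straightforward Lagrangian comparison.
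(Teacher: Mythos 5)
Your proposal is correct and follows essentially the same route as the paper: both arguments reduce the corollary to the observation that the two transformed problems differ only in restricting $\lambda$ to $\mathbb{R}^n_+$, and then invoke the \textbf{Grand Case II} machinery (Lemma~\ref{newlambdapositive}) to conclude that every mechanism optimal under the equality constraint already has a nonnegative optimal multiplier. You merely spell out more explicitly the two set inclusions and the equality of the two optimal values, which the paper leaves implicit.
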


\begin{proof}
Recall that under the equality constraint the transformed problem was to maximize $R(p,\lambda)$ given by \eqref{reducedproblem} over threshold functions $p$ and $\lambda\in\mathbb{R}^n$. Under the constraint  $\mathbb{E}(v)\geq m$ transformed problem becomes maximize $R(p,\lambda)$ over $p$ and $\lambda\in\mathbb{R}^n_+$. Then the result follows from the fact that any solution $(p^*,\lambda^*)$ to the original problem involves $\lambda^*\geq 0$ by lemma \ref{newlambdapositive}.
\eop
\end{proof}

It might be also interesting to consider the problem in which the seller knows that $\mathbb{E}(v_i)\in [\underline{m}_i,\overline{m}_i]$ for all $i$. It follows from our result that $\lambda^*(M^*)\geq 0$ for any optimal mechanism $M^*$ in the known means case, that in such a problem the set of optimal mechanisms is exactly the same as the set of optimal mechanisms for the case when the means are known to be $\underline{m}_i$.  This result is not trivial, as there exist mechanisms under which worst-case revenue may be lower for \emph{higher} $m_i$ (see the Online Appendix), and it is nor a priori clear that such mechanisms are suboptimal. 

\section{Conclusion}\label{concl}
In this paper, we presented a solution to a basic distributionally robust mechanism design problem -- the problem of allocating an indivisible good among $n$ buyers in a manner that maximizes worst-case expected revenue when the seller knows only means of value distributions and an upper bound on their support. The identified solution is simple and may be thought of as a linear version of the classic solution for the case where value distributions are known \citep{myerson1981optimal}. The proof is based on strong linear programming duality, a geometric construction and an analysis of the set of fixed points of a certain piecewise-affine map. We then solved the parameter-tuning problem and identified two regimes for the parametric solution. We compared and contrasted the identified solution to the classic one, and characterized the full set of solutions in the two-bidder case. The linearity of the boundary determining the bidder getting the object is indeed necessary for optimality if the known means are sufficiently high. 

The analysis is certainly not free of limitations that are simultaneously avenues for future research. Some of those are as follows:
\begin{itemize}
\item \textbf{Randomized mechanisms.} It is well-known that randomized solutions perform strictly better in robust optimization problems thanks to their ability to provide hedging against various Nature's strategies. Thus, our restriction to deterministic mechanisms is certainly with loss of revenue. In fact, it follows from our results and theorem 11 in \cite{koccyiugit2017distributionally} that 
\[\inf\limits_m\frac{R^*_{det}(m)}{R^*_{rand}(m)}=0,\]
where $R^*_{det}(m)$ and $R^*_{rand}(m)$ are the best revenue guarantees of a deterministic and randomized mechanism when all bidders' values have the same mean $m$. In this sense, the loss of revenue can be significant and it is certainly interesting to know what an optimal randomized mechanism is. However, there are two caveats to the above ratio analysis. First, it follows from the analysis in \cite{koccyiugit2017distributionally} that the infimum is achieved only when $m\to 0$, but then both $R^*_{det}(m)\to 0$ and $R^*_{rand}(m)\to 0$ which may not represent an interesting scenario: it is not of great practical importance that one revenue is infinitely smaller than the other if both are infinitely small. Second, and relatedly, it follows from results in \cite{carrasco2018optimal} that for the case of one bidder $\sup_m \left(R^*_{rand}(m)-R^*_{det}(m)\right)\approx 0.1331\vmax$. Thus, if there is a cost of $C>0.1331\vmax$ of using a randomized mechanism (say, due to its complexity --- and numerical analysis shows that it is quite complex), it will be an optimal decision for the seller to use a deterministic mechanism regardless of mean $m$ even if randomization is feasible. We expect a similar argument hold for the case of multiple bidders as well.%

\item \textbf{Higher moments.} The proof of the main result in the present paper cannot be easily modified to accommodate higher moments constraints. It follows again from strong duality  that the natural candidate optimal threshold functions if $k$ moments are known are  polynomials of degree $k$ (see \cite{carrasco2018optimal} for the case of one bidder). Note, however, that using such polynomials on the full domain is very likely to be infeasible due to the supply constraint. To resolve the conflict and determine the winner of the object it may be necessary to draw a boundary that may be linear still. One natural higher-moment constraint is the nonnegative covariance constraint. Because the function $v_{2}\to v_1v_{2}$ is linear, a similar tilde-transformation can be used to establish the optimality of a linear score auction under known mean and nonnegative covariance if there are two bidders. However, it is unclear whether the result is true for $n\geq 3$ (due to the presence of terms $p_i(v_i)-\zeta v_jv_k$, $i\neq j\neq k$, in the dual problem).
\end{itemize}

\bibliographystyle{te}

\bibliography{bibliography}

\section*{Appendix: main missing proofs}\label{app}
\pfof{lemma \ref{convex hull}}{
As $m\in(0,\vmax)^n$, results of \cite{smith1995generalized} apply. In particular, $\inf\limits_{F\in\Delta(m,\vmax)}R(M,F)=\inf\limits_{F\in\Delta_{fin}(m,\vmax)}R(M,F)$ where $\Delta_{fin}(m,\vmax)$ is a subset of $\Delta(m,\vmax)$ consisting of all distributions with finite support. So it is sufficient to prove that $\inf\{R|(m,R)\in conv(graph(t))\}=\inf\limits_{F\in\Delta_{fin}(m,\vmax)}R(M,F)$. But this is certainly true, as one of definitions of a convex hull of a subset $S$ of a Euclidean space is the set of all finite convex combinations of points in $S$.
\eop}

\pfof{lemma \ref{getridofO}}{
Consider some bidder $i$ and some profile of values $v^0\in W_i^{\geq}(p)\setminus W_i(p)$. It must be that $v^0_i=p_i(v^0_{-i})$. Either $p_i(v^0_{-i})<\vmax$ or $p_i(v^0_{-i})=\vmax$. Suppose first that $p_i(v^0_{-i})<\vmax$. Then, any neighborhood of $v^0$ has a nonempty intersection with $\{v:v_i>p_i(v_{-i})\}$, and hence, with $W_i(p)$. Thus, $p_i(v^0_{-i})-\lambda v^0\geq \inf\limits_{v\in W_i(p)}(p_i(v_{-i})-\lambda v)$ which means that adding $v^0$ to $W_i(p)$ won't change the value of \eqref{transformedproblem}.

Now suppose that $p_i(v^0_{-i})=\vmax$. Because $v^0\in W_i^{\geq}(p)\setminus W_i(p)$, $v^0\in\{v:v_i=p_i(v_{-i})\}$ and thus $v^0\in\cup_{i=1}^n\{v:v_i=p_i(v_{-i})\}$. Also, $v^0\notin O_i$. Suppose first that $v^0\in \cup_{i=1}^n\{v:v_i=p_i(v_{-i})\}\setminus\cup_{i=1}^n\{v:v_i>p_i(v_{-i})\}$. By condition 3, $v^0\in \cup_{i=1}^n O_i$ and hence $v^0\in O_j$ for some $j\neq i$. Hence, $v^0\in W_j(p)$ for some $j\neq i$. Now suppose that $v^0\in \cup_{i=1}^n\{v:v_i>p_i(v_{-i})\}$. Again, as $v^0_i=p_i(v^0_{-i})$, it means that $v^0\in W_j(p)$ for some $j\neq i$. This ensures the second inequality in the following chain:
\beq\label{cangetridofo}
 p_i(v^0_{-i})-\lambda v^0=\vmax-\lambda v^0\geq p_j(v^0_j)-\lambda v^0\geq \inf\limits_{v\in W_j(p)}(p_j(v_{-j})-\lambda v),
\eeq
Thus, adding $v^0$ to $W_i(p)$ won't change the value of \eqref{transformedproblem}.\eop
}

\pfof{proposition \ref{LSAbetter2}}{
Consider four sets of bidders: (i) those with $v^*_i=\vmax$; (ii) those with $v^*_i\in(0,\vmax)$; (iii) those with $v^*_i=0$ and $\lambda^*_{-i}(\vmax\i_{n-1}-v^*_{-i})>0$; (iv) those with $v^*_i=0$ and $\lambda^*_{-i}(\vmax\i_{n-1}-v^*_{-i})=0$.

We prove that for every type of bidder, $\hat{p}_i\geq \tilde{p}_i$. 

First, if $v^*_i=\vmax$, then $\hat{p}_i(v_{-i})\equiv\vmax\geq \tilde{p}_i(v_{-i})$ $\forall v_{-i}$.

Second, if $v^*_i\in(0,\vmax)$, then the representation \eqref{fp-representation} is valid for $\tilde{p}_i$ and thus Lemma \ref{LSAfunctionsgreater} is applicable directly, with $\chi^*_{-i}=\lambda^*_{-i}$, as in Case 1. Thus, $\hat{p}_i\geq \tilde{p}_i$. 

Third, suppose $v_i^*=0$ and $\lambda^*_{-i}(\vmax\i_{n-1}-v^*_{-i})>0.$ Define 
\beq
k_i:=\frac{\lambda^*_{-i}\vmax\i_{n-1}+b_i}{\lambda^*_{-i}(\vmax\i_{n-1}-v^*_{-i})}.
\eeq
We prove that, with such a choice of $k_i$, $p^{aux}_i(v_{-i})\geq \tilde{p}_i(v_{-i})$ at any point $v_{-i}$. 

First, for $v_{-i}$ such that $\lambda^*_{-i}v_{-i}<\lambda^*_{-i}v^*_{-i}$, $p^{aux}_i(v_{-i})=0$, so the inequality holds.

Now consider $v_{-i}$ such that $\lambda^*_{-i}v_{-i}\geq\lambda^*_{-i}v^*_{-i}$. Any such point can be represented as a convex combination of $\vmax\i_{n-1}$ and a point $u$ satisfying $\lambda^*_{-i}u=\lambda^*_{-i}v^*_{-i}$. Namely,
\[
v_{-i}=t\cdot u +(1-t)\cdot\vmax\i_{n-1},
\]
where 
\[t=\frac{\lambda^*_{-i}(\vmax\i_{n-1}-v_{-i})}{\lambda^*_{-i}(\vmax\i_{n-1}-v^*_{-i})}\] 
and \[u=\vmax\i_{n-1}\frac{\lambda^*_{-i}(v^*_{-i}-v_{-i})}{\lambda^*_{-i}(\vmax\i_{n-1}-v_{-i})}
+ v_{-i}\frac{\lambda^*_{-i}(\vmax\i_{n-1}-v^*_{-i})}{\lambda^*_{-i}(\vmax\i_{n-1}-v_{-i})}.\] 
$t\leq 1$ due to the fact that $\lambda^*_{-i}v_{-i}\geq\lambda^*_{-i}v^*_{-i}$.

$\tilde{p}_i$ is a convex function and thus 
\beq\label{ineq_convex}
\tilde{p}_i(v_{-i})=\tilde{p}_i(t\cdot u +(1-t)\cdot\vmax\i_{n-1})\leq t\tilde{p}_i(u)+(1-t)\tilde{p}_i(\vmax\i_{n-1}).
\eeq
However, $\tilde{p}_i(u)=\max\{\lambda^*_{-i}u+b_i,0\}=\max\{\lambda^*_{-i}v^*_{-i}+b_i,0\}=v_i^*=0$. Moreover, 
\[(1-t)\tilde{p}_i(\vmax\i_{n-1})=\max\left\{\frac{\lambda^*_{-i}(v_{-i}-v^*_{-i})}{\lambda^*_{-i}(\vmax\i_{n-1}-v^*_{-i})}(\lambda^*_{-i}\vmax\i_{n-1}+b_i),0\right\}=p^{aux}_i(v_{-i}).\]
Combining this with \eqref{ineq_convex} yields the desired inequality $p^{aux}_i\geq \tilde{p}_i$.

The second inequality $\hat{p}_i\geq p^{aux}_i$ follows from lemma \ref{LSAfunctionsgreater} with $\chi^*_{-i}=k_i\lambda^*_{-i}$. Note that the supposition of the Lemma \ref{LSAfunctionsgreater} holds due to the fact that $p^{aux}_i(\vmax\i_{n-1})=\tilde{p}_i(\vmax\i_{n-1})$ by construction and $\tilde{p}_i(\vmax\i_{n-1})\leq \vmax$.

Hence, $\hat{p}_i\geq p^{aux}_i\geq \tilde{p}_i$ as desired.

Fourth, suppose $v_i^*=0$ and $\lambda^*_{-i}(\vmax\i_{n-1}-v^*_{-i})=0.$ As every term $\lambda^*_j(\vmax-v^*_j)$ is nonnegative, we must have $\lambda^*_j(\vmax-v^*_j)=0$ for all $j\neq i$.

Denote by $J_{max}$ the set of indices $j\neq i$ such that $v^*_j=\vmax$. Then it follows that 
\[0=v^*_i=\max\{\sum_{j\in J_{max}}\lambda^*_j\vmax+b_i,0\}.\]
But then, due to nonnegativity of $\lambda^*_j$, 
\[\tilde{p}_i(v_{-i})\leq \max\{\sum_{j\in J_{max}}\lambda^*_j\vmax+b_i,0\}=0.\]
Hence, $\tilde{p}_i(v_{-i})\equiv 0$, and thus the inequality $\hat{p}_i\geq \tilde{p}_i$ holds.

\vspace{1em}
We have shown that for every $i$, $\hat{p}_i\geq \tilde{p}_i$. This implies that all $\inf\limits_{v_{-i}\in[0,\vmax]^{n-1}}(p_i(v_{-i})-\lambda^*_{-i}v_{-i})$ increase when $\tilde{p}_i$ is replaced by $\hat{p}_i$. To cover the remaining infimum, $\inf\limits_{v\in W_0(p)}(-\lambda^* v)$, note that 
\beq\label{W_0comparison}
\inf\limits_{v\in W_0(\hat{p})}(-\lambda^* v)=+\infty>\inf\limits_{v\in W_0(\tilde{p})}(-\lambda^* v),
\eeq
as $W_0(\hat{p})=\emptyset$ due to the fact that $v^*_i=0$ for some $i$, and so the object is always allocated in LSA($v^*$).
\eop
}

\pfof{proposition \ref{newfixedpoint}}{
The fact that $w^*_i=0$ for some $i$ follows from $s^*<n$.

Now we show that $w^*\in [0,\vmax]^n$. $w^*\geq 0$ by the fact that $b^{restr}\geq 0$ and Lemma \ref{Aproperties}, part 2. Due to the fact that $\vmax\i_n\in V^*$, it must be that
\[(A^{restr})^{-1}\vmax\i_{s^*}-\vmax\i_{s^*}\sum_{i>s^*}\lambda^*_i=b^{restr}\]
or
\[\vmax\i_{s^*}=(A^{restr})^{-1}\left(b^{restr}+\vmax\i_{s^*}\sum_{i>s^*}\lambda^*_i\right),\]
so again by Lemma \ref{Aproperties}, part 2,  $(A^{restr})^{-1}b^{restr}\leq \vmax\i_{s^*}$, so $w^*\leq \vmax\i_{n}$. 

By construction, $w^*$ satisfies first $s^*$ equations of the system $v_i=\tilde{p}_i(v_{-i})$. It remains to show that it satisfies the equations $s^*+1,\ldots, n$, that is
\[0=\max\left\{\sum_{i=1}^{s^*}\lambda^*_i w^*_i+b_j,0\right\},\]
for $j>s^*$. By the ranking of $b_i$, it is sufficient to show that 
\[\sum_{i=1}^{s^*}\lambda^*_i w^*_i\leq -b_{s^*+1}.\] 
Writing the LHS using Lemma \ref{Aproperties}, part 3, we get
\[\frac{\sum\limits_{i=1}^{s^*} \frac{\lambda^*_i}{1+\lambda^*_i}b_i}{1-\sum\limits_{i=1}^{s^*} \frac{\lambda^*_i}{1+\lambda^*_i}}\leq -b_{s^*+1}.\] 
Substituting for $b_i$ from \eqref{biexpr}, after simplifications one gets
\beq\label{32}
\lambda^*_{s^*+1}\geq \frac{\sum\limits_{i=1}^{s^*} \frac{\lambda^{*2}_i}{1+\lambda^*_i}-\sum\limits_{i\neq s^*+1}\lambda^*_i+1}{\sum\limits_{i=1}^{s^*} \frac{\lambda^*_i}{1+\lambda^*_i}}.
\eeq
But \eqref{32} is true because 
\[\lambda^*_{s^*+1}>\frac{1}{\sum\limits_{i\neq s^*+1} \frac{\lambda^*_i}{1+\lambda^*_i}}-1\geq \frac{\sum\limits_{i=1}^{s^*} \frac{\lambda^{*2}_i}{1+\lambda^*_i}-\sum\limits_{i\neq s^*+1}\lambda^*_i+1}{\sum\limits_{i=1}^{s^*} \frac{\lambda^*_i}{1+\lambda^*_i}},\]
where the first inequality is equivalent to $\sum\limits_{i=1}^n \frac{\lambda^*_i}{1+\lambda^*_i}>1$, the supposition in \emph{Subcase 2}, and the second is equivalent to 
\[\sum_{i=1}^{s^*}\lambda^*_i\leq \sum_{i\neq s^*+1}\lambda^*_i. \]
\eop 
}

\renewcommand*{\thesection}{\Alph{section}}
\setcounter{section}{0}
\newpage
\begin{center}
\Large Online Appendix
\end{center}
\section{More missing proofs}
\pfof{proposition \ref{optauction exists}}{
Denote by $R(r,\lambda)$ the value of \eqref{reducedproblem} when the threshold functions $p$ are given by \eqref{p-lsa} for some $r\in[0,\vmax]^n$. The revenue guarantee of the auction is given by $\underline{R}(r):=\max\limits_{\lambda\in\mathbb{R}^n}R(r,\lambda)$ (supremum is achieved by lemma \ref{optexistlambda}). We will prove that $\underline{R}(r)$ is upper semi-continuous, which implies the result. Note that $R(r,\lambda)$ is continuous on $(0,\vmax]^n\times \mathbb{R}^n$, 
so, by Berge's Maximum theorem argument $\underline{R}(r)$ is continuous on $(0,\vmax]^n$. Also, $R(r,\lambda)$ is continuous on $\mathcal{R}_0\times \mathbb{R}^n$ (when its domain is restricted to $\mathcal{R}_0\times \mathbb{R}^n$ ) so $\underline{R}(r)$ is continuous on $\mathcal{R}_0$ when its domain is restricted to $\mathcal{R}_0$. Hence, to show that $\underline{R}(r)$ is upper semi-continuous on full domain, it is sufficient to show that for every sequence of points $r^k\in(0,\vmax]^n$ with limit $r\in\mathcal{R}_0$, we have $\lim\limits_{k\to\infty}\underline{R}(r^k)\leq \underline{R}(r)$.   

To this end, extend the function $R(r,\lambda)$ from $(0,\vmax]^n$ to the whole $[0,\vmax]^n$ by continuity and call the extension $R^{ext}(r,\lambda)$ with $\underline{R}^{ext}(r)=\max\limits_{\lambda\in\mathbb{R}^n}R^{ext}(r,\lambda)$. Note that $R^{ext}(r,\lambda)\leq R(r,\lambda)$ and thus $\underline{R}^{ext}(r)\leq \underline{R}(r)$ if $r\in\mathcal{R}_0$, and $R^{ext}(r,\lambda)= R(r,\lambda)$, $\underline{R}^{ext}(r)= \underline{R}(r)$  otherwise. By the usual argument,  $\underline{R}^{ext}(r)$ is continuous. Thus, $\lim\limits_{k\to\infty}\underline{R}(r^k)=\lim\limits_{k\to\infty}\underline{R}^{ext}(r^k)=\underline{R}^{ext}(r)\leq \underline{R}(r)$.
\eop}

\pfof{lemma \ref{optexistlambda}}{
Denote the function being maximized in \eqref{biconj} by $G(\lambda)$: $\mathbb{R}^n\to\mathbb{R}$. This function is concave; thus it is continuous. First, observe that for any unbounded sequence $\lambda^k$, the sequence $G(\lambda^k)$ is unbounded from below. To see this, note that if $\lambda^k_i$ is not bounded from above, one can set $v_i=\vmax$ in minimization in \eqref{biconj}, whereas if $\lambda^k_i$ is not bounded from below, one can set $v_i=0$; hence, $G(\lambda^k)$ is majorized by a sequence that is unbounded from below. 

Now take any maximizing sequence $\lambda^k$ (i.e. a sequence such that $\lim\limits_{k\to\infty}G(\lambda^k)=\sup\limits_{\lambda}G(\lambda)$). As the sequence $G(\lambda^k)$ is clearly bounded from below, by the above observation we know that $\lambda^k$ must lie in a bounded set $Q$. Therefore it has a subsequence $\lambda^{k_s}$ converging to some $\lambda^*\in cl(Q)\subset \mathbb{R}^n$. But then, by continuity of $G$, $\sup\limits_{\lambda}G(\lambda)=\lim\limits_{s\to\infty}G(\lambda^{k_s})=G(\lambda^*)$. \eop
}

\pfof{lemma \ref{Aproperties}}{
Consider the homogeneous system $Av=0$. By subtracting row $j>1$ from row 1 one gets $(1+\lambda_1)v_1=(1+\lambda_j)v_j$ for all $j>1$. Plugging this in equation 1, one gets $\left(1-\sum_{i=1}^n\frac{\lambda_i}{1+\lambda_i}\right)v_1=0$. Thus, the system has a unique (trivial) solution if and only if  $1\neq \sum_{i=1}^n\frac{\lambda_i}{1+\lambda_i}$. Thus, $\det(A)=0$ if and only if $1= \sum_{i=1}^n\frac{\lambda_i}{1+\lambda_i}$. Because determinant is a multilinear function of rows of $A$,  it must be equal to $C\left(1-\sum\limits_{i=1}^n \frac{\lambda_i}{1+\lambda_i}\right)\prod\limits_{i=1}^n(1+\lambda_i)$ for some $C\in\mathbb{R}$. As $\det(A(0))=1$, $C=1$. This proves part 1. It also follows that the set of solutions to the system is at most one-dimensional. This proves part 4.

To prove parts 2 and 3, consider a non-homogeneous system $Av=b$. By subtracting rows as above one obtains that its unique (provided that $\det(A)\neq 0$) solution $v(b)$ is given by 
\[v_i(b)=\frac{b_i\left(1-\sum\limits_{k\neq i}\frac{\lambda_k}{1+\lambda_k}\right)+\sum\limits_{k\neq i}\frac{\lambda_k}{1+\lambda_k}b_k}{(1+\lambda_i)\left(1-\sum\limits_{i=1}^n \frac{\lambda_i}{1+\lambda_i}\right)}.\]
As $A^{-1}_{ij}=\partial v_i/\partial b_j$, part 3 follows from $\lambda\geq 0$ and $\det(A)>0$. Part 4 follows from the above analytic solution and the fact that $\lambda v(b)=(1+\lambda_1)v_1(b)-b_1$. 
\eop}

\pfof{lemma \ref{bnegative}}{The proof is by solving the optimization problem $\max\limits_{\lambda}\sum_{i=1}^n\frac{\lambda_i}{1+\lambda_i}$ subject to \eqref{Sums} and the nonnegativity constraint. The solution is $\lambda_i=\lambda_j=\frac{1}{n-1}$.\eop}

\pfof{lemma \ref{innerproblsol}}{
$\inf\limits_{W_0(p)}(-\lambda v)=-\lambda r$ whenever $W_0(p)\neq\emptyset$. Now consider the problem $\inf\limits_{v_{-i}\in [0,\vmax]^{n-1}}(p_i(v_{-i})-\lambda_{-i}v_{-i})$. Consider the minimization over some $v_j$, $j\neq i$ with $v_s$, $s\notin\{i,j\}$ fixed. Given \eqref{p-lsa}, the objective is convex and piecewise-affine and $\lambda_j\geq 0$, so the optimal choice of $v_j$ is either $\vmax$ (if $\lambda_j\geq \frac{\vmax-r_i}{\vmax-r_j}$) or such that $\frac{v_j-r_j}{\vmax-r_j}=\max\{0,\max\limits_{s\neq i,j}\frac{v_s-r_s}{\vmax-r_s}\}$ (otherwise). But this is true for every $j$, $v_j=\vmax$ for some $j$ implies that $v_s=\vmax$ for all $s\neq i$. If, on the other hand, for every $j$ the second possibility materializes, we obtain that $\frac{v_j-r_j}{\vmax-r_j}=\eta$, $\eta\in[0,1]$ for all $j\neq i$. Note in  the first case ($v_j=\vmax$ for all $j\neq i$) this condition is also satisfied. Thus, it is sufficient to optimize over $\eta$. Given the linearity of the objective, the two possible corner solutions will yield either an infimum of $r_i-\lambda_{-i}r_{-i}$ or $\vmax-\vmax\sum\limits_{j\neq i}\lambda_j$ from which the expression \eqref{mlambda} follows.
\eop}

\pfof{Lemma \ref{rstaroptimal}}{

For brevity, we will write $r^*$ instead of $r^*(\lambda)$.

Define 
\beq\label{g}
g(r):=\begin{cases}
\min\left\{\min\limits_i(r_i-\lambda_{-i}r_{-i}-\lambda_i\vmax),-\lambda r \right\}, & r>0,\\
\min\limits_i(r_i-\lambda_{-i}r_{-i}-\lambda_i\vmax), & \mbox{o/w}.
\end{cases}
\eeq
We will prove that if $\sum_i\frac{\lambda_i}{1+\lambda_i}\leq 1$, $r^*$ maximizes $g(r)$. Then it will follow that $r^*$ maximizes $R(r,\lambda)$. 

Note that $g(r)$ is discontinuous at any point of the set $\mathcal{R}_0:=\{r\in[0,\vmax]^n:r_i=0\mbox{ for some }i\}$. We will show directly that $g(r^*)\geq g(r)$ for any other point $r\in[0,\vmax]^n$. To this end, we will construct a finite sequence of points starting with any point $r\in(0,\vmax]^n$ and ending with $r^*$ such that the value of $g$ weakly increases at every step. Denote $E_i(r):=r_i-\lambda_{-i}r_{-i}-\lambda_i\vmax$. We will proceed by considering two cases: (1) the starting point $r\notin \mathcal{R}_0$; (2) $r\in\mathcal{R}_0$. 

\textbf{Case 1.} The starting point $r\notin \mathcal{R}_0$. Note that if $r_i>r^*_i$ for some $i$, $g(r)$ is improved by changing $r_i$ to $r^*_i$. Indeed, $r_i>r^*_i$ is equivalent to $E_i(r)>-\lambda r$, so  $E_i$ is higher than the overall minimum in the first line in \eqref{g}. Decreasing $r_i$ to $r^*_i$ will weakly increase $-\lambda r$ and $E_j(r)$ for all $j\neq i$, thus weakly increasing the value of $g$. If the new point $\tilde{r}$ is in $\mathcal{R}_0$, proceed directly to \textbf{Case 2} below. If after coordinates $r_i$ such that $r_i>r_i^*$ were all lowered to $r_i^*$, we are still not in $\mathcal{R}_0$ proceed as follows.  

Consider any $r\in(0,\vmax]^n$ such that $r_i\leq r^*_i$ for all $i$. Note that $E_i(r)\leq -\lambda r$ for all $i$. Denote by $E_{(1)}(r)$ the lowest of $E_i$ and by $E_{(2)}(r)$ the second lowest. Assume that the numbers $E_i(r)$ are not all identical, i.e. $E_{(1)}(r)<E_{(2)}(r)$ (if they are not, skip right away to the next paragraph). Denote by $L_1(r)$ the set $\{i: E_i(r)=E_{(1)}(r)\}$ and by $L_2(r)$ the set $\{i: E_i(r)=E_{(2)}(r)\}$. Now change all $r_i$ for $i$ in $L_1(r)$ in such a way that 
the new point $\tilde{r}$ satisfies $L_1(\tilde{r})=L_1(r)\cup L_2(r)$, i.e. the values of $E_i$ for $i\in L_1(r)$  and $i\in L_2(r)$ are equalized. The key is to show that $g(r)$ won't decrease after this move. First, all prices $r_i$ for $i\in L_1(r)$ will increase after this move, as the inequality $E_{(1)}(r)< E_{(2)}(r)$ is equivalent to $(1+\lambda_i)r_i-\lambda_i\vmax< (1+\lambda_j)r_j-\lambda_j\vmax$ for all $i\in L_1(r)$ and $j\in L_2(r)$. Second, because $E_i=E_j$ for all $i,j\in L_1(r)$, $r_j=\frac{(1+\lambda_i)r_i+(\lambda_j-\lambda_i)\vmax}{1+\lambda_j}$  for all $i,j\in L_1(r)$ and thus
\beq\label{gexpr}
g(r)=E_{(1)}(r)=r_i(1+\lambda_i)\left(1-\sum_{j\in L_1(r)}\frac{\lambda_j}{1+\lambda_j}\right)+const,
\eeq
for all $i\in L_1(r)$, where $const$ does not depend on $r_i$ for $i\notin L_1(r)$. As $r_i$ will increase during the move and, by supposition, $\sum\frac{\lambda_i}{1+\lambda_i}\leq 1$, $g(r)$ will weakly increase. Also, note that the new prices $\tilde{r}$ are still weakly lower than $r^*$ because $E_i(r)\leq -\lambda r$ still holds for all $i$ after the move.

Now iterate this procedure until all $E_i(r)$ are equalized. At each step, the value of $g(r)$ will weakly increase, and the eventual price $\tilde{r}$ satisfies $\tilde{r}\leq r^*$. At the last step, increase $\tilde{r}$ to $r^*$. $g(r)$ will grow again due to the representation \eqref{gexpr}. This finishes \textbf{Case 1.}

\textbf{Case 2.} The starting point $r\in \mathcal{R}_0$. 
We will extend $g(r)$ to the set $\mathcal{R}_{ext}:=\{r\in(-\infty,\vmax]^n: r_i=0\mbox{ for some }i\}$ by the same formula as on $\mathcal{R}_0$ and show that for every $r\in \mathcal{R}_0$ there is a point $r^{**}\in \mathcal{R}_{ext}$ such that $g(r^{**})\geq g(r)$. After that we show that $g(r^*)\geq g(r^{**})$ for all such $r^{**}$. 

Given a point $r\in\mathcal{R}_0$, take any $k$ such that $r_k=0$. If $E_i(r)>E_k(r)$ one can lower $r_i$ down to the point at which  $E_i(r)=E_k(r)$, without harming $g$ (note that it is possible because $\mathcal{R}_{ext}$ is unbounded from below.) If, one the other hand, $E_i(r)\leq E_k(r)$ for all $i$, the same construction as in Case 1 shows that one can weakly increase prices in a certain fashion such that $g$ is again unharmed. The final point of this process will be a point $r^{**}(k)\in\mathcal{R}_{ext}$ such that $E_i(r^{**})= E_k(r^{**})$ for all $i$ and $g(r^{**}(k))\geq g(r)$. 

$r^{**}(k)$ is given by $r^{**}_i(k)=\frac{\lambda_i-\lambda_k}{1+\lambda_i}\vmax$ for $i$. It remains to prove that $g(r^{**}(k))\leq g(r^*)$ for all $k$. The inequality $g(r^{**}(k))\leq g(r^*)$ reads as 
\[-\sum_{j\neq k}\lambda_j\frac{\lambda_j-\lambda_k}{1+\lambda_j}\vmax-\lambda_k\vmax\leq -\sum_{i=1}^n\frac{\lambda_i^2}{1+\lambda_i}\vmax,\]
which is equivalent to 
\[\lambda_k\left(\sum_{i=1}^n\frac{\lambda_i}{1+\lambda_i}-1\right)\leq 0,\] 
which is certainly true due to the supposition of the lemma. 
\eop}

\pfof{Lemma \ref{geomconstraintexists}}{
We will prove that for all $\lambda$ such that $\sum_i\frac{\lambda_i}{1+\lambda_i}\geq 1$, $\max\limits_{r}R(r,\lambda)=m\lambda+\vmax\left(1-\sum_i\lambda_i\right)$. Then the result will follow because when the inequality is strict, one can lower some $\lambda_j$ to make it equality, and the value of revenue will strictly increase. If lowering one $\lambda_j$ is not enough (i.e., $\sum_i\frac{\lambda_i}{1+\lambda_i}> 1$
even when $\lambda_j=0$), one can start lowering another $\lambda_j$, etc. One will always reach the equality $\sum_i\frac{\lambda_i}{1+\lambda_i}=1$ because its LHS is zero when all $\lambda_i$ are equal to zero.

Indeed, by \eqref{mlambda}, we have that $R(r,\lambda)\leq  
m\lambda+\vmax\left(1-\sum_i\lambda_i\right)$ for all $r$. On the other hand, 
\[R(r^*(\lambda),\lambda)=m\lambda+\min\left\{\vmax\left(1-\sum_{i=1}^n\lambda_i\right),-\sum_{i=1}^n\frac{\lambda_i^2}{1+\lambda_i}\vmax\right\}.\]
If $\sum_i\frac{\lambda_i}{1+\lambda_i}\geq 1$, the first argument of the minimum is weakly lower, so $R(r^*(\lambda),\lambda)=m\lambda+\vmax\left(1-\sum_i\lambda_i\right)$. Hence, the upper bound is achieved, and thus $\max\limits_{r}R(r,\lambda)=m\lambda+\vmax\left(1-\sum_i\lambda_i\right)$.
\eop}

\pfof{Lemma \ref{optimallambdas}}{
First, solve the relaxed problem without the constrain \eqref{geomconstr}. The solution is $\lambda_i^*=\sqrt{\frac{\vmax}{\vmax-m_i}}-1.$ This solution is the solution to the unrelaxed problem whenever is satisfies the constraint \eqref{geomconstr}, which is exactly when $\sum\limits_{i=1}^n\sqrt{1-m_i/\vmax}\geq  n-1$. 

When $\sum\limits_{i=1}^n\sqrt{1-m_i/\vmax}< n-1$. Denoting by $\xi$ the Lagrange multiplier on \eqref{geomconstr} and by $\kappa_i$ on nonnegativity constraints, one can easily show that the multipliers 
\[\xi^*=\vmax-\frac{1}{(n-k^*)^2}\left(\sum\limits_{i=k^*}^n\sqrt{\vmax-m_i}\right)^2,\]
\[\kappa_i^*=\max\{\xi^*-m_i,0\},\]
together with the proposed solution \eqref{lambdahighmeans}, satisfy first-order and complementary slackness conditions. As there is no other solution to first-order and complementary slackness conditions, the objective function is continuous, and the feasible set is compact, we conclude that $\lambda^*$ is the solution to the problem \eqref{finalproblem}-\eqref{geomconstr}, as desired.
\eop}

\pfof{Proposition \ref{optimal prices}}{
Consider the case of low means first. In this case, all $\lambda_i^*$ are strictly positive and $\sum_i\frac{\lambda^*_i}{1+\lambda_i^*}<1$. Hence, at all steps in the proof of Lemma \ref{rstaroptimal}, the value of $g(r)$ increases strictly. Furthermore, $g(r^*(\lambda^*))=-\sum_i \frac{\lambda_i^{*2}}{1+\lambda^*_i}\vmax<\vmax\left(1-\sum_i\lambda^*_i\right)$. Hence, $R(r,\lambda^*)=g(r)$ at all steps in the proof of Lemma \ref{rstaroptimal}. Thus, $r^*(\lambda^*)$ is the unique maximizer of  $R(r,\lambda^*)$. Plugging $\lambda^*$ given by 
\eqref{lambdalowmeans} to $r^*_i=\frac{\lambda^*_i\vmax}{1+\lambda^*}$, one gets the desired answer \eqref{priceslowmeans}.

Now suppose means are high. First, note that if $\lambda_i^*=0$, then lowering $r_i$ to $r^*_i=0$ won't change the value of $g(r)$, hence, any price $r_i\in[0,\vmax]$ is optimal. Without loss of generality, now suppose that all bidders are strictly included. At all other steps in the proof of Lemma \ref{rstaroptimal} $g(r)$ will increase strictly except the last step where all $E_i(r)$ are already equalized and one is ready to equalize them with $-\lambda r$. Indeed, as $L_1(r)$ now includes all bidders, and $\sum_i\frac{\lambda^*_i}{1+\lambda_i^*}=1$, the expression in parentheses in \eqref{gexpr} is zero. Thus, a vector of prices $r$ is optimal if and only if  $E_i(r)=E_j(r)=g(r^*(\lambda^*))=\vmax\left(1-\sum_i \lambda^*_i\right)$ for all $i,j$ and $r_i\leq r^*_i(\lambda^*_i)$ for all $i$. Thus, $r$ should satisfy a non-homogeneous system of linear equations given by $A(\lambda^*)r=b$ where $b_i=\vmax\left(1-\sum\limits_{j\neq i}\lambda_j^*\right)$. As $\sum_i\frac{\lambda^*_i}{1+\lambda_i^*}=1$, by Lemma \ref{Aproperties}, part 1, $A(\lambda^*)$ is singular and its rank is $n-1$. Because $r^*(\lambda^*)$ is a solution to the non-homogeneous system and $\vmax\i_n-r^*(\lambda^*)$ is a solution to the homogeneous system, the set of solutions is given by $\{r^*(\lambda^*)-\alpha (\vmax\i_n-r^*(\lambda^*))|\alpha\in\mathbb{R}\}$. Thus, every vector of optimal prices $r$ satisfies $\vmax\i_n-\hat{r}=(1+\alpha)(\vmax\i_n-r^*(\lambda^*)$, thus 
\[\frac{\vmax-r_i}{\vmax-\hat{r}_j}=\frac{\vmax-r^*_i}{\vmax-r^*_j}=\frac{1+\lambda_j^*}{1+\lambda_i^*}=\sqrt{\frac{\vmax-m_i}{\vmax-m_j}},\] 
where we used the expressions for $\lambda^*_i$ from \eqref{lambdahighmeans}. Finally, summing the inequalities $r_i\leq r^*_i(\lambda^*)$, one gets
\[\sum_{i=1}^n r_i\leq \sum_{i=1}^n \frac{\lambda_i^*}{1+\lambda_i^*}\vmax=\vmax.\]
\eop}

\pfof{Proposition \ref{setmlow}}{First, take any mechanism $M=(p_1(v_2),p_2(v_1))$ that satisfies the conditions in the proposition. It is straightforward to check, using the representation \eqref{transformedproblem}, that $R(M, \lambda^*(LSA^{opt}))=R^*$ where $\lambda^*(LSA^{opt})$ are the optimal Lagrange multipliers for the optimal linear score auction, given by \eqref{lambdalowmeans}. As $R^*=R(M, \lambda^*(LSA^{opt}))\leq R(M, \lambda^*(M))\leq R^*$, it must be that $R(M, \lambda^*(M))= R^*$, so $M$ is optimal.

Now, take any optimal mechanism $M$. By lemma \ref{lambdasame}, $\lambda^*_i(M)>0$. Hence, when $M$ is submitted as a starting mechanism to the proof of theorem \ref{main}, one gets in \textbf{Grand case I}. Denote by $\tilde{p}(M)$ the modified threshold functions for $M$ as constructed in the proof of theorem \ref{main}, given by \eqref{tildepdef}. We will reconstruct $\tilde{p}(M)$ from optimality of $M$. First, note, that the LSA that weakly dominates $M$ should in fact be the optimal LSA. Because any dominating LSA is such that its vector of reserve prices $r$ is a fixed point of $\tilde{p}$, it must by that $r^*$, as defined by \eqref{priceslowmeans} should be a fixed point of $\tilde{p}(M)$. On the other hand, by lemma \ref{lambdasame}, $\tilde{p}_i$ must have slope of $\lambda^*_i$ given by \eqref{lambdalowmeans}. Thus,  $\tilde{p}_i$ should be given by the RHS in condition 1 in the proposition. As $\tilde{p}_i\leq p_i$, condition 1 in the proposition must be fulfilled.

To establish condition 2, note that because $M$ is optimal, revenue may not strictly increase at every step in the proof of theorem \ref{main}, and thus, $\inf\limits_{W_0(p)}(-\lambda^*(M)v)= \inf\limits_{W_0(\hat{p})}(-\lambda^*(M)v)$ where $p$ are the threshold functions corresponding to $M$ and $\hat{p}$ are the threshold functions for the optimal linear score auction. Thus, the set $W_0(p)$ must lie weakly below the line $\lambda^*_1v_1+\lambda^*_2v_2=\lambda^*_1r^*_1+\lambda_2^*r_2^*$. However, if condition 2 is violated, this is not true. Indeed, suppose 
$p_1(v_{2}^0)> \left(\lambda^*_1r^*_1+\lambda_2^*r_2^*-\lambda^*_{2}v_{2}^0\right)/\lambda^*_1$ for some $v_2^0\leq r^*_2$. 
Take any $v_1^0\in(\left(\lambda^*_1r^*_1+\lambda_2^*r_2^*-\lambda^*_{2}v_{2}^0\right)/\lambda^*_1),p_1(v_2^0)$. Then, $p_1(v_2^0)>v_1^0$ by construction and $p_2(v_1^0)>v_2^0$ because condition 1 is satisfied for the function $p_2(v_1)$. Hence, $(v_1^0,v_2^0)$ is a point in $W_0(p)$ that lies above $\lambda^*_1v_1+\lambda^*_2v_2=\lambda^*_1r^*_1+\lambda_2^*r_2^*$. Contradiction.

To establish condition 3, suppose to the contrary that there exist points $v_1'$ and $v_1''$ such that $r_1^*\leq v_1'<v_1''$ and $p_2(v_2'')<p_2(v_1')$. Consider any $v_2^0\in(p_2(v_1''),p_2(v_1')).$ First, note that for any such $v_2^0$ we must have $p_1(v_2^0)\geq v_1''$. If not, then the point $(v_1'',v_2^0)$ proves that the functions $p$ violate the supply constraint. Second, for any such $v_2^0$ we must have $p_1(v_2^0)\leq v_1'$. If not, the point $(v_1',v_2^0)$ belongs to the set $W_0(p)$ and lies above the line $\lambda^*_1v_1+\lambda^*_2v_2=\lambda^*_1r^*_1+\lambda_2^*r_2^*$, which is impossible by condition 2. Summing up, we have $p_1(v_2^0)\leq v_1'<v_1''\leq p_1(v_2^0)$ -- a contradiction. Thus, $p_2(v_1)$ has to be weakly increasing. Analogously for $p_1(v_2)$.

Finally, take any interval $(v_1',v_1'')$, $v_1'\geq r_1^*$ such that $p_2(v_1)$ is strictly increasing on it and any point $v_1^0 \in (v_1',v_1'')$. Suppose to the contrary that $p_1(p_2(v_1^0))\neq v_1^0$. Either $p_1(p_2(v_1^0))> v_1^0$ or 
$p_1(p_2(v_1^0))< v_1^0$. If $p_1(p_2(v_1^0))> v_1^0$, take any point $v_1^1\in(v_1^0,p_1(p_2(v_1^0)))$. Then, as $p_2(v_1)$ is weakly increasing overall and strictly increasing on $(v_1',v_1'')$, it must be that $p_2(v_1^1)>p_2(v_1^0)$. But then the point $(v_1^1,p_2(v_1^0))$ belongs to the set $W_0(p)$ and lies above the line $\lambda^*_1v_1+\lambda^*_2v_2=\lambda^*_1r^*_1+\lambda_2^*r_2^*$, which is impossible by condition 2. If, on the other hand, $p_1(p_2(v_1^0))< v_1^0$, take any point $v_1^1\in(p_1(p_2(v_1^0)),v_1^0)$. Again, it must be that $p_2(v_1^1)<p_2(v_1^0)$. But then the point $(v_1^1,p_2(v_1^0))$ proves that the functions $p$ violate the supply constraint. Contradiction. Analogously for $p_1(v_2)$.
\eop}

\pfof{Proposition \ref{setmhigh}}{The proof of sufficiency is the same as in the proof of proposition \ref{setmlow}. The proof of necessity is as follows.

Take any optimal mechanism $M$. By lemma \ref{lambdasame}, $\lambda^*_i(M)$ is given by \eqref{lambdahighmeans}. By the same logic as in the proof of proposition \ref{setmlow}, some $\hat{r}$ that satisfies the conditions \eqref{priceshighmeans} is a fixed point of $\tilde{p}(M)$. If $\hat{r}>0$, this, along with $\lambda^*_i$, pins down $\tilde{p}(M)$ to be the functions given in the condition 1 in the proposition (regardless of $\hat{r}$). If $\hat{r}_i=0$ for some $i$, we have $0=\max\{\lambda^*_{-i}\hat{r}_{-i}+b_i,0\}$. If $\lambda^*_{-i}\hat{r}_{-i}+b_i<0$, then $\inf\limits_{v_{-i}}(p_i(v_{-i})-\lambda^*_{-i}v_{-i})=b_i<-\lambda^*_{-i}\hat{r}_{-i}=\inf\limits_{v_{-i}}(\hat{p}_i(v_{-i})-\lambda^*_{-i}v_{-i})$ so $M$ is not optimal. Thus, we must have $\lambda^*_{-i}\hat{r}_{-i}+b_i=0$, so $\tilde{p}(M)$ are reconstructed unambiguously to be functions defined in the RHS of conditions 1 and 2 in the proposition. Thus, condition 2 holds. 

To establish condition 3, note that for any optimal LSA $\hat{p}$, revenue is equal to $\lambda^*m+\inf\limits_{W_0(\hat{p^*})}(-\lambda^*v)$ where $\hat{p^*}$ is the optimal LSA with the highest prices $r^*$. Thus, for $M$ to be optimal it must be that $\inf\limits_{W_0(p)}(-\lambda^*v)\geq \inf\limits_{W_0(\hat{p^*})}(-\lambda^*v)$.  Thus, the set $W_0(p)$ must lie weakly below the line $\lambda^*_1v_1+\lambda^*_2v_2=\lambda^*_1r^*_1+\lambda_2^*r_2^*$. Then, one obtains condition 3 similarly to proof of necessity of condition 2 in proposition \ref{setmlow}.

Finally, to establish condition 1, note that if $p_i(v_{-i})> r^*_i+\lambda^*_i(v_{-i}-r^*_{-i})$ for some $v_{-i}^0\geq r^*_{-i}$, then the point $(v_{-i}^0,v_i^0)$ where $v_i^0$ is any point in $(r^*_i+\lambda^*_i(v_{-i}-r^*_{-i}),p_i(v_{-i}))$ belongs to $W_0(p)$ lies above the line $\lambda^*_1v_1+\lambda^*_2v_2=\lambda^*_1r^*_1+\lambda_2^*r_2^*$, which is impossible by condition 3.
\eop}

\pfof{proposition \ref{wcdistr}}{
As noted in the main text, the support of any worst-case distribution must be contained in the set of minimizers of $t^{LSA}(v)-\lambda^*(LSA)v$. It remains to find  the optimal Lagrange multiplier $\lambda^*$ as a function of the generalized reserve prices $r$. $\lambda^*$ maximizes
\[R(r,\lambda)=m\lambda+\min\{v(1-\lambda_1-\lambda_2),r_1-r_2\lambda_2-\vmax\lambda_1,r_2-r_1\lambda_1-\vmax\lambda_2,-r\lambda\}.\]
It follows from a straightforward, but tedious analysis that 
\begin{itemize}
\item If $r_2<\overline{r}_2(r_1)$ and $r_2<\vmax-r_1$, $\lambda_1^*=\frac{\vmax-r_2}{\vmax-r_1}$, $\lambda^*_2=\frac{\vmax-r_1}{\vmax-r_2}$;
\item If $r_2>\overline{r}_2(r_1)$ but $r_2<\vmax-r_1$, $\lambda_i^*=\frac{r_i}{\vmax-r_i}$, $i=1,2$;
\item If $r_2>\vmax-r_1$, either $\lambda_1^*=\frac{r_1}{\vmax-r_1}$ and $\lambda_2^*=\frac{\vmax-r_1}{\vmax-r_2}$ or $\lambda_1^*=\frac{\vmax-r_2}{\vmax-r_1}$ and $\lambda_2^*=\frac{r_2}{\vmax-r_2}$.
\end{itemize}
From this, the statement of the proposition follows. 
\eop}

\pfof{theorem \ref{main2}}{
The proof is the same as for theorem \ref{main} with a few modifications. Lemma \ref{getridofO} does not hold, as the inequality \eqref{cangetridofo} does not necessarily hold for $i=2$, $j=1$. Define $\tilde{W}_2:=\{v: v_2\geq p_2(v_1)\mbox{ and }v_1\leq p_1(\vmax_2)\}$. We have $\inf\limits_{v\in W_2}(p_2(v_1)-\lambda v)=\inf\limits_{v\in \tilde{W}_2}(p_2(v_1)-\lambda v)$. For bidder 1, we still have $\inf\limits_{v\in W_1}(p_1(v_2)-\lambda v)=\inf\limits_{v\in W_1^{\geq}(p)}(p_1(v_2)-\lambda v)$. 

The consideration of \textbf{Grand case II} goes identically to theorem \ref{main}, so consider \textbf{Grand case I}.  We have 
$\inf\limits_{v\in \tilde{W}_2}(p_2(v_1)-\lambda^* v)=\inf\limits_{v_1\in[0,\tilde{v}_1]}(p_2(v_1)-\lambda^*_1v_1)-\lambda^*_2\vmax_2$, where $\tilde{v}_1:=p_1(\vmax_2)$. For bidder 1, nothing changes.
Thus, at the first step of transforming a given pair functions $(p_1(v_2),p_2(v_1))$ we transform $p_1(v_2)$, as before, to $\tilde{p}_1(v_2)$, given by \eqref{tildepdef}, and transform $p_2(v_1)$ to 
\beq\label{tildepdef2}
\tilde{p}_2(v_1):=
\begin{cases}
\max\{\lambda^*_{1}v_{1}+\inf\limits_{w\in[0,\tilde{v}_1]}(p_2(w)-\lambda^*_{1} w),0\}, & v_1\leq \tilde{v}_1;\\
\vmax_2, & v_1>\tilde{v}_1.
\end{cases}
\eeq

By the same logic as in the proof of proposition \ref{tildebetter}, $R(\tilde{p},\lambda^*)\geq R(p,\lambda^*)$. 

Note that for all $v_2\in[0,\vmax_2]$, $\tilde{p}_1(v_2)\leq \tilde{p}_1(\vmax_2)\leq p_1(\vmax_2)=\tilde{v}_1$, as $\tilde{p}_1$ is nondecreasing and $\tilde{p}_1(v_2)\leq \tilde{p}_1(v_2)$ for all $v_2$. Thus, one may consider $\tilde{p}$ as a continuous map from $[0,\tilde{v}_1]\times[0, \vmax_2]$ to itself. Thus, a fixed point exists. 
The rest of the proof goes analogously to the proof of theorem \ref{main}, except the consideration of \emph{Subcase 2} of \textbf{Case 3}. There, one shows directly that $\frac{\lambda_1^*}{1+\lambda^*_1}+ \frac{\lambda_2^*}{1+\lambda^*_2}>1$ implies that either $b_1<0$ or $b_2=0$. Then, the point defined by $w^*=(\max\{b_1,0\},\max\{b_2,0\})$ will be a desired fixed point with at least one zero coordinate.

Note that the construction will identify a dominating linear score auction such that the boundary between $W_1$ and $W_2$ goes through the point $(\tilde{v}_1,\vmax_2)$. 

\eop}

\pfof{proposition \ref{optparam}}{
}{Again, because for any optimal mechanism $\lambda^*\geq 0$, one can restrict attention only for such $\lambda$. Solving the inner problems as in the proof of \ref{innerproblsol}, one obtains that $R(p,\lambda)$ is equal to 
\beq
\begin{cases}
\lambda m+\min\left\{\tilde{v}_1-\lambda_2\vmax_2-\lambda_1\vmax_1,\vmax_2-\lambda_1\tilde{v}_1-\lambda_2\vmax_2,\min\limits_i(r_i-\lambda_{-i}r_{-i}-\lambda_i\vmax_i),-\lambda r \right\}, & r>0,\\
\lambda m+\min\left\{\tilde{v}_1-\lambda_2\vmax_2-\lambda_1\vmax_1,\vmax_2-\lambda_1\tilde{v}_1-\lambda_2\vmax_2,\min\limits_i(r_i-\lambda_{-i}r_{-i}-\lambda_i\vmax_i)\right\}, & \mbox{o/w}.
\end{cases}
\eeq

The problem is to maximize this function over $r\in[0,\vmax_1]\times[0,\vmax_2]$, $\lambda\in \mathbb{R}^2_+$, and $\tilde{v}_1\in[0,\vmax_1]$. Again, it is more convenient to optimize first over $r$ and $\tilde{v}$, and then over $\lambda$. Optimizing over $\tilde{v}_1$, one does weakly better equalizing 
$\tilde{v}_1-\lambda_2\vmax_2-\lambda_1\vmax_1$ and $\vmax_2-\lambda_1\tilde{v}_1-\lambda_2\vmax_2$, thus setting 
$\tilde{v}_1^*=\frac{\lambda_1\vmax_1+\vmax_2}{\lambda_1+1}.$
The common value of the two expressions will be 
\beq\label{commonvalue}
\frac{\lambda_1\vmax_1+\vmax_2}{\lambda_1+1}-\lambda_2\vmax_2-\lambda_1\vmax_1.
\eeq

Then, defining $r^*_i(\lambda)=\frac{\lambda_i}{1+\lambda_i}\vmax_i$, the proofs of lemmas \ref{rstaroptimal} and \ref{geomconstraintexists} go through with \eqref{commonvalue} replacing $\vmax\left( 1-\sum_i\lambda_i\right)$. (Also, one lowers the value of $\lambda_2$, not any $\lambda$ to obtain a strict improvement. 

Thus, we arrive at a problem similar to \eqref{finalproblem}-\eqref{geomconstr}:
\begin{align}
\max\limits_{\lambda\geq 0} &\mbox{ } \sum_{i=1}^2 m_i\lambda_i-\frac{\lambda_i^2}{1+\lambda_i}\vmax_i\label{finalproblem1}\\
\mbox{s.t. } &\sum_{i=1}^2\frac{\lambda_i}{1+\lambda_i}\leq 1  \label{geomconstr1}
\end{align}

Its solution again differs depending on whether the constraint \eqref{geomconstr1} binds. It does not bind when means are low ($\sqrt{1-m_1/\vmax_1}+\sqrt{1-m_2/\vmax_2}>1$) and binds otherwise. If it does not bind, we obtain 
\[\lambda_i^*=\sqrt{\frac{\vmax_i}{\vmax_i-m_i}}-1,\]
and thus, recovering optimal prices using same logic as in the proof of proposition \ref{optimal prices}, one gets \eqref{optpriceslowmeans}. 
To recover all optimal slopes in this case (aside from an optimal slope corresponding to $\tilde{v}_1^*=\frac{\lambda_1\vmax_1+\vmax_2}{\lambda_1+1}.$), one uses proposition \ref{setmlow} (its logic is unchanged when upper bounds are different). Restricting condition 1 of proposition \ref{setmlow} to linear score auctions, one gets the condition \eqref{optgammameanslow}.

If the constraint \eqref{geomconstr1} binds, one obtains from first-order conditions that the equation \eqref{gamma} must hold for the optimal $\lambda^*_1$. But note that when the constraint binds, $\lambda*_1\lambda_2^*=1$, and thus, the optimal slope $\gamma^*$ is unique (as in Figure \ref{fig:sethighmeans}). Hence, $\gamma^*=\lambda_1^*$, so $\gamma^*$ satisfies \eqref{gamma}.To obtain the set of optimal prices, one again uses the same proof as in proposition \ref{optimal prices}.   
\eop}

\section{Worst-case distributions}\label{wcdistrsect}
The proof of theorem \ref{main} didn't consider worst-case distributions thanks to the representation afforded by linear programming duality. However, knowing them may also be useful to know how Nature reacts to a given mechanism and thus gain a deeper understanding of the game between the seller and Nature. In this section, we discuss worst-case distributions for corner-hitting linear score auctions (not necessarily optimal ones). 

The worst-case distributions can be deduced from a complementary slackness result in \cite{smith1995generalized}.  Namely, it follows that for any mechanism $M$ any distribution solving the inner problem \eqref{problem} also solves the \emph{unconstrained} problem
$$\inf_{F\in\Delta} \left\{t^{M}(v)-\lambda^*(M) v\right\},$$
where $\lambda^*(M)$ is the solution to the dual problem \eqref{biconj} for the mechanism $M$ and $\Delta$ is the set of all Borel distributions on $[0,\vmax]^n$. That is, if a worst-case distribution exists, its support must be contained in set of minimizers of the function $t^{M}(v)-\lambda^*(M)v$ over $[0,\vmax]^n$.

Consider $n=2$ and any corner-hitting linear score auction with parameters $r=(r_1,r_2)$. Before we proceed, we change the rules a little by stating that when $v_i\leq r_i$ for $i=1,2$, the object is unsold. This does not change the infimum over distributions but ensures that a worst-case distribution always exists. Then, when $\lambda^*_i>0$ for $i=1,2$ (it will happen whenever $r_i<m_i$) it follows immediately from the structure of the function $t^{LSA}(v)$ that the support of any worst-case distribution is contained in the set
$$S^*:=\{(r_1,r_2)\}\cup\{v:v_i\geq r_i,\max\{v_1,v_2\}=\vmax\}.$$
Depending on the means $m$ and prices $r$, different subsets of the set $S^*$ may be selected as the support of a worst-case distribution. It turns out that there may be three types of worst-case distributions when $r_i<m_i$, described in the following table: 

\begin{center}
\begin{tabular}{ |c|c| } 
 \hline
 Type & Support \\
 \hline
 I & Any subset of $\{v:v_i\geq r_i,\max\{v_1,v_2\}=\vmax\}$ \\ 
 II & $\{(r_1,r_2),(r_1,\vmax),(\vmax,r_2)\}$   \\ 
 III & $\{(r_1,r_2),(r_1,\vmax)\mbox{ or }(\vmax,r_2),(\vmax,\vmax)\}$   \\
 \hline
\end{tabular}
\end{center}

Figure \ref{fig:wc} shows how the support of the worst-case distribution depends on prices $r$ for fixed means $m$. 
Interestingly, if prices are relatively low, Nature chooses a distribution such that sale happens with probability 1 and does not try to ``undercut'' the seller. Only when prices rise beyond a certain boundary, Nature starts to induce no trade with positive probability in the worst case. When prices rise even further, Nature switches from inducing negative correlation between values (so that the lowest bid is always very low) to inducing positive correlation (so that two bidders are effectively replaced by one).

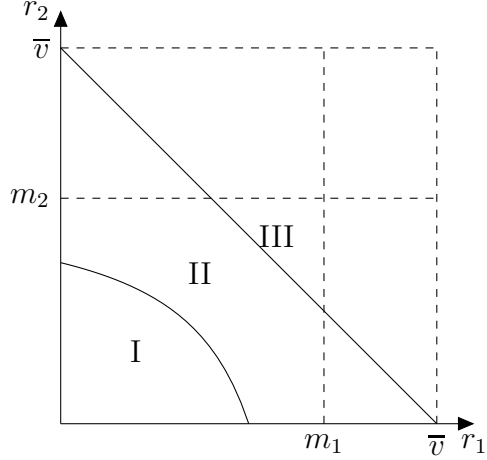
\begin{figure}[h!]
\centering
\begin{tikzpicture}[>=triangle 45, xscale=5,yscale=5]
\draw[->] (0,0) -- (1.1,0) node[below] {$r_1$};
\draw[->] (0,0) -- (0,1.1) node[left] {$r_2$};
\draw[dashed] (1,0)--(1,1);
\draw[dashed] (0,1)--(1,1);
\node[below] at (1,0){$\vmax$};
\node[left] at (0,1){$\vmax$};
\draw (0,1)--(1,0);
\draw[domain=0:0.5]
plot (\x, {(0.6*(1-\x)-0.3)/(0.7-\x)});
\node[below] at (0.2,0.25){I};
\node[left] at (0.43,0.4){II};
\node[left] at (0.65,0.5){III};
\draw[dashed] (0.7,0)--(0.7,1);
\draw[dashed] (0,0.6)--(1,0.6);
\node[below] at (0.7,0){$m_1$};
\node[left] at (0,0.6){$m_2$};
\end{tikzpicture}
\caption{Different types of distributions are worst-case for different prices $r$ }\label{fig:wc}
\end{figure}

Define $\overline{r}_2(r_1):=\frac{m_2(\vmax-r_1)-\vmax(\vmax-m_1)}{m_1-r_1}$. 
 
\begin{proposition}\label{wcdistr}
Suppose $n=2$ and and consider a corner-hitting linear score auction with parameters $r_1<m_1$ and $r_2<m_2$. The rules are modified so that when $v_i\leq r_i$ for $i=1,2$, the object is unsold. Then, the set of worst-case distributions is a subset $\Delta_{WC}$ of $\Delta(m,\vmax)$ such that for every $F\in \Delta(m,\vmax)$: 
\begin{enumerate}
\item If $r_2<\overline{r}_2(r_1)$ and $r_2<\vmax-r_1$, $F\in \Delta_{WC}$ if and only if $\supp (F)\subseteq \{v:v_i\geq r_i,\max\{v_1,v_2\}=\vmax\}$;
\item If $r_2>\overline{r}_2(r_1)$ but $r_2<\vmax-r_1$, $F\in \Delta_{WC}$ if and only if $\supp (F)=\{(r_1,r_2),(r_1,\vmax),(\vmax,r_2)\}$;
\item If $r_2>\vmax-r_1$, $F\in \Delta_{WC}$ if and only if $\supp (F)\subseteq \{(r_1,r_2)\}\cup\{v:v_1=\vmax,v_2\geq r_2\}$ or $\supp (F)\subseteq \{(r_1,r_2)\}\cup\{v:v_2=\vmax,v_1\geq r_1\}$.
\end{enumerate} 
\end{proposition}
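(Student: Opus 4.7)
The plan is to use the complementary slackness characterization of worst-case distributions flagged immediately before the statement. Because the seller's mean constraints $\mathbb{E}_F(v_i) = m_i$ bind, the result of \cite{smith1995generalized} implies that any worst-case $F^*$ is supported on the set of minimizers of the Lagrangian $L(v) := t^{LSA}(v) - \lambda^*(LSA)\,v$ over $[0,\vmax]^2$, where $\lambda^*$ is any optimal dual variable; conversely, any distribution with means $m$ whose support is contained in that set is worst-case. So the proof reduces to two tasks: first compute $\lambda^*$ as a function of $r$, and then describe the minimizers of $L$ in each regime.

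For the first task, applying Lemma \ref{innerproblsol} (with the convention making the corner $\{v_i \leq r_i\}$ unsold, which preserves the $-\lambda r$ branch) yields
\[R(r,\lambda) = m\lambda + \min\{\vmax(1 - \lambda_1 - \lambda_2),\ r_1 - \lambda_2 r_2 - \lambda_1 \vmax,\ r_2 - \lambda_1 r_1 - \lambda_2 \vmax,\ -\lambda r\}.\]
Since this is concave and piecewise affine in $\lambda$, its maximum over $\lambda\geq 0$ occurs where at least two of the four arguments of the $\min$ coincide. Splitting on whether $r_1 + r_2$ is below or above $\vmax$ and on the position of $r_2$ relative to a threshold (which materializes as $\overline{r}_2(r_1)$ upon equating two of the arguments), straightforward casework recovers the three regimes: in Case 1 the two ``transfer'' arguments $r_i - \lambda_{-i} r_{-i} - \lambda_i\vmax$ bind, in Case 2 they bind jointly with $-\lambda r$, and in Case 3 $-\lambda r$ binds together with exactly one of the two transfer arguments, giving the two symmetric corner solutions.

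For the second task, on each cell of the allocation partition $L(v)$ is affine in $v$: in the winning region of bidder $i$ the transfer equals $r_i + (\vmax - r_i)(v_{-i} - r_{-i})/(\vmax - r_{-i})$ when $v_{-i} \geq r_{-i}$, and $r_i$ otherwise. Substituting the $\lambda^*$ from each regime, certain coefficients vanish on the ``winning arms'' $\{(\vmax, v_2) : v_2 \geq r_2\}$ and $\{(v_1, \vmax) : v_1 \geq r_1\}$, so that $L$ is constant and minimal on that arm or isolated vertex. Reading this off per case yields the claimed minimizer sets: the whole L-shape in Case 1, the three vertices in Case 2, and a single arm together with $(r_1, r_2)$ (one of two symmetric possibilities) in Case 3. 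In each case the hypothesis $r_i < m_i$ ensures that $m$ lies in the convex hull of the minimizer set, so a distribution with support therein and means $m$ exists; every such distribution is worst-case.

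The main obstacle I expect is the casework in the first step: correctly finding $\lambda^*$ throughout the parameter space and identifying the threshold curve $\overline{r}_2(r_1)$ as the locus where the active set of constraints in $R(r,\lambda)$ switches from ``two transfer arguments'' to ``one transfer argument plus $-\lambda r$''. Once $\lambda^*$ is pinned down, reading off the minimizers of the explicit piecewise-affine $L$ and the ensuing moment-matching on a three-point support are routine.
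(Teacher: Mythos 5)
Your proposal follows the paper's proof essentially verbatim: invoke the complementary-slackness characterization from \cite{smith1995generalized}, compute the optimal dual variable $\lambda^*$ as a function of $r$ by casework on which arguments of the minimum in $R(r,\lambda)$ are active, and then read off the minimizer set of the piecewise-affine Lagrangian $t^{LSA}(v)-\lambda^* v$ in each regime (the paper records exactly the three resulting $\lambda^*$ formulas and declares the rest "straightforward but tedious"). One small imprecision: the threshold $\overline{r}_2(r_1)$ depends on $m$ and arises from comparing the values of $R(r,\cdot)$ at the two candidate active sets (i.e., from the optimality conditions in $\lambda$), not from "equating two of the arguments" of the min, which only determines the candidate $\lambda$'s; apart from that the argument is sound, and your explicit handling of the converse direction (existence of a feasible $F$ with means $m$ supported on the minimizer set) is a point the paper leaves implicit.
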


Note that $F$ is pinned down from the means constraints and the support if the support contains three points.

\section{Mechanisms with negative $\lambda_i$}\label{appC}
\textbf{Example 1.}
Suppose there are two bidders with $v_i\in[0,1]$. The first bidder never gets the object, but the second bidder gets one iff 
\[v_2>p_1(v_1)=r+k\cdot v_1\]
where $r>0$, $k>0$, $r+k<1$. That is, the price that the second bidder pays upon getting the object depends positively on first bidder's report. 

If $m_2\in (r+km_1,1-(1-r-k)m_1)$, the worst-case distribution may be shown to be a ternary distribution with support $\{(0,r),(0,1),(1,r+k)\}$. Bidder 2 does not get the object when $v=(0,r)$ or $v=(1,r+k)$. So Nature is undercutting the seller by putting mass at these value profiles. 
The worst-case expected revenue if $m_2\in (r+km_1,1-(1-r-k)m_1)$ is \[-\frac{kr}{1-r}m_1+r\frac{m_2-r}{1-r},\] thus
$\lambda^*_1=-\frac{kr}{1-r}<0$. 
The seller would lose (and Nature win) from a higher mean of bidder's 1 value. 

Intuitively, with a higher $m_1$, Nature can put more mass on the point $(1,r+k)$ but because $r+k>r$, this would mean undercutting the seller with higher (on average) values of $v_2$. With a fixed $m_2$, this means that the total mass of no-sale value profiles $\{(0,r),(1,r+k)\}$ may also be increased -- and this harms the seller.

\textbf{Example 2.}
Suppose there are two bidders with different upper bounds and means such that $\vmax_1>m_1>\vmax_2$. 

Consider a corner-hitting LSA with $r_1\in (\vmax_2,m_1)$, $r_2<\vmax_2$. Suppose also that the parameters satisfy 
\[m_2<\vmax_2-\frac{\vmax_2}{\vmax_1-r_1}(\vmax_1-m_1).\]
It this case, the worst-case distributions is a three-point distribution on $(r_1,0)$, $(\vmax_1,0)$ and $(\vmax_1,\vmax_2)$ (if the tie breaking is such that the second bidder gets the object if $v=(\vmax_1,\vmax_2)$, otherwise this point can be approximated). But then, the higher $m_2$ is, the higher probability Nature can put on the point $(\vmax_1,\vmax_2)$, which harms the seller because her revenue is $\vmax_2$ if $v=(\vmax_1,\vmax_2)$ and it is $r_1>\vmax_2$ if $v=(\vmax_1,0)$. Indeed, the expected revenue is equal to 
$\left(1-\frac{r_1}{\vmax_2}\right)m_2+\frac{m_1-r_1}{\vmax_1-r_1}r_1$,
so $\lambda_2^*=1-\frac{r_1}{\vmax_2}<0$.

\end{document}